\documentclass[11pt]{article}
\usepackage{complexity}

\usepackage[english]{babel}
\usepackage[utf8x]{inputenc}
\usepackage[T1]{fontenc}
\usepackage{enumitem}
\usepackage[margin=1.in,marginparwidth=1.75cm]{geometry}

\usepackage[algo2e, ruled, vlined]{algorithm2e}
\usepackage{amsmath}
\usepackage{amsfonts}
\usepackage{graphicx}
\usepackage{amsthm}
\usepackage{amssymb}
\usepackage{dsfont}
\usepackage[algo2e, ruled, vlined]{algorithm2e}
\usepackage[noend]{algorithmic} 
\usepackage{bbm}

\usepackage[T1]{fontenc}
\usepackage{microtype}          %
\usepackage[full]{textcomp}

\usepackage{lipsum}
\usepackage{xspace}
\usepackage{relsize}
\usepackage{framed}
\usepackage{xcolor}
\usepackage{graphicx}
\usepackage{multirow}
\usepackage{subcaption}
\usepackage{float}

\usepackage[colorinlistoftodos]{todonotes}
\usepackage[colorlinks=true, allcolors=blue]{hyperref}

\newtheorem{theorem}{Theorem}

\newtheorem{lemma}{Lemma}

\newtheorem{remark}{Remark}

\theoremstyle{definition}
\newtheorem{definition}{Definition}

\newcommand{\calN}{{\mathcal{N}}}

\newcommand{\Zg}{\mathbb{Z}_{\geq 0}}
\newcommand{\Rg}{\mathbb{R}_{\geq 0}}
\newcommand{\Comp}{\#\text{Comps}}
\newcommand{\cN}{\mathcal{N}}

\newcommand{\mcp}{\mathcal{P}}
\newcommand{\pc}{\text{PC}}
\newcommand{\wpc}{\text{WPC}}

\newcommand{\knote}[1]{{\bf{\color{blue}[\tiny Karthik: #1]}}}
\newcommand{\cnote}[1]{{\bf{\color{magenta}[\tiny Chandra: #1]}}}

\usepackage{float}
\usepackage{booktabs}
\usepackage{multirow}
\usepackage{thmtools} 
\usepackage{thm-restate}
\usepackage{lipsum}

\newcommand{\Z}{\mathbb{Z}}

\newif\ifdraft
\draftfalse

\interfootnotelinepenalty=10000
\title{Hedgegraph Polymatroids\thanks{Univ. of Illinois, Urbana-Champaign, Urbana, IL 61801. Email: {\tt
      \{karthe, chekuri, weihaoz3\}@illinois.edu}. Supported in part by NSF grant CCF-2402667.}}
\author{Karthekeyan Chandrasekaran
\and Chandra Chekuri
\and Weihang Wang\thanks{Based on work done as a PhD student at University of Illinois, Urbana-Champaign.}
\and Weihao Zhu 
}
\date{}
\begin{document}

\maketitle

\pagenumbering{gobble}
\begin{abstract}

Graphs and hypergraphs combine expressive modeling power with algorithmic efficiency for a wide range of applications. 
Hedgegraphs generalize hypergraphs further by grouping hyperedges under a color/hedge. 
This allows hedgegraphs to model dependencies between hyperedges and leads to several applications. 
However, it poses algorithmic challenges. 
In particular, the cut function is not submodular, which has been a barrier to algorithms for connectivity. 
In this work, we introduce two alternative partition-based measures of connectivity in hedgegraphs and study their structural and algorithmic aspects. Instead of the cut function, we investigate a polymatroid associated with hedgegraphs. The polymatroidal lens leads to new tractability results as well as insightful generalizations of classical results on graphs and hypergraphs. 
\end{abstract}


\pagenumbering{arabic}

\section{Introduction}\label{section:introduction}
Graphs and hypergraphs are fundamental objects in discrete and combinatorial optimization. They provide a powerful way to model many problems that arise in applications. A hypergraph $G=(V, E)$ is specified by a vertex set $V$ and a collection $E$ of hyperedges where each hyperedge $e\in E$ is a subset of vertices. If every hyperedge has size at most two, then the hypergraph is a graph.
Hedgegraphs generalize hypergraphs further --- a hedge is a collection of hyperedges in an underlying hypergraph. There are various scenarios where a collection of hyperedges (or edges in a graph) could be interdependent. For example, consider a supply-chain network where a certain collection of hyperedges could depend on the same resource and could all fail together if the resource fails; such dependencies are conveniently modeled by introducing a hedge/color that contains these hyperedges. 
Coudert, Datta, Perennes, Rivano, and Voge \cite{CDPRV07} introduced graphs with colored edges to model dependency between a collection of edges. The term hedgegraph for such colored graphs was coined by Ghaffari, Karger, and Panigrahi \cite{GKP17} who were motivated by applications in graph reliability where multiple edges can fail in a coordinated manner. 
While the richer modeling capability of hedgegraphs naturally lends itself to applications, it also poses algorithmic and structural challenges---see \cite{CDPRV07, ZCTZ11, BGGTU15, CPRV16, GKP17, CXY19, AVB20, FPZ23, JLMPS23, PST24, FGKLS25}. In fact, as we will discuss later, some of the fundamental tractable problems in graphs and hypergraphs become intractable in hedgegraphs.
We investigate structural and algorithmic aspects of connectivity in hedgegraphs by taking a polymatroidal viewpoint that leads to new tractability results as well as insightful connections to classical results on graphs \cite{Tut61, NW61, NW64} and more recent results on hypergraphs \cite{FKK03,FKK03-ori}.

We formally define hedgegraphs now. A hedgegraph $G=(V,E)$ consists of a finite vertex set $V$ and a finite set $E$ of \emph{hedges}. 
A \emph{hyperedge} over $V$ is a subset of vertices. Each \emph{hedge} $e \in E$ is a collection (equivalently a set) of distinct hyperedges; by distinct we mean that a hyperedge $h$ in a hedge $e$ does not belong to another hedge $e' \in E$ where $e'\neq e$. 
If every hedge has exactly one hyperedge, then the hedgegraph is simply a hypergraph.  Throughout this work, we will assume that each hedge is a collection of vertex-disjoint hyperedges: if we have a hedge $e$ with hyperedges $h_1, h_2\in e$ where $h_1\cap h_2\neq \emptyset$, then we remove $h_1$ and $h_2$ from $e$ and add the hyperedge $h_1\cup h_2$ to $e$; we will later see that such a replacement does not affect cut and partition capacities which are the quantities of interest to this work. This assumption about each hedge being a vertex-disjoint collection of hyperedges is crucial for a concise description of our structural results. Moreover, it abstracts the core aspects of the hedgegraph definition and allows us to easily relate hedgegraphs to hypergraphs\footnote{Hedgegraphs in \cite{GKP17} were defined via colored edges of a graph. Our definition is equivalent to their definition for the purposes of the problems considered in this work.}.
The representation size of a hedgegraph is $O(p)$, where $p:=\sum_{e\in E}\sum_{h\in e}|h|$. 
See Figure \ref{figure:hedgegraph} for an example.
\begin{figure}[h]
    \centering
    \includegraphics[width=1.0\linewidth]{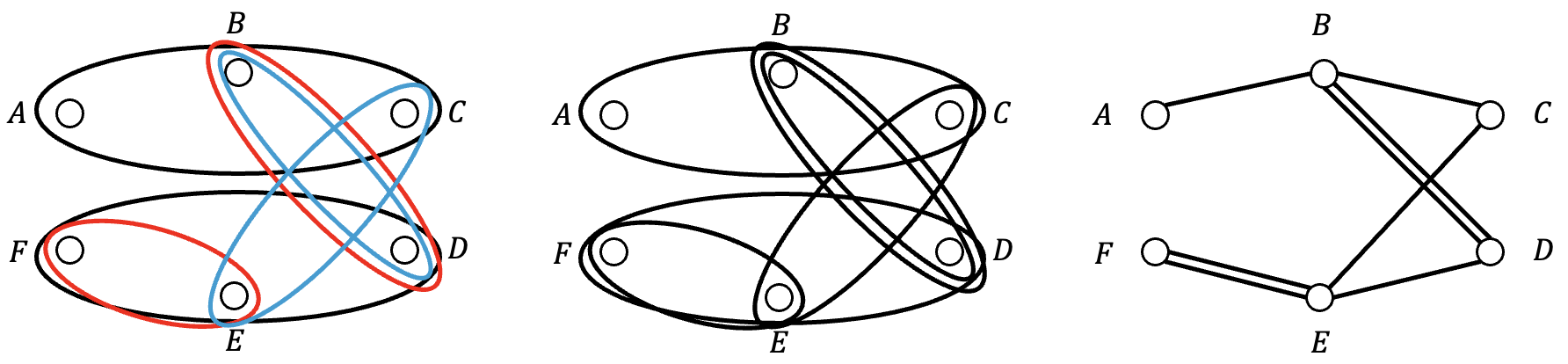}
    \caption{Leftmost figure shows a hedgegraph with $V = \{A,B,C,D,E,F\}$ and three hedges, where the black hedge $e_1:=\{\{A,B,C\}, \{D,E,F\}\}$, the red hedge $e_2:=\{\{B,D\},\{E,F\}\}$, and the blue hedge $e_3:=\{\{C,E\}, \{B,D\}\}$. It can also be viewed as a coloring of hyperedges of the hypergraph shown in the middle. Rightmost figure shows an example of a graph.
    }
    \label{figure:hedgegraph}
\end{figure}

We now define cuts and connectivity in hedgegraphs. Let $G=(V, E)$ be a hedgegraph. 
For a subset $S\subseteq V$ of vertices, we define $\delta_G(S)$ to be the set of hedges $e\in E$ that cross $S$; a hedge $e$ crosses $S$ if there is a hyperedge $h\in e$ that intersects both $S$ and $V\setminus S$. Formally,
\[
\delta_G(S):=\left\{e\in E: \exists\ h\in e \text{ with } h\cap S\neq \emptyset, h\cap (V\setminus S)\neq \emptyset\right\}. 
\]
We define the cut function $d_G: 2^V\rightarrow \Zg$ of the hedgegraph $G$ as $d_G(S):=|\delta_G(S)|$ for every $S\subseteq V$. The \emph{connectivity} of $G$ is defined as $\lambda_G:=\min\{d_G(S): \emptyset\neq S\subsetneq V\}$. The hedgegraph $G$ is \emph{connected} if $\lambda_G>0$. 
We drop the subscript $G$ from $\delta_G, d_G, \lambda_G$ when the hedgegraph $G$ is clear from context. It is possible to verify whether a given hedgegraph is connected in polynomial time. 

In contrast to graph and hypergraph cut functions, the cut function of a hedgegraph is not necessarily submodular --- this was pointed out in \cite{GKP17} via the example in Figure~\ref{figure:hedge-cut-function}. The lack of submodularity creates barriers for structure and algorithms. In particular, $\{s,t\}$-connectivity\footnote{$\{s,t\}$-connectivity in a hedgegraph is defined as $\min\{d(S): s\in S\subseteq V-t\}$ , i.e., it is the minimum number of hedges whose deletion disconnects $s$ and $t$.} and connectivity are intractable unders plausible complexity theoretic assumptions \cite{ZF16, ZCTZ11, FGK10, JLMPS23}. Despite the intractability, positive algorithmic results are known for computing connectivity---in particular, a randomized polynomial-time approximation scheme, a randomized quasi-polynomial time algorithm \cite{GKP17, FPZ23}, a randomized polynomial-time algorithm if every hedge has constant number of hyperedges \cite{CXY19}, and a fixed-parameter algorithm when parameterized by the solution size \cite{FGKLS25} are known. All these algorithms are randomized and it remains open to design deterministic counterparts. Inspired by these algorithmic results, we investigate structural aspects of connectivity in hedgegraphs and some of their algorithmic consequences. We believe that our structural results will provide valuable insights for future work on hedgegraphs. 

\begin{figure}[h]
    \centering
    \includegraphics[width=0.3\linewidth]{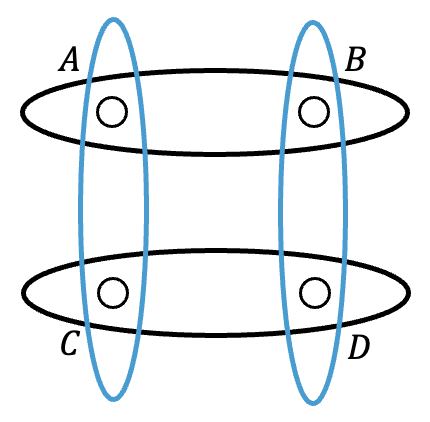}
    \caption{An example of a hedgegraph $G=(V, E)$ whose cut function is not submodular. Here, $V=\{A,B,C,D\}$ 
    and $E:=\{e_1, e_2\}$ where $e_1:=\{\{A,B\},\{C,D\}\}$ and $e_2=\{\{A,C\},\{B,D\}\}$. We note that $d(\{A,B\})=d(\{A,C\})=1$ and $d(\{A\})=d(\{A,B,C\})=2$ which implies that $d(\{A,B\})+d(\{A,C\})<d(\{A\})+g(\{A,B,C\})$.}
    \label{figure:hedge-cut-function}
\end{figure}

\paragraph{Partition-based Measures of Connectedness.} 
Given that connectivity in hedgegraphs is intractable \cite{JLMPS23}, are there other measures of connectedness that are relevant and interesting? We investigate two alternative measures of connectedness of a hedgegraph based on partitions. 
For ease of understanding, we define and discuss these measures for hypergraphs here and postpone the formal definition for hedgegraphs to Section \ref{sec:results}. Let $G=(V, E)$ be a hypergraph.
For a partition $\mcp$ of $V$, we define $\delta_G(\mcp)$ to be the set of hyperedges that intersect more than one part of the partition $\mcp$, i.e., 
\[
\delta_G(\mcp):=\{h\in E: \exists\ A, B\in \mcp\text{ with }h\cap A\neq \emptyset \text{ and } h\cap B\neq \emptyset\}.
\]
We denote \emph{partition connectivity} and \emph{weak partition connectivity} of $G$ by $\pc_G$ and $\wpc_G$ respectively, where 
\begin{align}
    \pc_G&:=\min_{\mathcal{P} \text{ is a partition of } V}\left \lfloor\frac{|\delta(\mathcal{P})|}{|\mathcal{P}|-1}\right \rfloor\text{ and}\label{eq:pc-hypergraphs}\\
    \wpc_G&:=\min_{\mathcal{P} \text{ is a partition of } V}\left\lfloor\frac{\sum_{e\in E}\left(\#\text{ of parts of }\mcp\text{ intersecting }e - 1\right)}{|\mathcal{P}|-1}\right\rfloor. \label{eq:wpc-hypergraphs}
\end{align}
These two measures coincide in graphs while they differ in hypergraphs. Both these measures are also related to connectivity: 
for a hypergraph $G$, we have that $\pc_G\le \wpc_G$ and $\lambda_G/2 \le \wpc_G\le \lambda_G$, i.e., weak partition connectivity is within a $2$ factor of connectivity \cite{BCKK24}. 
It is helpful to keep three examples in mind while understanding partition connectivity: consider the hypergraph $G_1=(V, E_1)$ on vertex set $V$ where $E_1:=\{V\}$, the hypergraph $G_2=(V, E_2)$ where $E_2$ consists of $|V|-1$ parallel hyperedges $e_1 = e_2 = \ldots= e_{|V|-1}=V$, and the cycle graph $C_3$ on $3$ vertices: we observe that $\pc_{G_1}=0$, $\wpc_{G_1}=1$, $\lambda_{G_1}=1$, $\pc_{G_2}=1$, $\wpc_{G_2}=|V|-1$, $\lambda_{G_2}=|V|-1$, $\pc_{C_3}=\wpc_{C_3}=1$, and $\lambda_{C_3}=2$. 
Tutte \cite{Tut61} and Nash-Williams \cite{NW61} showed that partition connectivity of a graph is the maximum number of edge-disjoint spanning trees in the graph. This result has far-reaching implications in graph decomposition and orientation \cite{Frank-book, West-book, Jae75, Jae79}. Frank, Kir\'{a}ly, and Kriesell \cite{FKK03} generalized Tutte and Nash-Williams' result on partition connectivity of graphs to hypergraphs by showing a hypergraph decomposition result. Frank, Kir\'{a}ly, and Kir\'{a}ly 
\cite{FKK03-ori} generalized Tutte and Nash-Williams' result on weak partition connectivity of graphs to hypergraphs by showing an orientation result. We observe that the connection between weak partition connectivity of hypergraphs and hypergraph orientations has recently found applications in coding theory \cite{GLSTW24, AGL24}. In this work, we generalize and explore these two partition based connectedness measures in the context of hedgegraphs.

\paragraph{Hedgegraph Polymatroid.}
Instead of the cut function, which lacks submodularity, we investigate an alternative function associated with a hedgegraph. We motivate this first. Several structural properties of graphs are derived via 
a matroidal view, in particular through the graphic matroid induced by the edges of the graph. In particular, Tutte and Nash-Williams' result on partition connectivity can be viewed as a special case of Edmonds' matroid base packing theorem applied to the graphic matroid. We recall that for a graph $G=(V, E)$, the matroid rank function $r: 2^E\rightarrow \Z_{\ge 0}$ is defined by $r(A):=|V|-\Comp(V,A)$ for every $A\subseteq E$, where $\Comp(V,A)$ is the number of connected components in the subgraph $(V, A)$. 
We may define a similar function for hypergraphs---for a hypergraph $H=(V, E)$, consider the function $f_H: 2^E\rightarrow \Z_{\ge 0}$ defined as $f_H(A):=|V|-\Comp(V, A)$ for every $A\subseteq E$, where $\Comp(V,A)$ is the number of connected components in the sub-hypergraph $(V, A)$. 
The function $f_H$ is not the rank function of a matroid. Nevertheless, the function $f_H$ is monotone and submodular, and hence, is a polymatroid. 
One can associate a matroid with every polymatroid over the same ground set (via Edmonds' construction of the rank function of a matroid from a polymatroid \cite{Edm70}) and investigate the structural properties of the resulting matroid to understand structural properties of the polymatroid. This viewpoint is implicit in \cite{FKK03} who study the hypergraphic matroid of Lorea \cite{Lorea78} which is equivalent to the matroid derived from the function $f_H$ mentioned above via Edmonds' construction of the rank function. In this work, we 
make this viewpoint explicit and use the polymatroid perspective 
to investigate partition connectivity, weak partition connectivity, and other structural aspects of hedgegraphs. 

We now define the relevant polymatroid that underlies our investigation of hedgegraphs. 
Let $G=(V, E)$ be a hedgegraph. 
The number of connected components in a hedgegraph $G$ is the number of connected components in the hypergraph (possibly multi-hypergraph) $(V, H_E)$, where $H_E:=\{h: h\in e\text{ for some }e\in E\}$ of hyperedges.
For a subset $A\subseteq E$  of hedges, we define $(V, A)$ to be the sub-hedgegraph over the vertex set $V$ that contains all hedges in $A$ 
and $\Comp(V,A)$ to denote the number of connected components in the sub-hedgegraph $(V, A)$. We define the function $f_G: 2^E\rightarrow \Zg$ as 
    \[
    f_G(A) := |V| - \Comp(V,A)\ \forall\ A\subseteq E. 
    \]
We observe that the function $f_G$ is monotone, i.e., $f_G(A)\le f_G(B)$ for every $A\subseteq B\subseteq E$ and submodular, i.e., $f_G(A) + f_G(B) \ge f_G(A\cap B) + f_G(A\cup B)$---see Appendix~\ref{appendix:hedgegraph-polymatroid} for a formal proof. Hence, we denote $f_G$ as the \emph{hedgegraph polymatroid} associated with the hedgegraph $G$. 
We emphasize that the cut function is a set function defined over subsets of vertices while the hedgegraph polymatroid is a set function defined over subsets of hedges. If the hedgegraph is a graph, then the hedgegraph polymatroid is simply the rank function of the graphic matroid associated with the graph. The hedgegraph polymatroid is the protagonist of our work that helps generalize several structural and algorithmic aspects of graphs and hypergraphs to hedgegraphs. In order to understand the two partition based measures of connectedness of hedgegraphs, we rely on two notions of strengths of a polymatroid. These two notions of strengths of a polymatroid have found interesting applications recently \cite{CCZ25, quanrud2024quotient}.

\subsection{Results}\label{sec:results}
We define the two partition based measures of connectedness for hedgegraphs and describe our results below. Subsequently, we describe our results for hedgegraph cut sparsification. 

\paragraph{Partition Connectivity.}
We define the \emph{partition connectivity} of a hedgegraph $G=(V, E)$ as
$$\pc_G:=\min_{\mathcal{P} \text{ is a partition of } V}\left \lfloor\frac{|\delta(\mathcal{P})|}{|\mathcal{P}|-1}\right \rfloor,$$
where $\delta_G(\mathcal{P}):=\{e\in E: \exists\ h\in e\text{ with } h \text{ intersecting at least two parts of }\mathcal{P}\}$. A hedgegraph $G$ is \emph{$k$-partition-connected} if $\pc_G\ge k$. 
We show that partition connectivity of a given hedgegraph can be computed in polynomial time.
\begin{restatable}{theorem}{theorempartitionconnectivity}\label{theorem:partition-connectivity}
    There exists a deterministic polynomial-time algorithm to compute the partition connectivity of a given hedgegraph. 
\end{restatable}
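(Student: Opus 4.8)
The plan is to reduce the computation of $\pc_G$ to matroid base packing (equivalently, to a polymatroid strength computation) via the hedgegraph polymatroid $f_G$ introduced above. The key observation is that for a partition $\mathcal P$ of $V$, the quantity $|\delta_G(\mathcal P)|$ and the rank contribution of the hedges not in $\delta_G(\mathcal P)$ are complementary: a hedge $e\notin\delta_G(\mathcal P)$ has every hyperedge contained inside a single part of $\mathcal P$, so the sub-hedgegraph $(V,E\setminus\delta_G(\mathcal P))$ refines $\mathcal P$ and hence has at least $|\mathcal P|$ connected components; consequently $f_G(E\setminus\delta_G(\mathcal P))\le |V|-|\mathcal P|$. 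Conversely, contracting each part of $\mathcal P$ leaves exactly the hedges of $\delta_G(\mathcal P)$, which shows that $|\delta_G(\mathcal P)|$ is the corank (number of hedges minus rank) of the contracted polymatroid. This is exactly the structure that makes partition connectivity an instance of the covering/packing number of the matroid $M$ that Edmonds' construction associates with the polymatroid $f_G$.

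Concretely, I would first recall Edmonds' construction: from the integer polymatroid $f_G:2^E\to\Zg$ one obtains a matroid $M_G$ on a ground set $\widetilde E$ (with $|h|$ parallel copies or, more simply, via the truncation-free construction $r(A)=\min_{B\subseteq A}\big(|A\setminus B|+f_G(B)\big)$ directly on $E$) whose rank is $|V|-\Comp(V,E)$. The first step is the structural lemma: for every hedgegraph $G$,
\[
\pc_G \;=\; \max\{k : M_G \text{ contains } k \text{ disjoint bases}\},
\]
which I would prove by the standard matroid-union / Nash-Williams covering argument. One direction uses the partition-inequality above ($k$ disjoint bases force $k\,(|V|-\Comp)\le\ldots$ against every $\delta_G(\mathcal P)$, dividing by $|\mathcal P|-1$ gives $k\le\pc_G$ once one checks the floor). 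The other direction is the min-max theorem for packing matroid bases (Nash-Williams / Edmonds): the maximum number of disjoint bases equals $\min_{A\subsetneq \widetilde E \text{ spanning}}\big\lfloor (|A|+ \text{corank terms})/(\text{rank deficiency})\big\rfloor$, and the minimizing spanning sets can be taken of the form $\widetilde E\setminus\delta_G(\mathcal P)$, recovering exactly the expression defining $\pc_G$. The second step is purely algorithmic: the hedgegraph polymatroid $f_G$ has a polynomial-time evaluation oracle (computing connected components of a sub-hedgegraph is easy), so matroid union / base packing over $M_G$ runs in polynomial time by the classical matroid-union algorithm of Edmonds, and its output value is $\pc_G$.

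The main obstacle I anticipate is the exact bookkeeping in the min-max equivalence — in particular, making sure the floor in the definition of $\pc_G$ matches the integrality in the base-packing theorem, and handling the case $|\mathcal P|=1$ (the trivial partition, which is excluded) and the case $\Comp(V,E)>1$ (disconnected $G$, where $\pc_G$ should be $0$ witnessed by the partition into connected components). A secondary subtlety is verifying that restricting attention to spanning sets of the special form $\widetilde E\setminus\delta_G(\mathcal P)$ loses nothing: given an arbitrary spanning set $A$, one must argue that the partition $\mathcal P$ into connected components of $(V,A)$ yields $\delta_G(\mathcal P)\subseteq \widetilde E\setminus A$ with no worse ratio, which uses monotonicity of $f_G$. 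Once these are nailed down, the theorem follows by invoking the polynomial-time matroid-union algorithm; I would also remark that the same argument gives, as a byproduct, the optimal decomposition-type certificate (the minimizing partition), paralleling the Frank–Kir\'aly–Kriesell generalization of Tutte–Nash-Williams.
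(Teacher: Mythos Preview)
Your approach via matroid base packing is close to one the paper itself develops (in Section~\ref{section:hedgegraph-decompose}, for Theorems~\ref{theorem:hedgegraph-decomposition} and~\ref{theorem:trimming}), but it is not the paper's proof of Theorem~\ref{theorem:partition-connectivity}: the paper instead shows directly that $\pc_G=\lfloor\kappa_{\vec{\mathbf 1}}(f_G)\rfloor$ (Lemma~\ref{lemma:partition-connectivity-equals-strength}) and then computes $\kappa_{\vec{\mathbf 1}}(f_G)$ by parametric submodular minimization (Lemma~\ref{lemma:strength-in-polynomial-time}). That route never constructs a matroid and avoids the base-packing min--max entirely.

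More importantly, your structural lemma as stated is false. The rank of the matroid $M_G$ obtained from $f_G$ via $r(A)=\min_{B\subseteq A}\bigl(f_G(B)+|A\setminus B|\bigr)$ is \emph{not} $|V|-\Comp(V,E)=f_G(E)$ in general; equality holds if and only if $G$ is $1$-partition connected (this is precisely the content of Theorem~\ref{theorem:trimming}). Take $V=\{a,b,c,d\}$ and a single hedge $e$ consisting of the one hyperedge $\{a,b,c,d\}$: then $G$ is connected, $f_G(E)=3$, but $r(E)=\min(0+1,\,3+0)=1$, so $M_G$ has a unique base $\{e\}$ and hence one ``disjoint base'', while the partition into singletons gives $|\delta(\mathcal P)|/(|\mathcal P|-1)=1/3$ and $\pc_G=0$. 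Your ``one direction'' (disjoint bases $\Rightarrow$ partition inequality) breaks exactly here, because a base of size $r(E)<f_G(E)$ cannot be trimmed to a spanning tree. The fix is routine---first test whether $r(E)=f_G(E)$ (a single submodular minimization), output $\pc_G=0$ if not, and otherwise run base packing as in the paper's proof of Theorem~\ref{theorem:hedgegraph-decomposition}---but this case must be handled, and it is distinct from the disconnected case you flagged.
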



Tutte \cite{Tut61} and Nash-Williams \cite{NW61} showed that partition connectivity in graphs measures decomposition into spanning trees: a graph is $k$-partition connected if and only if it has $k$ edge-disjoint spanning trees. 
There is an alternative way to phrase this result that enables a generalization to hypergraphs. For this, we note that a graph is $1$-partition connected if and only if it contains a spanning tree. Hence, Tutte and Nash-Williams' result is equivalent to saying that a graph is $k$-partition connected if and only if it contains $k$ edge-disjoint $1$-partition connected subgraphs. Frank,  Kir\'{a}ly, and Kriesell \cite{FKK03} showed that this statement also holds for hypergraphs: a hypergraph is $k$-partition connected if and only if it contains $k$ hyperedge-disjoint $1$-partition connected sub-hypergraphs. We show that this statement also holds for hedgegraphs under our generalized definition of partition connectivity in hedgegraphs.


\begin{restatable}{theorem}{theoremdecomposition}\label{theorem:hedgegraph-decomposition}
     A hedgegraph is $k$-partition connected if and only if it contains $k$ hedge-disjoint $1$-partition connected sub-hedgegraphs. 
    
\end{restatable}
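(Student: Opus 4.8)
The plan is to derive Theorem~\ref{theorem:hedgegraph-decomposition} from a matroid base-packing statement applied to the hedgegraph polymatroid $f_G$. The key translation step is to observe that partition connectivity of a hedgegraph has a clean polymatroidal reformulation. Recall that a partition $\mathcal{P} = \{A_1, \dots, A_t\}$ of $V$ with $t$ parts has $|V| - t = \sum_i (|A_i|-1)$, and a hedge $e$ lies \emph{outside} $\delta_G(\mathcal{P})$ exactly when every hyperedge $h \in e$ stays within a single part, i.e.\ when the sub-hedgegraph $(V,\{e\})$ refines $\mathcal{P}$ (each component of $(V,\{e\})$ sits inside one part). Thus for $A \subseteq E$, the quantity $f_G(A) = |V| - \Comp(V,A)$ counts, for the finest partition $\mathcal{P}_A$ into components of $(V,A)$, the value $|V| - |\mathcal{P}_A|$; and more generally for any partition $\mathcal{P}$, the set $E \setminus \delta_G(\mathcal{P})$ consists of exactly those hedges $e$ with $f_G(\text{(hedges refining }\mathcal{P}))$ behaving additively. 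The upshot, which I would state as a lemma, is: $G$ is $k$-partition connected if and only if for every partition $\mathcal{P}$ of $V$ into $t \ge 2$ parts, $|\delta_G(\mathcal{P})| \ge k(t-1)$, and this is equivalent (via $|\delta_G(\mathcal{P})| = |E| - |E \setminus \delta_G(\mathcal{P})|$ together with the fact that $E\setminus\delta_G(\mathcal{P})$ is exactly the set of hedges $A$ with $f_G(A) \le |V|-t$ maximal for that $\mathcal{P}$) to the covering condition $|E| - f_G(E\setminus\delta_G(\mathcal{P})) - (\text{something}) \ge \dots$. Cleaning this up, I expect $1$-partition connectivity to be equivalent to $f_G(E) = |V|-1$ (the hedgegraph is ``spanning'' in the polymatroid sense), and $k$-partition connectivity to be equivalent to $|E| - |A| \ge k\cdot\frac{|V|-1-f_G(A)}{\,?\,}$ — more precisely, to the statement that $E$ cannot be covered by fewer than $k$ ``$f_G$-spanning'' sets, which is exactly the condition for $f_G$ to pack $k$ bases.

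With that reformulation in hand, the second step is to invoke the right base-packing theorem for polymatroids. Let $M_{f_G}$ be the matroid obtained from $f_G$ via Edmonds' construction (mentioned in the excerpt), whose rank is $f_G(E) = |V| - \Comp(V,E)$ and whose bases correspond to minimal hedge-subsets $B \subseteq E$ with $f_G(B) = f_G(E)$; when $G$ is connected these are the ``$1$-partition-connected spanning sub-hedgegraphs.'' By the matroid union / Nash-Williams–Edmonds base-packing theorem, $M_{f_G}$ contains $k$ disjoint bases if and only if $k \cdot r(M_{f_G}/A) \le |E \setminus A|$ for all $A$, equivalently $k(f_G(E) - f_G(A)) \le |E| - |A|$ for all $A \subseteq E$. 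So the heart of the proof is showing: \emph{this last inequality for all $A$ is equivalent to $k$-partition connectivity of $G$}. The direction ``$k$-partition connected $\Rightarrow$ inequality holds'' should follow by taking, for a given $A$, the partition $\mathcal{P}$ into connected components of $(V,A)$: then $A \subseteq E \setminus \delta_G(\mathcal{P})$, the number of parts is $t = \Comp(V,A) = |V| - f_G(A)$, so $|\delta_G(\mathcal{P})| \ge k(t-1) = k(|V|-f_G(A)-1)$, and since $|\delta_G(\mathcal{P})| \le |E| - |A|$ (every hedge in $A$ is non-crossing for $\mathcal{P}$) we get the bound, with a small adjustment of $\pm k$ that the floor in the definition of $\pc_G$ absorbs. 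For the converse, given a partition $\mathcal{P}$ with $t$ parts, take $A := E \setminus \delta_G(\mathcal{P})$; then $(V,A)$ has at least $t$ components so $f_G(A) \le |V| - t$, and $|E \setminus A| = |\delta_G(\mathcal{P})|$, so the inequality yields $|\delta_G(\mathcal{P})| \ge k(f_G(E) - f_G(A)) \ge k((|V|-1) - (|V|-t)) = k(t-1)$ provided $G$ is connected; a short separate argument handles the disconnected case (both sides degenerate). Finally, the $k$ disjoint bases of $M_{f_G}$ are precisely $k$ hedge-disjoint $1$-partition-connected sub-hedgegraphs (after noting a base of $M_{f_G}$ spans all of $V$ iff $G$ is connected, and adjoining leftover hedges does no harm), which completes both directions.

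A few routine checks will need care: (i) the exact relationship between a base of $M_{f_G}$ and a ``$1$-partition-connected sub-hedgegraph'' — I would prove $1$-partition connectivity $\iff$ $f_G(E) = |V|-1 \iff G$ is connected, so that the bases of $M_{f_G}$, restricted to the connected case, are minimal connected spanning sub-hedgegraphs; (ii) bookkeeping with the floor function $\lfloor |\delta(\mathcal{P})|/(|\mathcal{P}|-1)\rfloor$, which is why the clean statement is $|\delta_G(\mathcal{P})| \ge k(|\mathcal{P}|-1)$ for all $\mathcal{P}$; (iii) confirming that the Edmonds matroid of $f_G$ has ground set $E$ and rank function $r(A) = f_G(A)$ on subsets (true because $f_G$ is integer-valued, monotone, submodular with $f_G(\{e\}) \le |V|-1$, but each singleton hedge can have $f_G(\{e\}) > 1$, so $M_{f_G}$ genuinely differs from ``take one copy of each hedge'' — I should make sure the base-packing theorem I cite is the polymatroid version, e.g.\ via the fact that $f_G$ is itself a matroid rank function here since $f_G(\{e\}) = (\text{number of non-trivial components of }(V,\{e\}))$ need not be $\le 1$; so I will state it as matroid union applied to $M_{f_G}$ directly). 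The main obstacle I anticipate is precisely step (iii) combined with the fuzziness around the floor: making the equivalence ``$k$-partition connected $\iff$ $k(f_G(E)-f_G(A)) \le |E|-|A|$ for all $A$'' exactly correct, including the role of $\Comp(V,E)$ when $G$ is disconnected, and then verifying that disjoint bases of $M_{f_G}$ really do give $1$-partition-connected (not merely $f_G$-spanning) sub-hedgegraphs. Everything else is a direct specialization of the classical Tutte–Nash-Williams / Edmonds argument that already works for graphs and hypergraphs.
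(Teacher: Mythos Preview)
Your high-level strategy---reduce to Edmonds' base packing on the matroid $M_{f_G}$ derived from the polymatroid $f_G$---is the same as the paper's, but your ``routine check (i)'' contains a genuine error that breaks the argument: $1$-partition connectivity is \emph{not} equivalent to $f_G(E) = |V|-1$ (i.e., to connectedness). A single hyperedge $e = V$ on $|V| \ge 3$ vertices is connected yet has $\pc = 0$ (the singleton partition has $|\delta(\mathcal{P})| = 1 < |\mathcal{P}|-1$). Your consequent claim that bases of $M_{f_G}$ are ``minimal connected spanning sub-hedgegraphs'' is then also false: on $V = \{A,B,C,D\}$ with three hedges, each a perfect matching, any two hedges form a minimal connected spanning sub-hedgegraph of size $2$ that is \emph{not} $1$-partition connected, while the unique base of $M_{f_G}$ is all of $E$, of size $3$.

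Two facts are missing, and both concern the Edmonds rank $r(A) = \min_{B\subseteq A}\{f_G(B) + |A\setminus B|\}$ rather than $f_G$ itself. First, $r(E) = |V|-1$ is equivalent to $1$-partition connectivity, not to mere connectedness: $r(E) = f_G(E)$ holds iff $|E\setminus B| \ge f_G(E) - f_G(B)$ for all $B$, which by your own subset-to-partition translation is exactly the $k=1$ case. Second, once $r(E) = |V|-1$, every base $B$ is a $1$-partition connected sub-hedgegraph: independence gives $|C| \le f_G(C)$ for every $C \subseteq B$, so for any partition $\mathcal{P}$ the set $C := B \setminus \delta_{(V,B)}(\mathcal{P})$ satisfies $|C| \le f_G(C) \le |V| - |\mathcal{P}|$, whence $|\delta_{(V,B)}(\mathcal{P})| = |B| - |C| \ge |\mathcal{P}|-1$. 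The paper establishes both facts via the trimming characterization (Lemma~\ref{lemma:hedgegraph-matroid-property}, Lemma~\ref{lemma:hedgegraph-matroid-rank-function}, and Theorem~\ref{theorem:trimming}), identifying independent sets of $M_{f_G}$ as forest-trimmable hedge sets and proving that $1$-partition connectedness is equivalent to containing a spanning-tree-trimmable subset. With either route in hand your reduction to base packing goes through exactly as the paper's does; without it, the final identification of bases with $1$-partition-connected sub-hedgegraphs is unjustified.
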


As stated earlier, a graph is $1$-partition connected if and only if it contains a spanning tree. \cite{FKK03} characterized $1$-partition connected hypergraphs via a trimming result. A \emph{trimming of a hyperedge} $h$ is obtained by picking a pair of vertices $u, v\in h$ and replacing the hyperedge $h$ by the edge $\{u, v\}$. A \emph{trimming of a hypergraph} $G=(V, H)$ is obtained by trimming each hyperedge $h\in H$. \cite{FKK03} showed that a hypergraph $G$ is $1$-partition connected if and only if it contains a sub-hypergraph that can be trimmed to a spanning tree. We generalize this result to hedgegraphs. 
A \emph{trimming of a hedge} is obtained by picking a hyperedge $h\in e$, a pair of vertices $u, v\in h$, and replacing all hyperedges in $e$ by the single edge $\{u, v\}$. A \emph{trimming of a hedgegraph} $G=(V, E)$ is obtained by trimming each hedge $e\in E$. We emphasize that trimming does not allow complete deletion of hedges---in particular, every trimming of a hedgegraph with $m$ hedges results in a graph with exactly $m$ edges (possibly multigraph). 

\begin{restatable}{theorem}{theoremtrimming}\label{theorem:trimming}
    A hedgegraph is $1$-partition connected if and only if it contains a sub-hedgegraph that can be trimmed to a spanning tree. 
\end{restatable}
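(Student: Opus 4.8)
The plan is to prove both directions by reducing to the analogous statement for hypergraphs (Theorem of \cite{FKK03}) combined with the definition of partition connectivity of hedgegraphs. First I would observe that there is a natural ``collapse'' operation that turns a hedgegraph into a hypergraph: given a hedgegraph $G=(V,E)$, replace each hedge $e=\{h_1,\dots,h_t\}$ by the single hyperedge $\bar e := h_1\cup\cdots\cup h_t$ (recall we assumed the $h_i$ are vertex-disjoint). Call the resulting hypergraph $\bar G=(V,\{\bar e : e\in E\})$. The key easy fact is that a hedge $e$ crosses a partition $\mcp$ (in the hedgegraph sense) if and only if the hyperedge $\bar e$ intersects at least two parts of $\mcp$: indeed if some $h_i\in e$ meets two parts then certainly $\bar e$ does, and conversely if $\bar e$ meets two parts $A,B$, then since the $h_i$ partition (the used portion of) $\bar e$, either a single $h_i$ meets both $A$ and $B$, or two distinct $h_i, h_j$ meet $A$ and $B$ respectively; in the latter case, because $\bar e$ is connected through the hyperedges within the hedge's connected component... wait, the $h_i$ need not be connected to each other. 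Let me reconsider: the hedge $e$ is a \emph{collection} of vertex-disjoint hyperedges, so $\bar e$ as a vertex set need not be ``connected'' in any sense internal to $e$. So the correct statement is: $e\in\delta_G(\mcp)$ iff some $h_i\in e$ meets two parts of $\mcp$, which is \emph{not} the same as $\bar e$ meeting two parts. So the naive collapse does not work, and I would instead keep the hyperedges of the hedge separate but ``group'' them. This is exactly the subtlety that makes hedgegraphs interesting, and it is where I expect the main work to lie.

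Here is the fix I would pursue. To a hedgegraph $G=(V,E)$ associate the hypergraph $H(G)$ on vertex set $V$ whose hyperedge multiset is $\bigcup_{e\in E}\{h : h\in e\}$ (all hyperedges of all hedges), but crucially remember the \emph{grouping} of hyperedges into hedges. Then $\pc_G$ is obtained by the same minimization as for hypergraphs except that in the numerator we count \emph{hedges} crossed, not hyperedges crossed. I would first prove the easy direction of Theorem~\ref{theorem:trimming}: if $G$ contains a sub-hedgegraph $G'=(V,E')$ that trims to a spanning tree $T$, then $G$ is $1$-partition connected. For this, take any partition $\mcp$ with at least two parts; the spanning tree $T$ has at least $|\mcp|-1$ edges crossing $\mcp$ (standard), and each trimmed edge $\{u,v\}$ crossing $\mcp$ certifies that its originating hedge $e$ crosses $\mcp$ (since $u,v$ lie in a common hyperedge $h\in e$ and $u,v$ are in different parts). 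Distinct trimmed edges come from distinct hedges (trimming replaces each hedge by one edge), so $|\delta_G(\mcp)|\ge |\delta_{G'}(\mcp)| \ge |\delta_T(\mcp)| \ge |\mcp|-1$, giving $\pc_G\ge 1$.

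For the hard direction --- $1$-partition connected implies the existence of a trimmable sub-hedgegraph --- I would mimic the hypergraph argument of \cite{FKK03}, working with the hedgegraph polymatroid $f_G$. The natural route is: $G$ is $1$-partition connected iff $f_G(E)=|V|-1$ and $f_G$ restricted appropriately has no ``deficiency'' (formally, iff $\sum_{e}(\Comp(V,\{e\}\text{-type contribution}))$... ) --- more precisely I expect a characterization saying $\pc_G\ge 1$ iff every partition $\mcp$ satisfies $|\delta_G(\mcp)|\ge |\mcp|-1$, which by a standard uncrossing/matroid-union argument is equivalent to the existence, for each hedge $e\in E'$, of a choice of one edge $\{u_e,v_e\}$ with $u_e,v_e$ in a common hyperedge of $e$, such that the chosen edges form a connected spanning subgraph, which we can then prune to a spanning tree. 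I would set this up as a matroid-union / Edmonds base-packing statement on the matroid obtained from $f_G$ via Edmonds' construction, exactly parallel to how \cite{FKK03} handles the hypergraphic matroid of Lorea. The main obstacle is verifying that the ``transversal-like'' choice of one edge per hedge interacts correctly with the connectivity/covering requirement --- i.e., proving that the polymatroid rank condition $f_G(E')\ge |V|-1$ together with the partition condition forces a system of representative edges spanning $V$; I would handle this by a direct inductive contraction argument (contract a connected piece, recurse), using Theorem~\ref{theorem:hedgegraph-decomposition} with $k=1$ to guarantee that after contractions the residual hedgegraph remains $1$-partition connected, so that the greedy selection of representative edges never gets stuck.
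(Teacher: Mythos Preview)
Your easy direction is fine and matches the paper's backward implication in spirit.

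The hard direction, however, has a genuine gap. Your fallback plan---an inductive contraction argument ``using Theorem~\ref{theorem:hedgegraph-decomposition} with $k=1$''---does not work for two reasons. First, Theorem~\ref{theorem:hedgegraph-decomposition} with $k=1$ is a tautology (a hedgegraph is $1$-partition connected iff it contains one $1$-partition connected sub-hedgegraph, namely itself), so it cannot certify that a contracted residual stays $1$-partition connected. Second, in the paper's logical order Theorem~\ref{theorem:hedgegraph-decomposition} is proved \emph{using} Theorem~\ref{theorem:trimming} (the step $r(E)=|V|-1$ in its proof invokes Theorem~\ref{theorem:trimming}), so appealing to it here would be circular. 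More substantively, for a greedy contraction scheme you would need to exhibit, in every $1$-partition connected hedgegraph, a hedge $e$, a hyperedge $h\in e$, and a pair $u,v\in h$ such that deleting $e$ and identifying $u,v$ preserves $1$-partition connectivity; you have not argued why such a choice exists, and this is precisely the nontrivial content.

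Your ``transversal-like'' intuition is exactly right, but the paper makes it precise in a way you have not. It introduces the ground set $U=\bigcup_{e\in E}\{(e,\{u,v\}):u,v\in h\text{ for some }h\in e\}$ of all candidate trimmed edges, and intersects the graphic matroid on $U$ (acyclicity of the chosen edges) with the partition matroid on $U$ (at most one choice per hedge). Edmonds' matroid intersection theorem then gives a min--max formula, which after simplification yields the rank formula $r(A)=\min_{B\subseteq A}\{f_G(B)+|A\setminus B|\}$ for the hedgegraph matroid whose independent sets are the forest-trimmable hedge subsets. With this formula in hand, Theorem~\ref{theorem:trimming} becomes a two-line computation: $r(E)=|V|-1$ iff $f_G(B)+|E\setminus B|\ge |V|-1$ for all $B\subseteq E$, and taking $\mathcal{P}$ to be the component partition of $(V,B)$ shows this is exactly the condition $|\delta(\mathcal{P})|\ge |\mathcal{P}|-1$ for all $\mathcal{P}$. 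The missing ingredient in your proposal is this matroid-intersection derivation of the rank formula; without it, ``Edmonds' construction from $f_G$'' gives you a matroid but no handle on what its bases look like.
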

We will see a proof of Theorem \ref{theorem:partition-connectivity} (i.e., an efficient algorithm to compute partition connectivity of a given hedgegraph) by reduction to submodular minimization. 
We mention that our techniques underlying the proof of Theorems \ref{theorem:hedgegraph-decomposition} and \ref{theorem:trimming} give an alternative algorithm to compute partition connectivity via matroid intersection. 

Our techniques also help generalize the notion of arboricity from graphs and hypergraphs to hedgegraphs. Let $k\in \Z_+$. 
Nash-Williams \cite{NW64} showed that the edge set $E$ of a graph can be partitioned into $k$ forests if and only if $|E[S]|\le k(|S|-1)$ for every $S\subseteq V$. Frank, Kir\'{a}ly, and Kriesell \cite{FKK03} generalized this statement to hypergraphs through the notion of acyclic-trimmability: a set of hyperedges has an \emph{acyclic-trimming} if it has a trimming that is acyclic. They showed that the hyperedge set $E$ of a hypergraph can be partitioned into $k$ acyclic-trimmable subsets if and only if $|E[S]|\le k(|S|-1)$ for every $S\subseteq V$, where $E[S]$ is the subset of hyperedges that are fully contained in $S$. 
We generalize this statement to hedgegraphs next. For a partition $\mathcal{P}$ of the vertex set $V$ of a hedgegraph $G=(V, E)$, we define $E[\mathcal{P}]:=E\setminus \delta(\mathcal{P})$. A set of hedges has an \emph{acyclic-trimming} if it has a trimming that is acyclic. See Figure~\ref{figure:trimming} for an example.

\begin{figure}[h]
    \centering
    \includegraphics[width=0.7\linewidth]{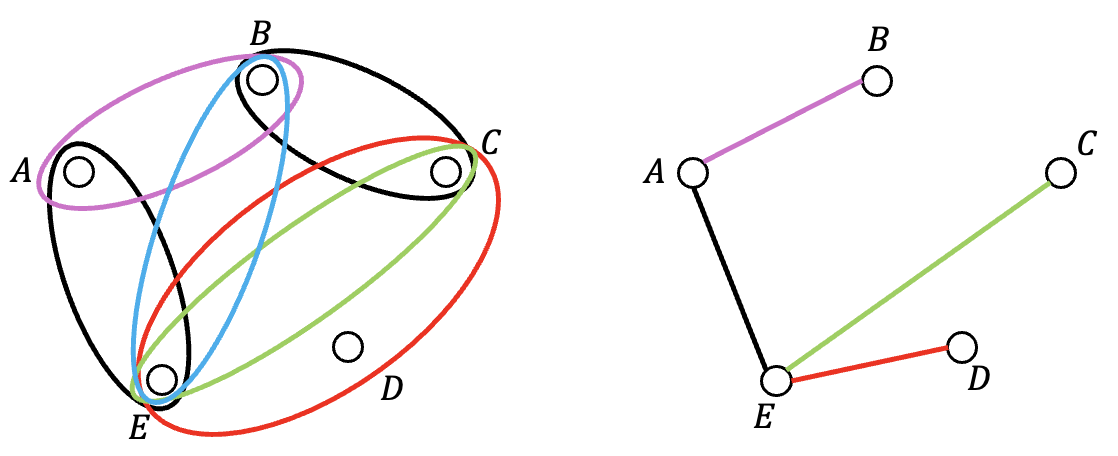}
    \caption{An example of an acyclic-trimming. The left side shows a hedgegraph $G=(V,E)$ with five hedges, where $e_1:=\{\{A,E\},\{B,C\}\}$, $e_2:=\{\{A,B\}\}$, $e_3:=\{\{C,E\}\}$, $e_4:=\{\{C,D,E\}\}$, and $e_5:=\{\{B,E\}\}$. The right side shows an acyclic-trimming of $\{e_1, e_2, e_3, e_4\}$, where $e_1$ is trimmed into $\{A,E\}$, $e_2$ is trimmed into $\{A,B\}$, $e_3$ is trimmed into $\{C,E\}$, and $e_4$ is trimmed into $\{D,E\}$. We may verify that $\{e_1, e_2, e_3, e_5\}$ does not have an acyclic-trimming.}
    \label{figure:trimming}
\end{figure}

\begin{restatable}{theorem}{theoremcovering}\label{theorem:base-covering}
Let $G=(V, E)$ be a hedgegraph and $k\in \Z_{+}$. Then, $E$ can be partitioned into $k$ acyclic-trimmable subsets if and only if $|E[\mathcal{P}]|\le k(|V|-|\mathcal{P}|)$ for every partition $\mathcal{P}$ of $V$. 
\end{restatable}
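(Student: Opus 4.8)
The plan is to reduce the statement to a matroid union / covering statement for the matroid $M_G$ obtained from the hedgegraph polymatroid $f_G$ via Edmonds' construction. Recall that Edmonds' construction turns a polymatroid $f$ on ground set $E$ into a matroid whose rank function is $r(A)=\min_{B\subseteq A}\bigl(|A\setminus B| + f(B)\bigr)$. The key observation I would establish first is that a set $A\subseteq E$ of hedges is acyclic-trimmable if and only if $A$ is independent in $M_G$, i.e.\ $|A|\le r(A)$, equivalently $|A'|\le f_G(A') = |V|-\Comp(V,A')$ for every $A'\subseteq A$. One direction is immediate: if $A$ trims to an acyclic graph, then each hedge contributes exactly one edge, and for any $A'\subseteq A$ the $|A'|$ trimmed edges lie inside the forest obtained by trimming $A$, so $|A'|\le |V|-\Comp(V,A')$ (the trimmed-edge subgraph on $A'$ cannot have more edges than its spanning forest has, and $\Comp(V,A')$ upper-bounds the number of components spanned). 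For the converse I would invoke (or reprove in the spirit of) the machinery already used for Theorems~\ref{theorem:hedgegraph-decomposition} and~\ref{theorem:trimming}: the condition $|A'|\le |V|-\Comp(V,A')$ for all $A'\subseteq A$ is exactly the hypergraphic-matroid independence condition applied to the hyperedges of $A$ (with one representative hyperedge chosen per component per hedge), and Lorea's theorem together with the trimming argument yields an acyclic trimming. So acyclic-trimmable sets are precisely the independent sets of $M_G$.

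Given that identification, the theorem becomes: $E$ can be partitioned into $k$ independent sets of $M_G$ if and only if $|E[\mathcal{P}]|\le k(|V|-|\mathcal{P}|)$ for every partition $\mathcal{P}$ of $V$. The ``only if'' direction is easy: if $E=\bigcup_{i=1}^k I_i$ with each $I_i$ independent, then restricting to hedges in $E[\mathcal{P}]$ (those with no hyperedge crossing $\mathcal{P}$), each $I_i\cap E[\mathcal{P}]$ is independent and its trimmed edges all live within the parts of $\mathcal{P}$, hence form a forest with at most $|V|-|\mathcal{P}|$ edges; summing over $i$ gives $|E[\mathcal{P}]|\le k(|V|-|\mathcal{P}|)$. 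For the ``if'' direction I would apply the matroid covering theorem (Nash-Williams / Edmonds): $E$ decomposes into $k$ independent sets of $M_G$ iff $|F|\le k\cdot r_{M_G}(F)$ for every $F\subseteq E$. So it suffices to show that the hypothesis $|E[\mathcal{P}]|\le k(|V|-|\mathcal{P}|)$ for all partitions $\mathcal{P}$ implies $|F|\le k\,r_{M_G}(F)$ for all $F\subseteq E$.

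The crux, and the step I expect to be the main obstacle, is this last implication: translating a uniform bound over the ``closed'' sets $E[\mathcal{P}]$ into the bound $|F|\le k\,r_{M_G}(F)$ over all $F$. The natural route is: given $F$, let $B\subseteq F$ achieve the minimum in $r_{M_G}(F)=|F\setminus B|+f_G(B)=|F\setminus B| + |V|-\Comp(V,B)$; the components of $(V,B)$ induce a partition $\mathcal{P}_B$ of $V$ (with singletons for isolated vertices), so $f_G(B)=|V|-|\mathcal{P}_B|$, and every hedge of $B$ lies in $E[\mathcal{P}_B]$. Then $|F| = |F\setminus B| + |B| \le |F\setminus B| + |E[\mathcal{P}_B]| \le |F\setminus B| + k(|V|-|\mathcal{P}_B|) = |F\setminus B| + k\,f_G(B)$. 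To conclude $|F|\le k\,r_{M_G}(F) = k|F\setminus B| + k f_G(B)$ we need $|F\setminus B|\le k|F\setminus B|$, which holds since $k\ge 1$. (One should double-check the edge case $k$ not forcing issues when $F\setminus B=\emptyset$, but there it is trivial.) The delicate point to verify carefully is that $B\subseteq E[\mathcal{P}_B]$ — i.e.\ no hedge of $B$ has a hyperedge crossing its own component partition — which is immediate from the definition of connected components, and that $\mathcal{P}_B$ is a legitimate partition of all of $V$ (including isolated vertices), so that the hypothesis applies. Assembling these pieces gives both directions; the only real content beyond bookkeeping is the acyclic-trimmable $\Leftrightarrow$ $M_G$-independent equivalence, which leans on the same trimming technique already developed for Theorems~\ref{theorem:hedgegraph-decomposition} and~\ref{theorem:trimming}.
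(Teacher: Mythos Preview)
Your proposal is correct and follows essentially the same route as the paper: identify acyclic-trimmable sets with the independent sets of the hedgegraph matroid (the paper's Lemmas~\ref{lemma:hedgegraph-matroid-property} and~\ref{lemma:hedgegraph-matroid-rank-function}), invoke Edmonds' matroid covering theorem, and then translate between the condition $|A|\le k\cdot r(A)$ for all $A$ and the partition condition by passing to the component partition $\mathcal{P}_B$ of a minimizer $B$ in $r(A)=\min_{B\subseteq A}\bigl(f(B)+|A\setminus B|\bigr)$. Your ``if'' computation ($|F|\le |F\setminus B|+|E[\mathcal{P}_B]|\le |F\setminus B|+k f_G(B)\le k(|F\setminus B|+f_G(B))$) is precisely the paper's, and your direct trimming argument for ``only if'' is a minor variant of the paper's rank-function version.
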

If $G$ is a hypergraph/graph, then $|E[\mathcal{P}]|\le k(|V|-|\mathcal{P}|)$ for every partition $\mathcal{P}$ of $V$ if and only if $E[S]\le k(|S|-1)$ for every subset $S\subseteq V$. Thus, the partition based condition is equivalent to subset based condition in graphs and hypergraphs. 
However, this equivalence does \emph{not} hold for hedgegraphs. We emphasize that our techniques underlying the proof of Theorem \ref{theorem:base-covering} also enables a polynomial-time algorithm to find the least $k$ such that the hedge set of a given hedgegraph can be partitioned into $k$ acyclic-trimmable subsets. 

\paragraph{Implications for Orientations.}
Next we state another consequence of Theorems \ref{theorem:hedgegraph-decomposition} and \ref{theorem:trimming}. 
An \emph{orientation of an undirected graph} is a digraph obtained by picking a head vertex for each edge. A digraph $\overrightarrow{G}$ with a specified root $r\in V(\overrightarrow{G})$ is \emph{rooted $k$-out-arc-connected} if $d^{out}_{\overrightarrow{G}}(U)\ge k$ for every $r\in U\subsetneq V$. 
We recall that Tutte and Nash-Williams' min-max relation for partition connectivity can also be phrased as an orientation result: a graph is $k$-partition connected if and only if it has a rooted $k$-out-arc-connected orientation for some choice of the root vertex $r$. 
Theorems \ref{theorem:hedgegraph-decomposition} and \ref{theorem:trimming} lead to a similar orientation result for hedgegraphs that we now state. A \emph{directed hypergraph} is specified by the tuple $\overrightarrow{G}=(V, H, head: H\rightarrow V)$, where $V$ is the vertex set, $H$ is a collection of hyperedges, and $head(h)\in h$ for every hyperedge $h\in H$; $head(h)$ is said to be the head vertex of the hyperedge $h$. 
An \emph{orientation of a hedgegraph} $G=(V, E)$ is a directed hypergraph obtained by picking \emph{exactly} one hyperedge $h\in e$ for each $e\in E$ and orienting that hyperedge (i.e., picking a head vertex for that hyperedge); we note that the other hyperedges in the hedge are discarded.
A directed hypergraph $\overrightarrow{G}$ with a specified root $r\in V(\overrightarrow{G})$ is \emph{rooted $k$-out-hyperarc-connected} if $d^{out}_{\overrightarrow{G}}(U)\ge k$ for every $r\in U\subsetneq V$, where $d^{out}_{\overrightarrow{G}}(U):=|\{h\in E(\overrightarrow{G}): head(h)\in V\setminus U, h\cap U\neq \emptyset\}|$ for every $U\subseteq V$. See Figure~\ref{figure:orientation} for an example. By Menger's theorem, a rooted $k$-out-hyperarc-connected hypergraph has $k$ hyperedge disjoint paths from root to each vertex. 

\begin{figure}[h]
    \centering
    \includegraphics[width=0.5\linewidth]{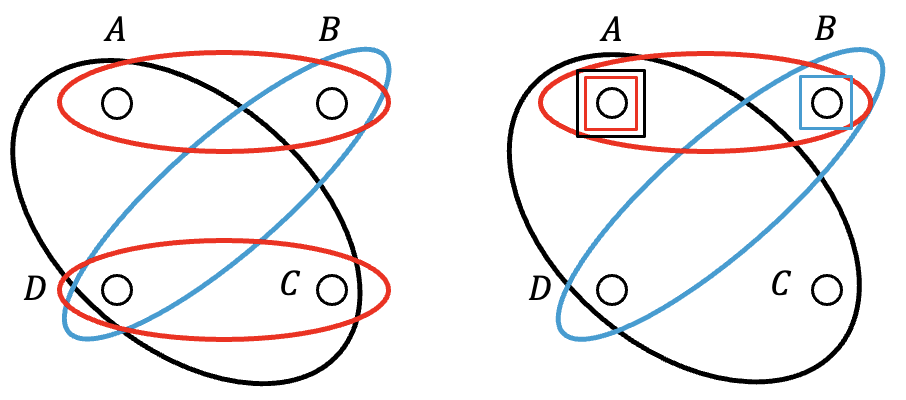}
    \caption{An example of hedgegraph orientation. The left side shows a hedgegraph $G=(V,E)$ with three hedges, where $e_1:=\{\{A,C,D\}\}$, $e_2:=\{\{A,B\},\{C,D\}\}$, and $e_3:=\{\{B,D\}\}$. The right side shows an orientation of $G$, where hyperedge $\{A,C,D\}$ is picked in $e_1$ with $A$ being its head, hyperedge $\{A,B\}$ is picked in $e_2$ with $A$ being its head, and hyperedge $\{B,D\}$ is picked in $e_3$ with $B$ being its head. If we let $r:=A$ being the root, then it is a rooted $1$-out-hyperarc-connected hypergraph. We note that this is not rooted $2$-out-hyperarc-connected since $d^{out}_{\overrightarrow{G}}(\{A,B,D\})=d^{out}_{\overrightarrow{G}}(\{A,C,D\})=1$.}
    \label{figure:orientation}
\end{figure}
\begin{restatable}{corollary}{corollaryorientation}\label{corollary:orientation}
    A hedgegraph is $k$-partition connected if and only if it has a rooted $k$-out-hyperarc connected orientation for some choice of the root vertex. 
\end{restatable}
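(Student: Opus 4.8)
The plan is to derive Corollary~\ref{corollary:orientation} from Theorems~\ref{theorem:hedgegraph-decomposition} and~\ref{theorem:trimming} together with the classical Tutte--Nash-Williams orientation result for graphs, by passing through trimmings. The key observation is that an orientation of a hedgegraph, as defined here, picks exactly one hyperedge $h\in e$ per hedge $e$ and assigns it a head; forgetting the head leaves precisely a trimming-like object except that a hyperedge need not be replaced by an edge. So the first step is to reconcile the two notions: I would argue that a rooted $k$-out-hyperarc-connected orientation exists if and only if there is a trimming of $G$ (in the sense of Theorem~\ref{theorem:trimming}, where each hedge becomes a single edge) whose underlying multigraph admits a rooted $k$-out-arc-connected orientation. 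Indeed, given a $k$-out-hyperarc-connected orientation, for each hedge the chosen hyperedge $h$ with head $v$ and any other vertex $u\in h\cap U$ witnessing a crossing can be trimmed to the edge $\{u,v\}$ oriented toward $v$; conversely a rooted $k$-out-arc-connected orientation of a trimming inflates to a hyperarc orientation since an outgoing edge $\{u,v\}$ of the trimmed hedge sitting inside $h$ yields $h\cap U\neq\emptyset$ and $v=head(h)\notin U$. The mild subtlety is that for a fixed $U$, a hyperarc leaving $U$ requires the head outside and some endpoint inside, whereas after trimming we need the specific retained edge to cross $U$ in the right direction; this is handled by choosing, for each hedge crossing $U$ in the solution, a trimming edge that witnesses the crossing — but since the trimming must be chosen once and work for all $U$ simultaneously, the clean way is to go through the decomposition result rather than argue directly.

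Concretely, I would chain the following equivalences. By Theorem~\ref{theorem:hedgegraph-decomposition}, $G$ is $k$-partition connected iff it contains $k$ hedge-disjoint $1$-partition connected sub-hedgegraphs $G_1,\dots,G_k$. By Theorem~\ref{theorem:trimming}, each $G_i$ contains a sub-hedgegraph $G_i'$ that can be trimmed to a spanning tree $T_i$; since the $G_i$ are hedge-disjoint, so are the $G_i'$, and trimming each hedge in $G_i'$ and discarding the other (untrimmed) hedges yields a spanning tree on an edge set using distinct hedges. Pick a common root $r$. Orient each $T_i$ away from $r$ as an out-arborescence; this is a rooted $1$-out-arc-connected orientation of $T_i$. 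Now assemble: for a hedge $e$, if $e$ lies in some $G_i'$, orient the corresponding (chosen) hyperedge of $e$ with the head dictated by the arborescence $T_i$ (a hedge lies in at most one $G_i'$ by hedge-disjointness); for every remaining hedge $e$, pick an arbitrary hyperedge $h\in e$ and an arbitrary head. The union of the $k$ out-arborescences is rooted $k$-out-arc-connected in the graph sense, and inflating edges back to the chosen hyperedges only increases $d^{out}$ on each set $U$ (an out-edge $(u,v)$ with $v=head(h)$ contributes a crossing hyperarc of $h$), so the resulting directed hypergraph is rooted $k$-out-hyperarc-connected.

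For the converse, suppose $\overrightarrow{G}$ is a rooted $k$-out-hyperarc-connected orientation with root $r$. I would show $G$ is $k$-partition connected directly from the definition of $\pc_G$: take any partition $\mathcal{P}=\{V_1,\dots,V_t\}$ of $V$, with $r\in V_1$ say. For $j=2,\dots,t$, contract each part and apply Menger/the out-connectivity condition — actually the clean route is a counting argument: for each $U=V\setminus V_j$ with $j\geq 2$ we have $d^{out}_{\overrightarrow G}(U)\geq k$, i.e.\ at least $k$ oriented hyperedges have their head in $V_j$ and meet $U$; each such hyperedge crosses $\mathcal{P}$ and hence its hedge lies in $\delta_G(\mathcal{P})$. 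Summing the head-in-$V_j$ counts over $j=2,\dots,t$ and noting that a single oriented hyperedge has exactly one head, hence is counted for at most one $j$, while it contributes to the count only when it crosses $\mathcal{P}$, gives $|\delta_G(\mathcal{P})|\geq k(t-1)$; since each hedge contributes at most one oriented hyperedge this is at most what we want, so $\lfloor |\delta_G(\mathcal{P})|/(|\mathcal{P}|-1)\rfloor\geq k$. Taking the minimum over $\mathcal{P}$ yields $\pc_G\geq k$.

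The main obstacle I anticipate is making the forward direction's inflation step and the reverse direction's counting argument handle the multi-hyperedge structure of hedges cleanly: one must be careful that ``one oriented hyperedge per hedge'' is used both to guarantee that discarded hyperedges in the trimming do not reappear and to ensure that in the converse the head-counting double-counts nothing across different parts. Neither is deep, but the bookkeeping — tracking which hedge a chosen/oriented hyperedge belongs to, and that hedge-disjointness of the $G_i'$ survives trimming and discarding — is where a careless argument would slip. Once the equivalence ``rooted $k$-out-hyperarc-connected orientation $\Leftrightarrow$ $k$ hedge-disjoint out-arborescences coming from trimmings'' is stated precisely, the corollary follows by concatenating it with Theorems~\ref{theorem:hedgegraph-decomposition} and~\ref{theorem:trimming} and the graphic Tutte--Nash-Williams orientation theorem.
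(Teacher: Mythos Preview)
Your proposal is correct and follows essentially the same approach as the paper: the forward direction uses Theorems~\ref{theorem:hedgegraph-decomposition} and~\ref{theorem:trimming} to obtain $k$ hedge-disjoint sub-hedgegraphs trimmable to spanning trees, orients each away from a common root, and lifts these orientations back to the chosen hyperedges (with remaining hedges oriented arbitrarily); the converse uses the same head-counting argument over the non-root parts of a partition to show $|\delta_G(\mathcal{P})|\ge k(|\mathcal{P}|-1)$. Your extra remarks about reconciling trimmings with hyperarc orientations and the closing reference to the graphic Tutte--Nash-Williams theorem are not needed for the argument as you actually carry it out, but they do no harm.
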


We emphasize that for hedgegraphs, existence of a rooted $k$-out-hyperarc connected orientation for some choice of the root vertex is equivalent to the existence of a rooted $k$-out-hyperarc connected orientation for every choice of the root vertex. We also note that hypergraphs and hedgegraphs differ substantially from graphs for rooted connected orientation problems---in particular, the existence of a rooted $k$-in-hyperarc connected orientation of a hypergraph\footnote{A directed hypergraph $\overrightarrow{G}$ with root $r\in V(\overrightarrow{G})$ is \emph{rooted $k$-in-hyperarc-connected} if $d^{in}_{\overrightarrow{G}}(U)\ge k$ for every $r\in U\subsetneq V$, where $d^{in}_{\overrightarrow{G}}(U):=|\{h\in E(\overrightarrow{G}): head(h)\in U, h\setminus U\neq \emptyset\}|$. } does not imply the existence of a rooted $k$-out-hyperarc connected orientation although this fact holds trivially in graphs (consider reversing the orientation of all arcs).

\paragraph{Weak Partition Connectivity.}
We define the weak partition connectivity of a hedgegraph $G=(V, E)$ as
\[
\wpc_G:=\min_{\mathcal{P} \text{ is a partition of } V}\left\lfloor\frac{\sum_{e\in E}\left(|\mathcal{P}|-\Comp(\mathcal{P}(e))\right)}{|\mathcal{P}|-1}\right\rfloor,
\] 
where $|\mathcal{P}|$ is the number of parts of the partition $\mathcal{P}$ and $\mathcal{P}(e)$ is the hedgegraph obtained from the hedgegraph $(V,\{e\})$ by contracting every part in $\mathcal{P}$ into a single vertex. See Figure~\ref{figure:weak-partition-connectivity} for an example.
\begin{figure}[h]
    \centering
    \includegraphics[width=0.6\linewidth]{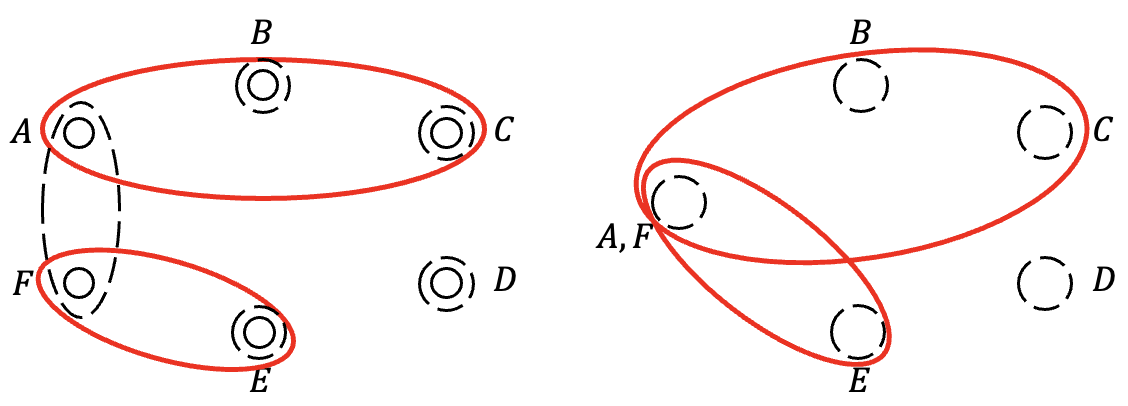}
    \caption{An example of vertex partition $\mathcal{P}$ and hedgegraph $\mathcal{P}(e)$. The left side shows a partition $\mathcal{P}:=\{\{B\}, \{C\},\{D\},\{E\},\{A,F\}\}$ of vertex set $V$ and a hedge $e:=\{\{A,B,C\},\{E,F\}\}$. The right side shows the hedgegraph obtained from the hedgegraph $(V,\{e\})$ by contracting every part in $\mathcal{P}$ into a single vertex. We note that $|\mathcal{P}|=5$ and $\Comp(\mathcal{P}(e))=2$.}
    \label{figure:weak-partition-connectivity}
\end{figure}
As mentioned before, partition connectivity and weak partition connectivity are equal in graphs but differ in hypergraphs. 
Our definition of weak partition connectivity in hypergraphs coincides with the definition in (\ref{eq:wpc-hypergraphs}) although they might seem different on first glance. 

It is well-known that partition connectivity is within a $2$-factor of connectivity in graphs. Moreover, weak partition connectivity is within a $2$-factor of connectivity in hypergraphs \cite{BCKK24}. We show that this fact holds in hedgegraphs for the above-mentioned definition of weak partition connectivity. 
\begin{restatable}{lemma}{lemmawpcandconnectivity}\label{lemma:wpc-and-connectivity} 
Let $G=(V, E)$ be a hedgegraph with connectivity $\lambda>0$ and $\wpc_G$ be its weak partition connectivity. Then, $\left \lfloor\frac{\lambda}{2}\right \rfloor\le \wpc_G \leq \lambda$.
\end{restatable}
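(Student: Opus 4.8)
The plan is to prove the two inequalities separately: the upper bound $\wpc_G \le \lambda$ with a single well-chosen partition, and the lower bound $\lfloor\lambda/2\rfloor \le \wpc_G$ by a double-counting argument across the parts of an arbitrary partition.

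\emph{Upper bound.} I would take a minimum cut, i.e., a set $S$ with $\emptyset\neq S\subsetneq V$ and $d_G(S)=\lambda$, and plug in the partition $\mathcal{P}=\{S,V\setminus S\}$, so that $|\mathcal{P}|-1=1$. For a hedge $e$, the hedgegraph $\mathcal{P}(e)$ has exactly two vertices, hence $\Comp(\mathcal{P}(e))\in\{1,2\}$, and it is connected precisely when some hyperedge of $e$ meets both $S$ and $V\setminus S$, i.e., precisely when $e\in\delta_G(S)$. Therefore $\sum_{e\in E}\big(|\mathcal{P}|-\Comp(\mathcal{P}(e))\big)=|\delta_G(S)|=\lambda$, which already gives $\wpc_G\le\lambda$.

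\emph{Lower bound.} Fix an arbitrary partition $\mathcal{P}=\{P_1,\dots,P_t\}$ of $V$ with $t\ge 2$ and let $v_1,\dots,v_t$ be the contracted vertices. It suffices to show $\sum_{e\in E}\big(t-\Comp(\mathcal{P}(e))\big)\ge \tfrac{\lambda}{2}(t-1)$, because then the fraction inside the definition of $\wpc_G$ is at least $\lambda/2\ge\lfloor\lambda/2\rfloor$, so its floor is at least $\lfloor\lambda/2\rfloor$, and minimizing over $\mathcal{P}$ finishes. I would use two observations. First, for each hedge $e$, a part $P_i$ is crossed by $e$ if and only if the vertex $v_i$ is non-isolated in the hypergraph underlying $\mathcal{P}(e)$ — here one notes that hyperedges of $e$ contained in a single part contract to singletons and do not connect anything. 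Hence $\sum_{i=1}^{t}d_G(P_i)=\sum_{e\in E}N_e$, where $N_e$ is the number of non-isolated vertices of $\mathcal{P}(e)$. Second, for each hedge $e$ we have $N_e\le 2\big(t-\Comp(\mathcal{P}(e))\big)$: writing $C_1,\dots,C_c$ for the connected components of $\mathcal{P}(e)$ with sizes $s_j=|C_j|$, we have $N_e=\sum_{j:\,s_j\ge2}s_j$ and $t-\Comp(\mathcal{P}(e))=\sum_{j}(s_j-1)=\sum_{j:\,s_j\ge2}(s_j-1)$, and since the number of components with $s_j\ge2$ is at most $\sum_{j:\,s_j\ge2}(s_j-1)$, we get $N_e=\sum_{j:\,s_j\ge2}(s_j-1)+|\{j:s_j\ge2\}|\le 2\big(t-\Comp(\mathcal{P}(e))\big)$. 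Combining these with $d_G(P_i)\ge\lambda$ for each $i$ (each $P_i$ is a nonempty proper subset of $V$ since $t\ge2$), we obtain $t\lambda\le\sum_{i}d_G(P_i)=\sum_{e}N_e\le 2\sum_{e}\big(t-\Comp(\mathcal{P}(e))\big)$, so $\sum_{e}\big(t-\Comp(\mathcal{P}(e))\big)\ge t\lambda/2\ge(t-1)\lambda/2$, as needed.

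\emph{Main obstacle.} The only subtle point is the first observation: one must carefully track the effect of contracting $\mathcal{P}$ on the hyperedges of a single hedge — a hyperedge inside one part becomes a singleton (contributing nothing to connectivity), while distinct hyperedges of the same hedge may merge at a shared contracted vertex — so that ``$e$ crosses $P_i$'' is faithfully equivalent to ``$v_i$ is non-isolated in $\mathcal{P}(e)$''. The second observation is the step that yields the factor of $2$; it is tight exactly when every nontrivial component of $\mathcal{P}(e)$ is a single edge, which recovers the classical graph picture where $\wpc_G=\pc_G$ and the $2$-factor comes from $2|\delta(\mathcal{P})|=\sum_{i}d_G(P_i)\ge t\lambda$.
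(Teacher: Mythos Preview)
Your argument is correct. For the lower bound $\lfloor\lambda/2\rfloor\le\wpc_G$, your proof is essentially identical to the paper's (Lemma~\ref{lemma:wpc-is-atleast-half-connectivity}): both define $N_e=|\mathcal{T}_e|$ as the number of non-isolated contracted vertices in $\mathcal{P}(e)$, establish $N_e\le 2(|\mathcal{P}|-\Comp(\mathcal{P}(e)))$ via the component-size count, double-count $\sum_{S\in\mathcal{P}}|\delta(S)|=\sum_e N_e$, and finish with $|\delta(S)|\ge\lambda$ for each part.

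For the upper bound $\wpc_G\le\lambda$, your route is more elementary than the paper's. The paper does not plug in a single partition; instead it proves two separate lemmas, $\wpc_G\le k^*(f)$ (Lemma~\ref{lemma:hedge-weak-partition-connectivity-polymatroid}) and $k^*(f)\le\lambda$ (Lemma~\ref{lemma:functional-strength-is-atmost-connectivity}), where $k^*(f)$ is the functional strength of the hedgegraph polymatroid, and chains them. Your direct use of the bipartition $\{S,V\setminus S\}$ from a minimum cut bypasses the polymatroid entirely and gives the bound in one line. The paper's detour is not wasted effort, however: the intermediate inequality $\wpc_G\le k^*(f)\le\lambda$ is exactly what is needed later for the deterministic $O(\log n)$-approximation to $\lambda$ (Theorem~\ref{theorem:log-approx-connectivity}), since it is $k^*(f)$---not $\wpc_G$---that admits an efficient $O(\log f(\mathcal{N}))$-approximation. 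So your approach is cleaner for this lemma in isolation, while the paper's approach is doing double duty.
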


We now state two consequences of Lemma \ref{lemma:wpc-and-connectivity}. The first consequence addresses deterministic connectivity computation in hedgegraphs. 
Weak partition connectivity in graphs and hypergraphs can be computed in polynomial time \cite{FKK03-ori}. 
If weak partition connectivity in hedgegraphs can be computed in deterministic polynomial time, then it immediately implies a deterministic $2$-approximation for connectivity via Lemma \ref{lemma:wpc-and-connectivity}. 
At this point, we do not know if weak partition connectivity in hedgegraphs can be computed in polynomial time. 
However, we show that there exists a deterministic polynomial-time algorithm to compute a $\log$-approximation to connectivity using Lemma \ref{lemma:wpc-and-connectivity} and connections to the hedgegraph polymatroid. To the best of the authors' knowledge, this is the first deterministic algorithm with non-trivial approximation guarantee for connectivity in hedgegraphs. 
\begin{restatable}{theorem}{theoremlogapproxconnectivity}\label{theorem:log-approx-connectivity}
    There exists a deterministic polynomial-time algorithm to compute an $O(\log{n})$-approximation for connectivity in hedgegraphs. 
\end{restatable}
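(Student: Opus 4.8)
The plan is to approximate connectivity $\lambda_G$ by computing weak partition connectivity $\wpc_G$ up to an $O(\log n)$ factor, and then invoke Lemma~\ref{lemma:wpc-and-connectivity} to translate this into an $O(\log n)$-approximation for $\lambda_G$. Since Lemma~\ref{lemma:wpc-and-connectivity} gives $\lfloor \lambda/2\rfloor \le \wpc_G \le \lambda$, it suffices to produce a value $w$ with $w \le \wpc_G \le O(\log n)\cdot w$; then $\lambda/2 - 1 \le \wpc_G$ and $\wpc_G \le \lambda$ together with $w \le \wpc_G \le O(\log n)\, w$ yield $w = \Theta(\lambda / \log n)$ up to additive slack, which we can clean up by separately handling the easy case $\lambda = O(1)$.

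The core idea is to reinterpret $\wpc_G$ through the hedgegraph polymatroid $f_G$ and one of the two ``strengths'' of a polymatroid alluded to in the introduction. Concretely, consider the polymatroid $f_G : 2^E \to \Zg$ with $f_G(A) = |V| - \Comp(V,A)$. For a partition $\mathcal{P}$ of $V$, the quantity $\sum_{e\in E}(|\mathcal{P}| - \Comp(\mathcal{P}(e)))$ that appears in the definition of $\wpc_G$ is exactly $\sum_{e\in E} f_{\mathcal{P}(e)}(\{e\})$ where $\mathcal{P}(e)$ is the contraction, and this matches the standard notion of the ``strength'' (or principal partition density) of the polymatroid restricted to quotients induced by $\mathcal{P}$. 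The key structural fact I would establish is that $\wpc_G$ equals (up to the floor) the maximum number of hedge-disjoint spanning structures one can pack, or equivalently the minimum-density quotient of $f_G$; this is the analogue for weak partition connectivity of what Theorem~\ref{theorem:hedgegraph-decomposition} does for partition connectivity. Computing the strength of an explicitly given polymatroid (via submodular minimization / Dilworth truncation, as in Cunningham's strength algorithm) is polynomial-time and deterministic. The reason we only get an $O(\log n)$ approximation rather than an exact value is that $f_G$ itself is a polymatroid, not a matroid, and the fractional relaxation of the relevant packing LP (fractionally packing sub-hedgegraphs that are ``weakly spanning'') is within an $O(\log n)$ factor of its integral optimum — exactly the kind of integrality gap that appears when covering/packing with set-cover-like structure, and which can be rounded deterministically using the standard greedy / LP-rounding argument for covering integer programs.

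The steps, in order: (1) Give the polymatroidal reformulation of $\wpc_G$ as a min-density quotient of $f_G$, proving $\wpc_G = \lfloor \min_{\mathcal{P}} \frac{|V|-|\mathcal{P}|}{(\text{something})} \rfloor$ — i.e. identifying $\wpc_G$ with the floor of the strength of $f_G$ (the min over quotients $\mathcal{P}$ of $\frac{\sum_e f_{\mathcal{P}(e)}(e)}{|\mathcal{P}|-1}$ is literally the definition, so the content is relating this to a packing quantity). (2) Observe that the fractional version — the maximum $t$ such that $t$ copies of the ``weakly spanning'' requirement can be fractionally covered by the hedges — is computable exactly in deterministic polynomial time by an LP over the polymatroid (separation via submodular minimization) or directly by Cunningham-style strength computation. (3) Show the integrality gap between this fractional optimum and $\wpc_G$ is $O(\log n)$: the lower bound $w \le \wpc_G$ is immediate from feasibility of a deterministically rounded solution (standard randomized-rounding-derandomized-by-method-of-conditional-expectations, or a greedy covering argument), and the upper bound $\wpc_G \le O(\log n) \, w$ follows because any integral packing is a fractional packing. (4) Combine with Lemma~\ref{lemma:wpc-and-connectivity} and dispatch the $O(1)$-connectivity regime (where an exact algorithm or the FPT/constant-hyperedge results are unnecessary — one can brute-force or just note the approximation factor is trivially fine) to conclude the $O(\log n)$-approximation for $\lambda_G$.

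The main obstacle I anticipate is Step~(1): cleanly establishing that $\wpc_G$ (the partition-based quantity) coincides with a polymatroid strength / packing quantity to which LP-duality and rounding apply, i.e. proving the min-max relation that plays the role of Tutte--Nash-Williams for \emph{weak} partition connectivity in the hedgegraph setting. In hypergraphs this is the orientation result of Frank--Kir\'aly--Kir\'aly~\cite{FKK03-ori}, and adapting it to hedgegraphs — where a hedge contributes $|\mathcal{P}| - \Comp(\mathcal{P}(e))$ rather than a clean ``number of parts minus one'' — requires care, particularly in handling the contraction $\mathcal{P}(e)$ and verifying submodularity/monotonicity of the relevant truncated function so that the LP has a polymatroid constraint matrix. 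A secondary subtlety is making sure the two floors (one in the definition of $\wpc_G$, one from the $\lfloor \lambda/2 \rfloor$ bound) and the $O(\log n)$ rounding loss compose into a clean multiplicative $O(\log n)$ guarantee for $\lambda_G$ rather than something with additive error that degrades for small $\lambda$ — hence the explicit case split on whether $\lambda$ is large or constant.
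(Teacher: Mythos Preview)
Your plan has a genuine gap in Step~(1), and the paper's proof is both simpler and avoids exactly the obstacle you flag.

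The paper does \emph{not} try to approximate $\wpc_G$. Instead it introduces the \emph{functional strength} $k^*(f)$ of the hedgegraph polymatroid $f=f_G$, namely
\[
k^*(f)=\min_{A\subseteq E}\left\lfloor \frac{\sum_{e\in E}(f(A+e)-f(A))}{f(E)-f(A)}\right\rfloor,
\]
and proves the chain $\lfloor \lambda/2\rfloor \le \wpc_G \le k^*(f)\le \lambda$ (Lemmas~\ref{lemma:wpc-is-atleast-half-connectivity}, \ref{lemma:hedge-weak-partition-connectivity-polymatroid}, \ref{lemma:functional-strength-is-atmost-connectivity}). Then it invokes a known result of C\u{a}linescu--Chekuri--Vondr\'ak (Lemma~\ref{lemma:functional-strength-deterministic}) giving a deterministic $O(\log f(E))$-approximation for $k^*(f)$; since $f(E)\le |V|-1$, this is $O(\log n)$. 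That is the entire proof: no LP, no rounding, no case split on small $\lambda$.

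Your Step~(1) --- identifying $\wpc_G$ with a polymatroid strength/packing quantity via a min-max relation analogous to Tutte--Nash-Williams or \cite{FKK03-ori} --- cannot be carried out as stated. The natural polymatroid quantity that arises from rewriting $|\mathcal{P}|-\Comp(\mathcal{P}(e))$ as a marginal of $f_G$ is precisely $k^*(f)$ (one observes $|\mathcal{P}|-\Comp(\mathcal{P}(e))=f(A+e)-f(A)$ when $\mathcal{P}$ is the component partition of $(V,A)$), but the minimum in $\wpc_G$ ranges over \emph{all} partitions of $V$, not only those arising as component partitions of some $A\subseteq E$. This gives only the inequality $\wpc_G\le k^*(f)$, and the paper exhibits an explicit hypergraph (Appendix~\ref{appendix:couterexample}) with $\wpc_G<k^*(f)$. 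Indeed, whether $\wpc_G$ itself is polynomial-time computable for hedgegraphs is left as an open problem. So the ``obstacle'' you anticipate is real and fatal to the plan as written: there is no clean min-max relation for $\wpc_G$ to hook an LP onto. Your ``Cunningham-style strength'' suggestion in Step~(2) computes $\kappa_{\overrightarrow{\mathbf 1}}(f)=\pc_G$, which is a different (and smaller) quantity again.

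That said, your high-level instinct --- approximate a polymatroid packing quantity sandwiched between $\lambda/2$ and $\lambda$ --- is correct; you just need to target $k^*(f)$ rather than $\wpc_G$, at which point the $O(\log n)$-approximation is a black-box citation to \cite{cualinescu2009disjoint} and none of Steps~(2)--(4) are needed.
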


The second consequence of Lemma \ref{lemma:wpc-and-connectivity} addresses sampling in hypergraphs and hedgegraphs. 
A fundamental result in graphs is that sampling every edge of a $n$-vertex connected graph independently with probability  $\Omega(\log{n}/\lambda)$ gives a connected graph with high probability. This result is the basis for understanding graph reliability under random edge failures \cite{karger1994random, Kar-FPRAS-unreliability}.  
This sampling result for graphs was first shown by Karger \cite{karger1994random} using the fact that the number of $\alpha$-approximate min-cuts  in a $n$-vertex graph is $n^{O(\alpha)}$. It is well-known that the bound on the number of approximate min-cuts does not hold for hypergraphs---e.g., see \cite{CXY19}. Hence, it was unclear whether the same sampling result would hold for hypergraphs and hedgegraphs. 
Ghaffari, Karger, and Panigrahi \cite{GKP17} showed a slightly weaker bound for hedgegraphs (and hence, hypergraphs): sampling every hedge with probability $\Omega(\log^2{n}/\lambda)$ gives a connected hedgegraph with high probability. We improve on their result on hedgegraphs to match the guarantee on graphs. 

\begin{restatable}{theorem}{theoremrandomsampling}\label{theorem:random-sampling}
    Let $G=(V,E)$ be a $n$-vertex hedgegraph with connectivity $\lambda>0$. Let $A\subseteq E$ be a subset obtained by sampling each hedge $e\in E$ with probability at least $\frac{20\log n}{\lambda}$. Then, the subhedgegraph $(V, A)$ is connected with probability at least $1-2/n$.
\end{restatable}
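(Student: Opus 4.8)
The plan is as follows. First, reduce to the interesting regime: if the sampling probability is at least $1$ we retain all of $E$ and the claim is immediate since $\lambda>0$, so assume $\lambda>20\log n$ and set $p:=20\log n/\lambda\le 1$ (keeping more hedges only helps, so it suffices to treat this $p$). By Lemma~\ref{lemma:wpc-and-connectivity}, $\wpc_G\ge\lfloor\lambda/2\rfloor=:t$, and since $\lambda>20\log n$ we get $pt\ge 6\log n$ for $n$ large enough (the constant $20$ has slack). The point of this step is to trade the intractable quantity $\lambda$ for the polymatroidal quantity $t$, which we can attack combinatorially: $(V,A)$ is connected iff every partition $\mathcal P$ with $|\mathcal P|\ge 2$ is crossed by some hedge of $A$, while $\wpc_G\ge t$ says each such $\mathcal P$ satisfies $\sum_{e\in E}\bigl(|\mathcal P|-\Comp(\mathcal P(e))\bigr)\ge t(|\mathcal P|-1)$.

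Second, I would convert this inequality into a genuine packing inside an auxiliary graph. Using the hedgegraph polymatroid $f_G$ together with matroid-union/Edmonds-type arguments (the machinery behind Theorems~\ref{theorem:hedgegraph-decomposition}--\ref{theorem:base-covering}), extract $t$ hedge-disjoint, spanning, connected sub-hedgegraphs $G_1,\dots,G_t$ --- or, if only a fractional packing of total weight $t$ of such sub-hedgegraphs is available, carry a weighted version of the argument below. For each $i$, run a breadth-first search through the hyperedges of $G_i$ and, each time a hyperedge first reaches new vertices, add a star from the discovering vertex to those new vertices; this yields a spanning tree $T_i$ on $V$ whose edges each carry the label of the hedge that created them, with the labels of $T_i$ disjoint from those of $T_j$ for $i\ne j$. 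Then $\Gamma:=\bigcup_{i=1}^{t}T_i$ is a union of $t$ edge-disjoint spanning trees, hence a $t$-edge-connected multigraph, and every cut of $\Gamma$ has at least $t$ distinct hedge-labels among its crossing edges (one per tree, and the trees use disjoint label sets).

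Third, couple the sampling to $\Gamma$. Retaining a hedge $e$ is exactly retaining all $\Gamma$-edges labeled $e$; if the resulting random subgraph $\Gamma'\subseteq\Gamma$ is connected and spanning then so is $(V,A)$, since a retained $\Gamma$-edge $\{u,v\}$ certifies that $u,v$ lie in a common hyperedge of a retained hedge. Hence $\Pr[(V,A)\text{ disconnected}]\le\Pr[\Gamma'\text{ disconnected}]$, and I would finish with a Karger-style argument on $\Gamma$: a cut of $\Gamma$ with $\ell$ distinct crossing labels fails to be crossed in $\Gamma'$ with probability $(1-p)^{\ell}\le(1-p)^{t}$, so grouping cuts by their number of distinct crossing labels and using the cut-counting bound for the graph $\Gamma$ (together with $\ell\ge t$ on every cut and $pt\ge 6\log n$) sums a geometric series down to probability at most $2/n$.

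The hard part is the interface between the packing step and the final counting: one needs an auxiliary graph $\Gamma$ that is simultaneously highly edge-connected \emph{and} has few cuts with few \emph{distinct} crossing hedge-labels. Plain edge-counting in $\Gamma$ (all that Karger's bound gives directly) does not control cuts with few distinct labels, because a single hedge may own many crossing edges of $\Gamma$ --- and a hedgegraph can have super-polynomially many approximate minimum cuts, so no naive union bound over cuts can work. Making the label structure cooperate, e.g.\ by choosing the graph-ification of each $G_i$ more carefully, or by using a fractional packing with a weighted cut-counting argument, is the crux. An alternative route that avoids $\Gamma$ altogether is a recursive random-contraction argument in the spirit of Ghaffari--Karger--Panigrahi, using $\wpc_G\ge\lfloor\lambda/2\rfloor$ to control the recursion so that only a single logarithmic factor is lost; choosing between these two routes is the first thing I would settle.
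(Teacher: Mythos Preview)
Your proposal has gaps at precisely the two points you yourself flag, and the paper's proof avoids both by taking a completely different (and much shorter) route.

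On the packing step: the machinery you invoke (Theorems~\ref{theorem:hedgegraph-decomposition}--\ref{theorem:base-covering}) packs hedge-disjoint $1$-partition-connected sub-hedgegraphs from \emph{partition} connectivity $\pc_G\ge t$, not from weak partition connectivity; since $\pc_G$ can be far below $\wpc_G$ (e.g.\ $n-1$ parallel copies of the hyperedge $V$ have $\pc=1$ but $\wpc=n-1$), those theorems do not deliver the packing you need from $\wpc_G\ge t$. Whether $\wpc_G\ge t$ always yields $t$ hedge-disjoint spanning connected sub-hedgegraphs is a separate claim you would have to establish.

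On the union bound: Karger's counting bounds the number of cuts of $\Gamma$ with few \emph{edges}, not few distinct \emph{hedge-labels}. In your construction a single hedge can label many edges of $T_i$ (the BFS star), so a cut with arbitrarily many $\Gamma$-edges can still carry only $t$ distinct labels and hence fail with probability $(1-p)^t$; summing $(1-p)^t$ over all cuts of $\Gamma$ diverges. Replacing the BFS by the trimming of Theorem~\ref{theorem:trimming} would give one edge per hedge and fix the label-versus-edge mismatch, but trimming requires $1$-partition-connected pieces, which sends you straight back to the first gap.

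The paper's proof is a one-line application of the polymatroid sampling lemma (Lemma~\ref{lemma:sampling-property}) to the hedgegraph polymatroid $f$: a base of $f$ is precisely a connected spanning sub-hedgegraph, $f(E)=n-1$, and Lemmas~\ref{lemma:wpc-is-atleast-half-connectivity} and~\ref{lemma:hedge-weak-partition-connectivity-polymatroid} give $k^*(f)\ge\wpc_G\ge\lfloor\lambda/2\rfloor$, so sampling at rate $20\log n/\lambda\ge 10\log f(E)/k^*(f)$ already yields a base with probability at least $1-1/(n-1)\ge 1-2/n$. There is no packing, no auxiliary graph, and no cut enumeration; the random-permutation expectation-decay argument inside Lemma~\ref{lemma:sampling-property} does all the work, entirely at the polymatroid level. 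You correctly identified $\wpc_G$ as the right intermediate quantity; the missed step is that once you are at $k^*(f)$, Lemma~\ref{lemma:sampling-property} finishes the proof directly.
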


We have stated our results above for unweighted hedgegraphs in the interests of notational brevity. All of the above results results can be extended to hedge-weighted hedgegraphs in a natural manner albeit with additional notational clutter. 

\paragraph{Partition Sparsifiers.} Cut and spectral sparsifiers in graphs are extensively studied and highly influential \cite{karger1994using, karger1995random, benczur1996approximating, karger1994random, karger2000minimum, karger2015fast, fung2011general, spielman2004nearly, spielman2011spectral, spielman2008graph, BSS12, LS18}. These have been extend to hypergraphs recently \cite{kogan2015sketching, chekuri2018minimum, chen2020near, soma2019spectral, bansal2019new, kapralov2022spectral, kapralov2021towards, jambulapati2023chaining, khanna2024near, khanna2025near, khanna2025sketch}.
In the setting of hedgegraphs we are concerned with partition sparsifiers which, as we argued, have useful structural properties. To describe the notion of partition sparsifiers, we need to consider weighted hedgegraphs. Let $G=(V, E)$ be a hedgegraph with hedge-weights $w: E\rightarrow \Rg$. 
For a partition $\mathcal{P}$ of $V$, we define 
the partition capacity as $d_{(G,w)}(\mathcal{P}):=\sum_{e\in \delta(\mathcal{P})} w(e)$. 
A weighting $w': E \rightarrow \Rg$ is a partition sparsifier for $(G, w)$ if all partition capacities are preserved within a $\pm \epsilon$ factor, i.e., $d_{(G, w')}(\mathcal{P})\in [(1-\epsilon)d_{(G, w)}(\mathcal{P}), (1+\epsilon)d_{(G, w)}(\mathcal{P})]$ for every partition $\mathcal{P}$ of $V$ and support$(w')$ is small. 
For graphs, cut sparsifiers and partition sparsifiers are equivalent (since preserving cut capacities suffices to preserve partition capacities). 
It is well-known that every weighted graph $(G, w)$ has a cut sparsifier $w'$ where support$(w')=O(n/\varepsilon^2)$ and it can be constructed in near-linear time, where $n$ is the number of vertices in $G$ \cite{BSS12, LS18}. 
For hypergraphs, cut sparsifiers are not equivalent to partition sparsifiers (since approximating cut capacities is not equivalent to approximating partition capacities).  
Quanrud \cite{quanrud2024quotient} showed that every weighted hypergraph $(G,w)$ has a partition sparsifier $w'$ where support$(w')=O(n\log{n}/\varepsilon^2)$ and it can be constructed in near-linear time. In fact, Quanrud proved a more general result regarding polymatroid quotient sparsification. We show a counting result for hedgegraphs that allows us to apply his quotient sparsification result for the hedgegraph polymatroid to conclude the following result on partition sparsifiers for hedgegraphs. 


\begin{restatable}{theorem}{theoremsparsification}\label{theorem:sparsification}
    Let $G=(V, E)$ be a hedgegraph with hedge-weights $w: E \rightarrow \Rg$ on $n$ vertices of size $p:=\sum_{e\in E}\sum_{h\in e}|h|$. 
    Then, there exist hedge-weights $w': E\rightarrow \Rg$ such that (1) support$(w')=O(n\log{n}/\varepsilon^2)$ and (2) $d_{(G, w')}(\mathcal{P})\in [(1-\varepsilon)d_{(G, w)}(\mathcal{P}), (1+\varepsilon)d_{(G, w)}(\mathcal{P})]$ for every partition $\mathcal{P}$ of $V$.
    Moreover, such a weighting $w'$ can be computed with high probability in polynomial time.
\end{restatable}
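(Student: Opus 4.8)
The plan is to observe that a partition sparsifier of $(G,w)$ is exactly a quotient sparsifier of the hedgegraph polymatroid $f_G$, to invoke Quanrud's polymatroid quotient sparsification for $f_G$, and to supply the one combinatorial input his framework requires here: a bound on the number of near-minimum quotients. For the reduction, assume without loss of generality that $G$ is connected (otherwise sparsify each connected component separately, noting that every partition capacity splits over components), so $f_G(E)=n-1$. For a partition $\mathcal P$ of $V$ put $E[\mathcal P]:=E\setminus\delta_G(\mathcal P)$, the set of hedges all of whose hyperedges lie inside a single part. I would first verify that $E[\mathcal P]$ is a flat of $f_G$: the connected components of $(V,E[\mathcal P])$ refine $\mathcal P$, so a hedge $e\notin E[\mathcal P]$ has a hyperedge meeting two such components and hence $f_G(E[\mathcal P]+e)>f_G(E[\mathcal P])$; conversely every flat $Q$ equals $E[\mathcal Q]$ for the partition $\mathcal Q$ of $V$ into components of $(V,Q)$. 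So the flats of $f_G$ are exactly the sets $E[\mathcal P]$, the quotients of $f_G$ are in capacity-preserving correspondence with them, and $d_{(G,w)}(\mathcal P)=\sum_{e\in E\setminus E[\mathcal P]}w(e)$ is exactly the $w$-weighted cut of the quotient of $\mathcal P$; thus a $(1\pm\varepsilon)$ quotient sparsifier of $(f_G,w)$ is a $(1\pm\varepsilon)$ partition sparsifier of $(G,w)$.

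Quanrud's framework \cite{quanrud2024quotient}, applied to a polymatroid of dimension $n-1$ with weights, then yields a quotient sparsifier with $O(n\log n/\varepsilon^2)$ nonzero weights, computable with high probability in polynomial time, once we supply two things: a polynomial-time oracle for the relevant minimum-ratio-quotient (partition-connectivity-type) quantities of the polymatroid and its contractions, and a uniform counting bound, namely a function $g(\alpha)=\alpha^{O(1)}$ such that for $f_G$ and for every contraction $f_G/A$ the number of quotients of cut at most $\alpha$ times the minimum quotient cut is at most $n^{g(\alpha)}$. The oracle is available because a contraction $f_G/A$ is again a hedgegraph polymatroid --- it is $f_{G'}$ for the hedgegraph $G'$ obtained by contracting each connected component of $(V,A)$ to a single vertex --- so these quantities are computable in polynomial time by the submodular-minimization algorithm behind Theorem~\ref{theorem:partition-connectivity}, and in particular we never need the intractable connectivity $\lambda_G$. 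The same closure under contraction reduces the counting bound to the case of a single arbitrary hedgegraph.

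The crux is therefore the counting statement: for a connected $n$-vertex hedgegraph $G$ with $\mu:=\min\{|\delta_G(\mathcal P)|:|\mathcal P|\ge 2\}$, at most $n^{O(\alpha)}$ partitions $\mathcal P$ satisfy $|\delta_G(\mathcal P)|\le\alpha\mu$. I would prove this by a Karger--Stein-style random-contraction argument adapted to hedges, reusing the machinery developed for Theorem~\ref{theorem:random-sampling}: repeatedly contract a random hedge (sampled so as to control how fast the vertex count shrinks relative to the representation size $p=\sum_{e\in E}\sum_{h\in e}|h|$), fix a target partition $\mathcal P$ with $|\delta_G(\mathcal P)|\le\alpha\mu$, and lower bound the probability that no hedge of $\delta_G(\mathcal P)$ is ever contracted before the vertex set collapses onto the parts of $\mathcal P$, using that while more than $|\mathcal P|$ super-vertices remain the number of crossing hedges is at least $\mu$. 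This yields a per-partition survival probability $n^{-O(\alpha)}$; since survival events of distinct partitions with more than two parts overlap, a plain union bound does not suffice and one finishes with the recursive Karger--Stein amplification to get the $n^{O(\alpha)}$ count. An alternative route goes through weak partition connectivity: by Lemma~\ref{lemma:wpc-and-connectivity} small partition cuts cannot lie far below $\wpc_G$, and the polymatroid strength viewpoint lets one peel off one part of $\mathcal P$ at a time across near-minimum hedge cuts and recurse, again with a Karger-type count at each level.

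The main obstacle is precisely this counting statement, and the difficulty is what makes hedgegraphs awkward throughout the paper: the cut function is not submodular, a single hedge may carry many hyperedges so contracting it wastes vertex-count reductions, and $\mu$ does not behave under contraction the way graph min-cut does. I would address this by choosing the contraction distribution carefully --- weighting by hedge size, or contracting one hyperedge at a time --- and by charging against $p$ rather than $|E|$, in the same spirit in which Theorem~\ref{theorem:random-sampling} sharpens the $O(\log^2 n/\lambda)$ sampling bound of \cite{GKP17} to $O(\log n/\lambda)$; the essential point is that the final count must be $n^{O(\alpha)}$ with no stray $\log$ or $p$ factor, as that is what keeps the sparsifier support at $O(n\log n/\varepsilon^2)$. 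Everything else is bookkeeping, and the polynomial running time follows because Quanrud's construction is efficient given the partition-connectivity oracle of Theorem~\ref{theorem:partition-connectivity}.
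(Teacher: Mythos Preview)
Your reduction is right: quotients of $f_G$ are exactly the sets $\delta_G(\mathcal P)$, so a quotient sparsifier of $(f_G,w)$ is a partition sparsifier of $(G,w)$, and closure under contraction/restriction lets you feed this into Quanrud's framework. Where you diverge from the paper is the counting step, and there you are making it much harder than necessary.

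The paper does \emph{not} run a fresh Karger--Stein analysis for hedges. It reuses Quanrud's own random process: repeatedly sample an element with probability proportional to $w$, add it to $R$, and replace $R$ by $\text{span}(R)$, stopping once $f(R)\ge f(E)-t-1$. Quanrud already proves, for an arbitrary polymatroid, that any fixed quotient $Q$ with $w(Q)\le t\,\kappa_w(f)$ ends up disjoint from $R$ with probability at least $1/\binom{f(E)}{t}$. The only thing left is to bound the number of quotients disjoint from $R$, and here the very structural fact you proved for the reduction finishes the job: $R$ is closed, so $E\setminus R=\delta(\mathcal P_R)$ for the partition $\mathcal P_R$ into components of $(V,R)$, which has at most $t+2$ parts since $f(R)\ge n-1-(t+1)$. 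Every quotient contained in $E\setminus R$ is then $\delta(\mathcal P')$ for some coarsening $\mathcal P'$ of $\mathcal P_R$, and there are at most $(t+2)^{O(t)}$ coarsenings of a $(t+2)$-part partition. Multiplying gives $n^{O(t)}$ quotients of weight at most $t\,\kappa_w(f)$, which is exactly the hypothesis of Quanrud's theorem. No hedge-by-hedge contraction, no control of how fast the vertex count shrinks, no worry about stray $p$ or $\log$ factors.

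Two related issues in your write-up: first, the reference quantity in Quanrud's hypothesis is the strength $\kappa_w(f)$, not your $\mu=\min_{|\mathcal P|\ge2}|\delta(\mathcal P)|$, which equals the hedge connectivity $\lambda$; since $\kappa_w(f)\le\lambda$ your proposed statement is stronger than needed and correspondingly harder to establish. Second, the machinery behind Theorem~\ref{theorem:random-sampling} is independent sampling against the functional strength $k^*(f)$, not a contraction process, so it does not transfer to the counting problem in the way you suggest. Your plan is not wrong in spirit, but the obstacle you flag is self-imposed: the coarsening observation you already made for the flat--partition correspondence is exactly what collapses the count.
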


Quanrud \cite{quanrud2024quotient} further showed that a partition sparsifier for hypergraphs can be computed in near-linear runtime via a push-relabel algorithm. We believe that the same ideas would also extend to hedgegraphs.
Partition sparsification has natural algorithmic applications in reducing the size of the hedgegraph for various problems. We do not explore concrete applications of partition sparsifiers in this work.
\subsection{Related Work}\label{sec:related-work}
As mentioned before, $\{s,t\}$-connectivity is intractable: it is NP-hard to compute even if each hedge consists of two edges \cite{ZF16}, NP-hard to approximate within a factor of $2^{\log^{1-1/\log{\log^c{n}}}{n}}$ for every constant $c<1/2$ where $n$ is the number of vertices in the input hedgegraph \cite{ZCTZ11}, and W[2]-hard when parameterized by $\lambda$ \cite{FGK10}. 
Moreover, connectivity is ETH-hard to compute in time $(np)^{o(\log{n}/(\log{\log{n}})^2)}$, where $p$ is the representation size of the hedgegraph \cite{JLMPS23}. 

Despite the above-mentioned hardness results, certain positive results are known about the problem of computing connectivity. There exists a randomized polynomial-time approximation scheme as well as a randomized quasi polynomial-time algorithm, i.e. a randomized algorithm that runs in time $n^{O(\log{\lambda})}$ \cite{GKP17, FPZ23}. For hedgegraphs in which the number of hedges with at least two hyperedges is at most $m_0$, there exists a $2^{m_0} \cdot n^{O(1)}$-time deterministic algorithm \cite{CPRV16}. For hedgegraphs in which each hedge has at most a constant number of hyperedges, there exists a randomized polynomial time algorithm \cite{CXY19}. From the FPT perspective, there exists a randomized algorithm to compute connectivity in time $2^{\lambda}(n+m)^{O(1)}$, where $m$ is the number of hedges \cite{FGKLS25}. Some of these algorithmic results also extend to hedgegraph $k$-cut for fixed $k$: the goal in hedgegraph $k$-cut is to find a partition of the vertex set of a hedgegraph into $k$ non-empty parts $V_1, V_2, \ldots, V_k$ in order to minimize the number of hedges with at least one hyperedge intersecting distinct parts. 

\subsection{Preliminaries}\label{section:preliminaries}

\paragraph{Hedgegraphs.} Let $G=(V,E)$ be a hedgegraph. Throughout this work, we work under a simplifying assumption:  for every hedge $e\in E$, all hyperedges contained in $e$ are vertex-disjoint. 
If two hyperedges $h_1, h_2 \in e$ contained in a hedge $e$ share a vertex, i.e., $h_1 \cap h_2 \neq \emptyset$, then we can replace them with the hyperedge $h:=h_1 \cup h_2$ where the new hyperedge $h$ is contained in $e$. 
This operation does not change connectivity, partition connectivity, and weak partition connectivity. 

\paragraph{Matroids.}
A \emph{matroid} is defined by a tuple $\mathcal{M}=(\mathcal{N}, \mathcal{I})$, where $\mathcal{N}$ is a finite set and $\mathcal{I}\subseteq 2^{\mathcal{N}}$ is a collection satisfying the following two axioms: (1) if $B\in \mathcal{I}$ and $A\subseteq B$, then $A\in \mathcal{I}$ and (2) if $A, B\in \mathcal{I}$ and $|B|>|A|$, then there exists $e\in B\setminus A$ such that $A\cup\{e\}\in \mathcal{I}$. The rank function $r:2^\mathcal{N}\rightarrow \Z_{\ge 0}$ of a matroid $\mathcal{M}=(\mathcal{N}, \mathcal{I})$ is defined by $r(U):=\max\{|A|: A\subseteq U, A\in \mathcal{I}\}$ for every $U\subseteq \mathcal{N}$. We refer the reader to \cite{Frank-book} for further background on matroids. 

\paragraph{Polymatroids.} A \emph{polymatroid} $f:2^{\cN}\rightarrow \Zg$ on a ground set $\cN$ is an integer-valued monotone submodular function with $f(\emptyset)=0$. A function $f: 2^{\cN}\rightarrow \Zg$ is \emph{monotone} if $f(A)\leq f(B)$ for every $A\subseteq B$ and \emph{submodular} if $f(A)+f(B)\geq f(A\cup B)+f(A\cap B)$ for every $A,B\subseteq \cN$. A subset $S\subseteq \mathcal{N}$ is a \emph{base} of the polymatroid if $f(S)=f(\mathcal{N})$. 
For a weight function $w: \mathcal{N}\rightarrow \mathbb{R}_{\ge 0}$ and a subset $A\subseteq \mathcal{N}$, we define $w(A):=\sum_{e\in A}w(e)$. 
We will need two different notions of strength in a polymatroid. 

\begin{definition}
Let $f:2^{\mathcal{N}}\rightarrow \mathbb{Z}_{\geq 0}$ be a polymatroid. 
\begin{enumerate}
\item The \emph{functional-strength} of $f$ is (with the convention that $0/0=+\infty$): 
$$k^*(f):=\min_{A\subseteq \mathcal{N}} \left\lfloor \frac{\sum_{e\in \mathcal{N}}\left(f(A+e)-f(A)\right)}{f(\mathcal{N})-f(A)}\right\rfloor.$$
\item Let $w:\cN\rightarrow \mathbb{R}_{\geq 0}$ be weights on the elements. 
The \emph{$w$-strength of $f$} is (with the convention that $0/0=+\infty$): 
$$\kappa_{w}(f):=\min_{A\subseteq \mathcal{N}}\frac{w(\mathcal{N})-w(A)}{f(\mathcal{N})-f(A)}.$$
\end{enumerate}
\end{definition}

For matroid rank functions $f$, 
$k^*(f) = \lfloor \kappa_{\overrightarrow{\mathbf{1}}}(f)\rfloor$, where $\overrightarrow{\mathbf{1}}$ is the unit weight function, and moreover, both these quantities are equal to the maximum number of disjoint bases in the matroid \cite{Edm65-partition}. For polymatroids $f$, Călinescu, Chekuri, and Vondrák \cite{cualinescu2009disjoint} showed that $k^*(f)$ is within a $O(\log{f(\mathcal{N})})$-factor of the maximum number of disjoint bases. They also showed that $k^*(f)$ is NP-hard to approximate within a $o(\log{f(\mathcal{N})})$-factor while also giving a deterministic algorithm to approximate $k^*(f)$ within a $O(\log{f(\mathcal{N})})$-factor.

\begin{lemma}\cite{cualinescu2009disjoint}\label{lemma:functional-strength-deterministic}
    Given a polymatroid $f: 2^{\mathcal{N}}\rightarrow \mathbb{R}$ by its evaluation oracle, there is a deterministic polynomial time algorithm to compute an $O(\log f(\cN))$-approximation for $k^*(f)$. 
    
\end{lemma}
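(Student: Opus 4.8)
The plan is to reduce approximating $k^*(f)$ to the problem of finding many pairwise-disjoint \emph{bases} of $f$ (sets $B\subseteq\cN$ with $f(B)=f(\cN)$), and to find such bases by derandomizing a uniformly random partition of $\cN$. The reduction rests on two facts. (i) If $B_1,\dots,B_\ell$ are pairwise-disjoint bases then $k^*(f)\ge\ell$: for any $A\subseteq\cN$ with $f(A)<f(\cN)$, submodularity gives $\sum_{e\in B_i}\bigl(f(A+e)-f(A)\bigr)\ge f(A\cup B_i)-f(A)\ge f(\cN)-f(A)$ for each $i$, and summing over $i$ (the $B_i$ are disjoint and each marginal is nonnegative) yields $\sum_{e\in\cN}\bigl(f(A+e)-f(A)\bigr)\ge\ell\,\bigl(f(\cN)-f(A)\bigr)$. (ii) Conversely, $f$ contains $\Omega\!\bigl(k^*(f)/\log f(\cN)\bigr)$ pairwise-disjoint bases, and that many can in fact be produced in deterministic polynomial time; this is the structural/algorithmic core of \cite{cualinescu2009disjoint}. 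Combining (i) and (ii), if $\ell$ denotes the number of disjoint bases the algorithm actually exhibits, then $\ell\le k^*(f)$ always, while by running the base-finding subroutine below over the right range of target values we can force $\ell=\Omega\!\bigl(k^*(f)/\log f(\cN)\bigr)$; hence $\ell$ is an $O(\log f(\cN))$-approximation of $k^*(f)$.

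Concretely, assume $f(\cN)>0$ (else every set is a base and $k^*(f)=+\infty$). For each integer $t\in\{1,\dots,|\cN|\}$ run a subroutine that attempts to partition $\cN$ into $t$ classes, each of which is a base, and output the largest $t$ for which it succeeds (note $t=1$ trivially succeeds). The design and analysis of this subroutine is the substantive part. Consider assigning each element of $\cN$ independently to a uniformly random class. The key structural input is that every non-spanning set $A$ has $|\cN\setminus A|\ge k^*(f)$: since each marginal satisfies $f(A+e)-f(A)\le f(\cN)-f(A)$ by monotonicity, we get $k^*(f)\le \frac{\sum_{e\in\cN}(f(A+e)-f(A))}{f(\cN)-f(A)}\le|\cN\setminus A|$. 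Hence the probability that a fixed class misses $\cN\setminus A$ entirely is at most $(1-1/t)^{k^*(f)}\le e^{-k^*(f)/t}$, which for $t=\Theta\!\bigl(k^*(f)/\log f(\cN)\bigr)$ is $f(\cN)^{-\Omega(1)}$; a union bound over a suitable polynomial-size family of maximal non-spanning sets (together with a Chernoff bound tracking the marginal mass collected inside each class) then shows that all $t$ classes are simultaneously bases with positive probability. Finally one derandomizes this via the method of conditional expectations, fixing the class of each element in turn so as not to increase a pessimistic estimator of the failure probability; the subroutine reports success exactly when this estimator stays below $1$, in which case it has deterministically output $t$ disjoint bases.

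The output $\ell$ is then $\le k^*(f)$ by (i), and is $\ge\Omega\!\bigl(k^*(f)/\log f(\cN)\bigr)$ because the subroutine is guaranteed to succeed whenever $t\le\Omega\!\bigl(k^*(f)/\log f(\cN)\bigr)$; hence $\ell$ is the desired $O(\log f(\cN))$-approximation, and the whole procedure is deterministic and makes only polynomially many evaluation-oracle calls. I expect the main obstacle to be precisely the concentration-plus-union-bound step: a uniformly random $t$-partition can fail in many ways, and one must isolate a polynomial-size certifying family of non-spanning sets to union-bound over (a priori there could be exponentially many maximal non-spanning sets) and then make that argument constructive. A secondary point worth flagging is that one cannot hope to compute $\nu(f)$, the true maximum number of disjoint bases, exactly in polynomial time (that problem is NP-hard), which is why the algorithm settles for the approximate count $\ell$, and why the lemma promises only an $O(\log f(\cN))$ factor rather than an exact value.
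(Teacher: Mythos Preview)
The paper does not prove this lemma; it is quoted from \cite{cualinescu2009disjoint}. So the relevant comparison is between your sketch and the actual CCV argument (a version of which appears in the paper as Lemma~\ref{lemma:sampling-property}).

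Your reduction is correct: part (i) is exactly right, and the overall plan of outputting the largest $t$ for which one can exhibit $t$ disjoint bases is the CCV strategy. The gap is in your analysis of the random $t$-partition. You bound, for each non-spanning closed set $A$, the probability that a given class lands entirely inside $A$ by $(1-1/t)^{|\cN\setminus A|}\le e^{-k^*(f)/t}$, and then propose to union bound over ``a suitable polynomial-size family of maximal non-spanning sets.'' No such family exists in general: already for the graphic matroid of $K_n$ (where $f(\cN)=n-1$) the maximal non-spanning closed sets are the $2^{n-1}-1$ nontrivial vertex bipartitions, so the union bound blows up exponentially in $f(\cN)$. You correctly flag this as the main obstacle, but the proposal does not offer a way around it, and your suggested fix does not work.

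The CCV proof avoids any enumeration of closed sets. Instead, it processes elements in a random order, and for a random class $S$ it tracks the potential $f(\cN)-f(S_i)$ after $i$ steps; using the definition of $k^*(f)$ one gets $\mathbb{E}[f(\cN)-f(S_i)\mid S_{i-1}]\le\bigl(1-\tfrac{p\,k^*(f)}{|\cN|}\bigr)\bigl(f(\cN)-f(S_{i-1})\bigr)$, so the potential decays geometrically and Markov's inequality finishes (this is exactly the computation in the appendix proof of Lemma~\ref{lemma:sampling-property}). This potential argument derandomizes cleanly via conditional expectations, because the estimator is an explicit product that can be maintained element by element; applying it simultaneously to all $t$ classes with $p=1/t$ yields the deterministic disjoint-base construction. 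Your derandomization plan is fine once you swap in this potential-based analysis for the union-bound analysis.
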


The following lemma shows that $\kappa_w(f)$ can be computed in polynomial time. 

\begin{lemma}\label{lemma:strength-in-polynomial-time}
    Given a polymatroid $f: 2^\mathcal{N}\rightarrow \mathbb{R}$ by its evaluation oracle and integer weights $w: \mathcal{N}\rightarrow \Zg$, there exists a strongly polynomial-time algorithm to compute $\kappa_w(f)$. 
\end{lemma}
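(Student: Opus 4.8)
The plan is to reduce the computation of $\kappa_w(f)$ to a strongly polynomial number of submodular minimizations via a discrete Newton (Dinkelbach) iteration. For a parameter $\lambda\ge 0$, consider the set function $h_\lambda(A):=\lambda f(A)-w(A)$ on $2^{\mathcal{N}}$; it is submodular, being a scaled copy of the submodular function $f$ plus the modular function $-w$. Write $\ell(\lambda):=\lambda f(\mathcal{N})-w(\mathcal{N})=h_\lambda(\mathcal{N})$. The first step is the elementary observation (obtained just by clearing denominators in the definition of $\kappa_w$) that $\min_{A\subseteq\mathcal{N}}h_\lambda(A)=\ell(\lambda)$ if and only if $\kappa_w(f)\ge\lambda$; and, moreover, whenever $\min_A h_\lambda(A)<\ell(\lambda)$, any minimizer $A$ of $h_\lambda$ automatically satisfies $f(A)<f(\mathcal{N})$ (because $w(A)\le w(\mathcal{N})$ forces $\lambda(f(\mathcal{N})-f(A))>0$) and certifies $\kappa_w(f)\le\frac{w(\mathcal{N})-w(A)}{f(\mathcal{N})-f(A)}<\lambda$. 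Thus one submodular minimization of $h_\lambda$ both decides the inequality $\kappa_w(f)\ge\lambda$ and, when it fails, returns a strictly better feasible ratio.

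With this in hand the algorithm is: if $f(\mathcal{N})=0$ output $+\infty$; otherwise start with $\lambda_0:=w(\mathcal{N})/f(\mathcal{N})$ (the ratio attained by $A=\emptyset$). At step $i$, compute a maximal minimizer $A_i$ of $h_{\lambda_i}$ by submodular function minimization; if $h_{\lambda_i}(A_i)=\ell(\lambda_i)$, output $\lambda_i$ and halt; otherwise set $\lambda_{i+1}:=\frac{w(\mathcal{N})-w(A_i)}{f(\mathcal{N})-f(A_i)}$ and repeat. By the observation above, the loop maintains the invariant $\lambda_i\ge\kappa_w(f)$ (each $\lambda_i$ is a ratio attained by a feasible set), and the halting condition is exactly the statement $\kappa_w(f)\ge\lambda_i$, which together with the invariant gives $\lambda_i=\kappa_w(f)$. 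So correctness is immediate once the loop terminates.

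It remains to bound the number of iterations, and this is the only step that requires real work. One first checks $\lambda_{i+1}<\lambda_i$ strictly, directly from $h_{\lambda_i}(A_i)<\ell(\lambda_i)$. The key claim is that the maximal minimizers form a strictly increasing chain $A_0\subsetneq A_1\subsetneq\cdots\subsetneq\mathcal{N}$. For the containment $A_i\subseteq A_{i+1}$, write $B:=A_{i+1}$ and suppose $A_i\not\subseteq B$. Since $A_i$ minimizes $h_{\lambda_i}$ and $A_i\cap B\subseteq A_i$, we get $\lambda_i\big(f(A_i)-f(A_i\cap B)\big)\le w(A_i\setminus B)$; as $f$ is monotone this left side is a nonnegative multiple of $\lambda_i$, so replacing $\lambda_i$ by the smaller $\lambda_{i+1}$ preserves the inequality, yielding $h_{\lambda_{i+1}}(A_i)\le h_{\lambda_{i+1}}(A_i\cap B)$. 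Submodularity of $h_{\lambda_{i+1}}$ then gives $h_{\lambda_{i+1}}(A_i\cup B)\le h_{\lambda_{i+1}}(B)=\min h_{\lambda_{i+1}}$, so $A_i\cup B$ is also a minimizer, contradicting the maximality of $B=A_{i+1}$. Strictness follows because $h_{\lambda_{i+1}}(A_i)=\ell(\lambda_{i+1})$ by the very choice of $\lambda_{i+1}$, whereas the loop continues past step $i+1$ only if $h_{\lambda_{i+1}}(A_{i+1})<\ell(\lambda_{i+1})$; and in Case~A each $A_i\ne\mathcal{N}$. Hence there are at most $|\mathcal{N}|$ iterations, each consisting of one submodular function minimization (returning a maximal minimizer, which is well defined since minimizers of a submodular function form a lattice, and obtainable with polynomially many extra SFM calls), all in strongly polynomial time.

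The main obstacle is precisely this uncrossing / monotone-minimizer argument bounding the iteration count; everything else is bookkeeping around the definition of $\kappa_w$. (An alternative route would be Megiddo's parametric-search technique applied to a strongly polynomial submodular minimization algorithm run with $\lambda$ treated symbolically, but the Newton iteration above is more direct and self-contained.)
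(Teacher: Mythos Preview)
Your proof is correct. The approach, however, differs from the paper's. The paper makes the substitution $B:=\mathcal{N}\setminus A$ and rewrites $\kappa_w(f)=\min_{B\subseteq\mathcal{N}} w(B)/g(B)$ with $g(B):=f(\mathcal{N})-f(\mathcal{N}\setminus B)$, then simply cites a black-box strongly polynomial algorithm for this ratio problem (parametric submodular minimization, \cite{Nag07}). Your route is to run discrete Newton/Dinkelbach directly on the original formulation and prove the $O(|\mathcal{N}|)$ iteration bound yourself via the uncrossing argument that the maximal minimizers of the successive $h_{\lambda_i}$ form a strictly increasing chain. What the paper's approach buys is brevity; what yours buys is self-containment---you actually exhibit the algorithm and its analysis rather than deferring to an external result (and in fact your argument is essentially what underlies the cited result anyway).

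Two cosmetic remarks. The phrase ``and in Case~A each $A_i\ne\mathcal{N}$'' appears to be a dangling reference from an earlier draft; the intended content (that $A_i\ne\mathcal{N}$ whenever the loop does not halt at step $i$, since $h_{\lambda_i}(A_i)<\ell(\lambda_i)=h_{\lambda_i}(\mathcal{N})$) is correct and you should just state it plainly. Also, you might note explicitly that each $\lambda_i$ is rational with numerator and denominator given by evaluations of $w$ and $f$, so that evaluating $h_{\lambda_i}$ exactly poses no difficulty for the SFM oracle.
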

\begin{proof}
    Consider the function $g:2^V\rightarrow \Z_{\ge 0}$ defined as $g(A):=f(\mathcal{N})-f(\mathcal{N}\setminus A)$ for every $A\subseteq \mathcal{N}$. Then, $\frac{w(\cN)-w(A)}{f(\cN)-f(A)} = \frac{w(\cN\setminus A)}{g(\cN\setminus A)}$ for every $A\subseteq \cN$. Thus, $\kappa_w(f) = \min_{A\subseteq \cN} \frac{w(\cN)-w(A)}{f(\cN)-f(A)} = \min_{A\subseteq \cN} \frac{w(\cN\setminus A)}{g(\cN\setminus A)}=\min_{B\subseteq \cN} \frac{w(B)}{g(B)}$. We observe that the function $g$ is monotone, submodular, and has $g(\emptyset)=0$. Consequently, the problem $\min_{B\subseteq \cN} \frac{w(B)}{g(B)}$ can be solved in strongly polynomial time via parametric submodular minimization \cite{Nag07}. 

\end{proof}

\section{Partition Connectivity of Hedgegraphs}\label{section:partition-connectivity}
In this section, we prove that partition connectivity of hedgegraphs can be computed in polynomial time.
The following lemma relates partition connectivity of a hedgegraph to the weighted strength of the hedgegraph polymatroid.

\begin{lemma}\label{lemma:partition-connectivity-equals-strength}
    Let $G=(V,E)$ be a connected hedgegraph and $f:2^E\rightarrow \Zg$ be the associated hedgegraph polymatroid.  
    Then, the partition connectivity of $G$ is 
    $ \pc_G=\lfloor\kappa_{\overrightarrow{\mathbf{1}}}(f)\rfloor$, where $\overrightarrow{\mathbf{1}}$ is the unit weight function.
\end{lemma}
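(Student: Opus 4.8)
The plan is to show that the two quantities $\pc_G$ and $\lfloor\kappa_{\overrightarrow{\mathbf 1}}(f)\rfloor$ arise from minimizing essentially the same ratio, once we translate between partitions of $V$ and subsets of $E$. First I would unfold the definition of $\kappa_{\overrightarrow{\mathbf 1}}(f)$: for a subset $A\subseteq E$, the numerator is $|E|-|A|=|E\setminus A|$ and the denominator is $f(E)-f(A)=\big(|V|-\Comp(V,E)\big)-\big(|V|-\Comp(V,A)\big)=\Comp(V,A)-\Comp(V,E)=\Comp(V,A)-1$, using connectedness of $G$. So $\kappa_{\overrightarrow{\mathbf 1}}(f)=\min_{A\subseteq E:\,\Comp(V,A)>1}\frac{|E\setminus A|}{\Comp(V,A)-1}$, and taking floors commutes with the min over a finite set.

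Next I would set up the correspondence between partitions and subsets. Given a partition $\mathcal P$ of $V$, let $A:=E[\mathcal P]=E\setminus\delta(\mathcal P)$ be the hedges not crossing $\mathcal P$; then $|E\setminus A|=|\delta(\mathcal P)|$. The key geometric fact is that for this $A$ we have $\Comp(V,A)\ge|\mathcal P|$ (every connected component of $(V,A)$ lies within a single part of $\mathcal P$, since no hedge of $A$ crosses $\mathcal P$), hence $\frac{|\delta(\mathcal P)|}{|\mathcal P|-1}\ge\frac{|E\setminus A|}{\Comp(V,A)-1}\ge\kappa_{\overrightarrow{\mathbf 1}}(f)$; taking the floor and the min over $\mathcal P$ gives $\pc_G\ge\lfloor\kappa_{\overrightarrow{\mathbf 1}}(f)\rfloor$. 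For the reverse direction, given any $A\subseteq E$ with $\Comp(V,A)>1$, let $\mathcal P$ be the partition of $V$ into the connected components of $(V,A)$. Then every hedge in $A$ is non-crossing for $\mathcal P$, so $\delta(\mathcal P)\subseteq E\setminus A$, giving $|\delta(\mathcal P)|\le|E\setminus A|$, while $|\mathcal P|=\Comp(V,A)$. Hence $\frac{|\delta(\mathcal P)|}{|\mathcal P|-1}\le\frac{|E\setminus A|}{\Comp(V,A)-1}$, so $\pc_G\le\lfloor\frac{|E\setminus A|}{\Comp(V,A)-1}\rfloor$ for every such $A$, and minimizing over $A$ yields $\pc_G\le\lfloor\kappa_{\overrightarrow{\mathbf 1}}(f)\rfloor$. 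Combining the two inequalities proves the lemma.

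One subtlety to handle carefully is the floor: since $\pc_G=\min_{\mathcal P}\lfloor\frac{|\delta(\mathcal P)|}{|\mathcal P|-1}\rfloor$ whereas $\lfloor\kappa_{\overrightarrow{\mathbf 1}}(f)\rfloor=\lfloor\min_A\frac{|E\setminus A|}{\Comp(V,A)-1}\rfloor$, I would note that $\min$ commutes with $\lfloor\cdot\rfloor$ over a finite family of nonnegative rationals, so both equal $\lfloor\min_{\mathcal P}\frac{|\delta(\mathcal P)|}{|\mathcal P|-1}\rfloor$ once we know the unfloored minima coincide; alternatively, prove $\min_{\mathcal P}\frac{|\delta(\mathcal P)|}{|\mathcal P|-1}=\min_{A}\frac{|E\setminus A|}{\Comp(V,A)-1}$ directly from the two inequalities above (each applied without the floor, which the argument already supports) and then floor both sides. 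I expect the main obstacle to be purely bookkeeping — namely making the "$A\mapsto$ components partition" and "$\mathcal P\mapsto E[\mathcal P]$" maps line up so that the ratio only improves or stays equal in each direction, and checking the degenerate cases ($\Comp(V,A)=1$, i.e.\ $A$ spanning, which is excluded by the denominator convention, and the trivial partition $|\mathcal P|=1$, likewise excluded). There is no deep combinatorics here; the content is entirely in the identity $f(E)-f(A)=\Comp(V,A)-1$ together with the observation that non-crossing hedges keep components inside parts.
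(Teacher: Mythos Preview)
Your proposal is correct and follows essentially the same route as the paper: both directions are obtained via the correspondence $\mathcal P\mapsto A=E\setminus\delta(\mathcal P)$ (using $\Comp(V,A)\ge|\mathcal P|$) and $A\mapsto$ the partition into connected components of $(V,A)$ (using $\delta(\mathcal P)\subseteq E\setminus A$ and $|\mathcal P|=\Comp(V,A)$). Your handling of the floor is slightly more explicit than the paper's, but the argument is otherwise identical.
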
 
\begin{proof}

    We first show that $\pc_G\geq \lfloor\kappa_{\overrightarrow{\mathbf{1}}}(f)\rfloor$. Let $\mathcal{P}$ be a partition of $V$ such that $\pc_G=\lfloor\frac{|\delta(\mathcal{P})|}{|\mathcal{P}|-1}\rfloor$. In the subhedgegraph $(V,E\setminus \delta(\mathcal{P}))$, the number of connected components is at least $|\mathcal{P}|$, because for every hedge $e\in E\setminus \delta(\mathcal{P})$, there is no hyperedge in $e$ intersecting different parts of the partition $\mathcal{P}$. This implies that
    $$\begin{aligned}
        f(E)-f(E\setminus \delta(\mathcal{P})) &= (|V|-1) - (|V|-\Comp(V, E\setminus \delta(\mathcal{P}))) \ \ \text{(since $G$ is connected)}\\
        &= \Comp(V, E\setminus \delta(\mathcal{P})) - 1 \\
        &\geq |\mathcal{P}|-1.
    \end{aligned}$$
    Hence, 
    $$\begin{aligned}
        \pc_G=\left\lfloor\frac{|\delta(\mathcal{P})|}{|\mathcal{P}|-1}\right\rfloor
        \geq \left\lfloor\frac{|E|-|E\setminus\delta(\mathcal{P})|}{f(E)-f(E\setminus \delta(\mathcal{P}))} \right\rfloor
        \geq \min_{A\subseteq E}\left \lfloor \frac{|E\setminus A|}{f(E)-f(A)}\right \rfloor = \lfloor\kappa_{\overrightarrow{\mathbf{1}}}(f)\rfloor.
    \end{aligned}$$

    We now show that $\lfloor\kappa_{\overrightarrow{\mathbf{1}}}(f)\rfloor\geq \pc_G$. Let $A\subseteq E$ be a set of hedges such that $\kappa_{\overrightarrow{\mathbf{1}}}(f)=\frac{|E\setminus A|}{f(E)-f(A)}$. Let $\mathcal{P}$ be the partition of $V$ where each part corresponds to a connected component in the subhedgegraph $(V,A)$. Hence, $|\mathcal{P}|=\Comp(V, A)$. This implies that
    $$\begin{aligned}
        f(E)-f(A) &= (|V|-1) - (|V|-\Comp(V, A)) \ \ \text{(since $G$ is connected)}\\
        &= \Comp(V, A) -1 \\
        &= |\mathcal{P}|-1.
    \end{aligned}$$
    We also note that for every hedge $e\in \delta(\mathcal{P})$, at least one hyperedge in $e$ intersects at least two distinct parts of the partition $\mathcal{P}$, which implies that $e\not\in A$. Hence, $|\delta(\mathcal{P})|\leq |E\setminus A|$. Therefore,
    $$\begin{aligned}
        \lfloor\kappa_{\overrightarrow{\mathbf{1}}}(f)\rfloor=\left \lfloor\frac{|E\setminus A|}{f(E)-f(A)}\right\rfloor
        = \left \lfloor\frac{|E\setminus A|}{|\mathcal{P}|-1}\right \rfloor 
        \geq \left\lfloor\frac{|\delta(\mathcal{P})|}{|\mathcal{P}|-1} \right \rfloor
        \geq \pc_G.
    \end{aligned}$$
\end{proof}

Lemmas~\ref{lemma:strength-in-polynomial-time} and \ref{lemma:partition-connectivity-equals-strength} together imply Theorem~\ref{theorem:partition-connectivity} as follows: Let $G=(V, E)$ be a hedgegraph with $f:2^E\rightarrow \Z_{\ge 0}$ being the associated hedgegraph polymatroid. 
If $G$ is disconnected, then $\pc_G=0$. Suppose $G$ is connected. 
By Lemma~\ref{lemma:partition-connectivity-equals-strength}, we have that $\pc_G = \kappa_{\overrightarrow{\mathbf{1}}}(f)$, where $\overrightarrow{\mathbf{1}}$ is the unit weight function. 
    We have a polynomial time evaluation oracle for the hedgegraph polymatroid $f$. 
    By Lemma~\ref{lemma:strength-in-polynomial-time}, $\kappa_{\overrightarrow{\mathbf{1}}}(f)$ can be computed in polynomial time. 

\section{Hedgegraph Decompositions}\label{section:hedgegraph-decompose}



In this section, we prove Theorems~\ref{theorem:hedgegraph-decomposition},  \ref{theorem:trimming} and \ref{theorem:base-covering} and Corollary \ref{corollary:orientation}. Our proof is by associating a matroid with a hedgegraph. 
Let $G=(V, E)$ be a hedgegraph. Based on the hedgegraph polymatroid $f:2^E\rightarrow \Zg$, we define the function $r:2^E\rightarrow \Zg$ as follows: 
\[
    r(A):=\min\{f(B)+|A \setminus B|: B\subseteq A\}\ \forall\ A\subseteq E. 
\]
It is known that the function $r$ is the rank function of a matroid via standard results on integer-valued polymatroids \cite{edmonds1966submodular}. 
Now, even though we have a matroid based on a hedgegraph, the independent sets of this matroid are not immediately clear. With additional work, we were able to show that the collection of independent sets of this matroid is 
\[
\mathcal{I}_G:=\{F\subseteq E: \text{hedgegraph $(V, F)$ can be trimmed into a forest}\}.
\]

Our proof of the trimming based characterization of independent sets of the matroid with rank function $r$ is rather long. Instead, we take an alternative route to explain the matroid associated with a hedgegraph in this section. 
We first show that $\mathcal{I}_G$ corresponds to the independent sets of a matroid and subsequently, show that $r$ is the rank function of this matroid---see Section \ref{subsection:hedgegraph-matroid}. 
Using these facts, we will prove Theorems~\ref{theorem:hedgegraph-decomposition},  \ref{theorem:trimming} and \ref{theorem:base-covering} and Corollary \ref{corollary:orientation} in Section~\ref{subsection:hedgegraph-decomposition}.


\subsection{Hedgegraph Matroid: Independent Sets and Rank Function}\label{subsection:hedgegraph-matroid}

Let $G=(V, E)$ be a hedgegraph. In this section, we define a matroid associated with a hedgegraph and prove a min-max relation for its rank function. 
For ease of notation, we denote a trimming of a subset $F\subseteq E$ of hedges by a function $Y_F: F\rightarrow \binom{V}{2}$ with the property that $Y_F(e)\subseteq h$ for some hyperedge $h\in e$. We first prove that $(E, \mathcal{I}_G)$ is a matroid.

\begin{lemma}\label{lemma:hedgegraph-matroid-property}
    Let $G=(V, E)$ be a hedgegraph. Then, $\mathcal{M}:=(E, \mathcal{I}_G)$ is a matroid, where: 
    \[
    \mathcal{I}_G:=\{F\subseteq E: \text{hedgegraph $(V, F)$ can be trimmed into a forest}\}.
    \]
\end{lemma}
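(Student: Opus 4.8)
The plan is to verify the two matroid axioms directly for $\mathcal{I}_G$: downward closure and the exchange axiom. Downward closure is immediate: if $(V,F)$ has a trimming $Y_F$ whose image graph is a forest, then for any $F'\subseteq F$ the restriction $Y_F|_{F'}$ is a trimming of $F'$ whose image is a subgraph of a forest, hence a forest; so $F'\in\mathcal{I}_G$. The bulk of the work is the exchange axiom. So first I would set up the right notion of a ``canonical'' trimming to reason about.

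The key idea is to reformulate membership in $\mathcal{I}_G$ as a statement about the hedgegraph polymatroid $f$, or more directly via a matroid-union / transversal-style argument. Concretely, for a subset $F\subseteq E$, a trimming picks one hyperedge $h_e\in e$ per hedge $e\in F$ and then one pair of vertices inside $h_e$. The image is a forest exactly when the chosen $|F|$ edges are independent in the cycle (graphic) matroid of the complete graph on $V$. I would therefore view $\mathcal{I}_G$ as the family of sets $F$ admitting a system of representative \emph{edges}, one from each hedge, that is independent in the graphic matroid $M(K_V)$ — but with the twist that the edge chosen from a hedge $e$ is constrained to lie within a single hyperedge of $e$. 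The cleanest route: for each hedge $e$, let $T_e:=\{\,\{u,v\} : u,v\in h \text{ for some } h\in e\,\}\subseteq\binom{V}{2}$ be the set of ``allowed'' trimming edges for $e$. Then $F\in\mathcal{I}_G$ iff one can choose $x_e\in T_e$ for each $e\in F$ so that $\{x_e : e\in F\}$ is a forest \emph{and} the $x_e$ are distinct — equivalently, iff there is a matching in the bipartite ``hedge–vs–edge-slot'' incidence structure saturating $F$ whose edge-side is independent in $M(K_V)$. This is precisely the setup of the matroid union / matroid matching framework: $\mathcal{I}_G$ is the family of sets $F$ such that $F$ has a perfect rainbow-independent transversal into $M(K_V)$ using the ground-set partition $\{T_e\}_{e\in E}$. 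Such families are the independent sets of the \emph{induced} matroid (the image of a ``partition-like'' matroid under the bipartite incidence, composed with $M(K_V)$), which is a matroid by the standard induction-of-a-matroid-through-a-bipartite-graph theorem (Perfect / Nash-Williams; see e.g. \cite{Frank-book}).

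More self-containedly, to avoid invoking the induced-matroid machinery as a black box, I would just prove the exchange axiom by an augmenting-path argument. Take $F_1,F_2\in\mathcal{I}_G$ with $|F_2|>|F_1|$, with trimmings $Y_1,Y_2$ realizing forests $T^1=Y_1(F_1)$ and $T^2=Y_2(F_2)$. Since $|T^2|>|T^1|$ and both are forests in $K_V$, the graphic matroid exchange gives an edge $x\in T^2\setminus T^1$ (coming from some hedge $e\in F_2\setminus F_1$, after a standard cleanup to assume $Y_1,Y_2$ agree on $F_1\cap F_2$ — this cleanup is where one either re-picks representatives or runs a short alternating argument) such that $T^1+x$ is a forest; then $Y_1$ extended by $e\mapsto x$ witnesses $F_1+e\in\mathcal{I}_G$. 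The only subtlety is handling hedges in $F_1\cap F_2$ that $Y_1$ and $Y_2$ trim differently: I would argue that we may first modify $Y_2$ (using the freedom in choosing the hyperedge and the pair within it, together with forest-matroid exchanges) so that $Y_2$ agrees with $Y_1$ on $F_1\cap F_2$ while keeping $|Y_2(F_2)|=|F_2|$ and the image a forest; this reduction to disjoint supports is the main obstacle and must be done carefully, since naively forcing agreement could create a cycle. Once supports agree on the overlap, the graphic-matroid augmentation on the symmetric difference goes through verbatim, completing the exchange axiom and hence the proof that $\mathcal{M}=(E,\mathcal{I}_G)$ is a matroid.
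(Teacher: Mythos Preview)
Your downward-closure argument and your first (black-box) approach are both correct. Viewing $\mathcal{I}_G$ as the family of subsets of $E$ admitting a system of distinct representatives in $\{T_e\}_{e\in E}$ that is independent in the graphic matroid $M(K_V)$ is exactly the ``matroid induced through a bipartite graph'' construction of Perfect and Rado, so $(E,\mathcal{I}_G)$ is a matroid. This is a legitimate alternative to the paper's direct argument; the paper in fact uses the same bipartite decomposition (graphic matroid $\mathcal{M}_1$ intersected with a partition matroid $\mathcal{M}_2$ on $U=\bigcup_e T_e$) later, to compute the rank function via matroid intersection, but for Lemma~\ref{lemma:hedgegraph-matroid-property} itself it gives an elementary exchange proof.

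Your self-contained route, however, has a real gap at the ``cleanup'' step. You propose to modify $Y_2$ so that it agrees with $Y_1$ on $F_1\cap F_2$ while keeping $Y_2(F_2)$ a forest of size $|F_2|$, and you flag this as the main obstacle---but it is in general impossible. Take $V=\{a,b,c\}$, hedges $e_1=\{\{a,b,c\}\}$ and $e_2=\{\{a,b\}\}$, $F_1=\{e_1\}$, $F_2=\{e_1,e_2\}$, with $Y_1(e_1)=\{a,b\}$ and $Y_2(e_1)=\{b,c\}$, $Y_2(e_2)=\{a,b\}$. Forcing $Y_2(e_1):=\{a,b\}$ leaves $Y_2(e_2)=\{a,b\}$ as the only option, producing parallel edges. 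So the reduction to agreeing trimmings by adjusting the \emph{larger} side does not go through.

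The paper's fix is to adjust the \emph{smaller} side instead, via a potential argument: fix $Y_B$ arbitrarily, and among all forest-trimmings $Y_A$ of $A$ choose one maximizing $\Phi(Y_A):=|\{e\in A\cap B:\ Y_A(e)=Y_B(e)\}|$. Now take any $Y_B$-edge $\{u,v\}=Y_B(e')$ joining two components of the $Y_A$-forest (exists since $|B|>|A|$). If $e'\in A$, redefining $Y_A(e'):=\{u,v\}$ still yields a forest (you only removed one edge and added a bridge) and strictly increases $\Phi$, contradicting maximality; hence $e'\in B\setminus A$ and $A+e'\in\mathcal{I}_G$. This potential trick is exactly what your sketch is missing to make the direct argument work.
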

\begin{proof}
    It suffices to show the two independent set axioms. 
    We first prove the hereditary axiom, i.e., if $B\in \mathcal{I}_G$ and $A\subseteq B$, then $A\in \mathcal{I}_G$. Since $B\in\mathcal{I}_G$, there exists a trimming $Y_B: B\rightarrow \binom{V}{2}$ such that $(V, \cup_{e\in B}Y_B(e))$ is a forest. Since $A\subseteq B$, we can construct a trimming $Y_A: A\rightarrow \binom{V}{2}$ that $Y_A(e)=Y_B(e)$ for every $e\in A$. Hence, $(V, \cup_{e\in A}Y_A(e))$ is a subgraph of $(V, \cup_{e\in B}Y_B(e))$, which is still a forest.

    We now prove the augmentation axiom, i.e., if $A, B\in\mathcal{I}_G$ and $|A|<|B|$, then there exists $e\in B\setminus A$ such that $A+e\in \mathcal{I}_G$. Let $Y_B: B\rightarrow \binom{V}{2}$ be a trimming of hedges in $B$ such that $(V, \cup_{e\in B}Y_B(e))$ is a forest. Since $A\in \mathcal{I}_G$, the hedgegraph $(V, A)$ can be trimmed into a forest. Among all such trimmings, we select a trimming $Y_A: A\rightarrow \binom{V}{2}$ that maximizes
    $$\Phi(Y_A):=|\{e\in A\cap B: Y_A(e)=Y_B(e)\}|.$$
    If there are multiple such trimmings, then we pick an arbitrary one.

    We define graphs $G_{Y_A}:=(V, \cup_{e\in A}Y_A(e))$ and $G_{Y_B}:=(V, \cup_{e\in B}Y_B(e))$. We note that $G_{Y_A}$ and $G_{Y_B}$ are forests. Hence,
    $$\Comp(G_{Y_A})=|V|-|A|>|V|-|B|=\Comp(G_{Y_B}),$$
    where $\Comp(G)$ is the number of connected components in graph $G$. Therefore, there exists an edge $\{u,v\}\in \cup_{e\in B}Y_B(e)$ such that $u$ and $v$ are in different components of $G_{Y_A}$. Let $e'\in B$ be the hedge that is trimmed to $\{u,v\}$, i.e., $Y_B(e')=\{u,v\}$.

    Suppose $e'\in A$. Then, $Y_A(e')\neq \{u,v\}$ since $u$ and $v$ are disconnected in $G_{Y_A}$. We note that the trimming 
    \[
    Y'_A(e):=
    \begin{cases}
        Y_A(e) \text{ if }e\in A\setminus \{e'\}, \\
        \{u,v\} \text{ if }e=e', 
    \end{cases}
    \]
    gives a trimming such that the graph $G_{Y'_A}:=(V, \cup_{e\in A}Y'_A(e))$ is a forest and $\Phi(Y'_A)> \Phi(Y_A)$, a contradiction. 
    Hence, $e'\not\in A$. This implies that $(V, A+e')$ can be trimmed into a forest via the trimming $Y:(A+e')\rightarrow \binom{V}{2}$, where
    $$\begin{aligned}
        Y(e)=
        \begin{cases}
            Y_A(e) & \text{if } e\in A,\\
            \{u,v\} & \text{otherwise.}
        \end{cases}
    \end{aligned}$$
    Therefore, $A+e'\in \mathcal{I}_G$.
\end{proof}

For a hedgegraph $G=(V, E)$, we call $\mathcal{M}:=(E, \mathcal{I}_G)$ as defined in Lemma~\ref{lemma:hedgegraph-matroid-property} to be the \emph{hedgegraph matroid} and let $r: 2^E\rightarrow \Zg$ be the rank function of the hedgegraph matroid. We now obtain a min-max relation for the rank function of a hedgegraph matroid by relating it to the hedgegraph polymatroid.

\begin{lemma}\label{lemma:hedgegraph-matroid-rank-function}
    Let $G=(V, E)$ be a hedgegraph. Let $f:2^E\rightarrow \Zg$ be the hedgegraph polymatroid and let $r:2^E\rightarrow \Zg$ be the rank function of the hedgegraph matroid $\mathcal{M}=(E, \mathcal{I}_G)$. Then, 
    $$r(A)=\min\{f(B)+|A \setminus B|: B\subseteq A\}\ \forall\ A\subseteq E.$$
\end{lemma}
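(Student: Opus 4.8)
The plan is to prove the stated identity by recognizing the hedgegraph matroid $\mathcal{M}=(E,\mathcal{I}_G)$ (with rank function $r$) as a matroid \emph{induced} from the graphic matroid of a complete graph. For a hedge $e\in E$, let $T_e:=\{\{u,v\}\in\binom{V}{2}:\{u,v\}\subseteq h\text{ for some }h\in e\}$ be the set of pairs available for trimming $e$; since the hyperedges inside $e$ are vertex-disjoint, $T_e=\bigcup_{h\in e}\binom{h}{2}$. By the definition of $\mathcal{I}_G$, a subset $F\subseteq E$ is independent if and only if one can pick $Y_F(e)\in T_e$ for every $e\in F$ so that $\{Y_F(e):e\in F\}$ is a forest on $V$; note the forest condition forces the picked pairs to be distinct. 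Equivalently, $\mathcal{I}_G$ is exactly the family of subsets of $E$ admitting an independent transversal into the graphic matroid $\mathcal{G}$ of the complete graph $K_V$ under the incidence ``$e$ is incident to $p$ iff $p\in T_e$''; thus $\mathcal{M}$ is the matroid induced on $E$ from $\mathcal{G}$ via this incidence.

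First I would record the easy inequality $r(A)\le\min\{f(B)+|A\setminus B|:B\subseteq A\}$. Fix $B\subseteq A$, an independent set $F\subseteq A$ with $|F|=r(A)$, and a trimming $Y_F$ of $(V,F)$ into a forest $T$. Restricting $Y_F$ to $F\cap B$ yields a subforest of $T$ with $|F\cap B|$ edges, each of the form $\{u,v\}\subseteq h$ with $h\in e$ and $e\in B$, so both endpoints of every such edge lie in the same connected component of the sub-hedgegraph $(V,B)$. A forest whose edges all stay within components of $(V,B)$ has at most $|V|-\Comp(V,B)=f(B)$ edges, hence $|F\cap B|\le f(B)$, while $|F\setminus B|\le|A\setminus B|$ since $F\subseteq A$. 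Summing gives $r(A)=|F|\le f(B)+|A\setminus B|$, and minimizing over $B$ proves the bound.

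The reverse inequality is the main point, and here I would invoke the standard rank formula for a matroid induced from a matroid by a bipartite incidence (equivalently, Rado's theorem, the matroid generalization of Hall's theorem, or the matroid union theorem; see e.g.\ \cite{Frank-book}): $r(A)=\min_{B\subseteq A}\bigl(|A\setminus B|+\rho_{\mathcal{G}}\bigl(\bigcup_{e\in B}T_e\bigr)\bigr)$, where $\rho_{\mathcal{G}}$ is the rank function of $\mathcal{G}$. The key observation is that $\rho_{\mathcal{G}}\bigl(\bigcup_{e\in B}T_e\bigr)=f(B)$ for every $B\subseteq E$: the edge set $\bigcup_{e\in B}T_e$ replaces each hyperedge appearing in $B$ by a clique, which does not change connectivity, so the graph $(V,\bigcup_{e\in B}T_e)$ has exactly $\Comp(V,B)$ components and its spanning-forest size is $|V|-\Comp(V,B)=f(B)$. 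Substituting this identity into the induced-matroid rank formula gives $r(A)=\min_{B\subseteq A}(|A\setminus B|+f(B))$; together with the easy inequality this completes the proof.

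If a self-contained argument is preferred over citing Rado's theorem, I would instead establish the reverse inequality by induction on $|A|$ along the standard critical-set lines: if some nonempty proper $B\subsetneq A$ is tight (i.e.\ $B\in\mathcal{I}_G$ and $|B|=f(B)$, so $(V,B)$ trims to a spanning forest of itself), apply induction to $(V,B)$ and to the contracted hedgegraph $G/B$ obtained by contracting each component of $(V,B)$, using the identity $f_{G/B}(B')=f(B\cup B')-f(B)$ so the inductive hypothesis transfers; otherwise shrink a single pair out of some $T_e$ and argue the Rado-type condition $f(B')\ge|B'|$ is preserved, as in the usual proof of Rado's theorem. In either route the main obstacle is exactly this reverse direction---showing the combinatorial min-formula value is genuinely attained by a forest-trimmable subset---while the $\le$ direction and the identity $\rho_{\mathcal{G}}(\bigcup_{e\in B}T_e)=f(B)$ are routine.
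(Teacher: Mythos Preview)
Your proposal is correct and takes essentially the same route as the paper. The paper casts $r(A)$ as the size of a maximum common independent set of the graphic matroid on $U=\bigcup_{e\in A}trims(e)$ and the partition matroid (one class per hedge), then applies Edmonds' matroid intersection theorem and reduces the minimizer $K$ to one of the form $\bigcup_{e\in B}trims(e)$; your induced-matroid/Rado formulation is exactly this specialization of matroid intersection to intersection with a partition matroid, and your identity $\rho_{\mathcal{G}}(\bigcup_{e\in B}T_e)=f(B)$ is the same computation the paper does for $r_1(K)$. The only cosmetic differences are that you work over the simple graph $K_V$ rather than a tagged multigraph and that you prove the $\le$ direction by a direct counting argument instead of deriving it from the intersection min--max, but neither changes the substance.
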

\begin{proof}
    

    
    Let $A\subseteq E$. We show the required equality via matroid intersection. 
    We define
    $$\begin{aligned}
        trims(e)&:=\{(e, \{u, v\}): \{u,v\}\in \binom{V}{2}, \exists\ h\in e \text{ with } u, v\in h\}\ \forall e\in A,\\
        U&:=\bigcup_{e\in A}trims(e), \\
        \mathcal{I}_1&:=\{K\subseteq U: \text{Multigraph}\left(V, \{\{u,v\}: (e, \{u,v\})\in K\}\right)\text{ is a forest}\},
        \ \text{and} \\
        \mathcal{I}_2&:=\left\{K\subseteq U: \left|K\cap trims(e)\right|\leq 1\ \ \forall e\in A\right\}.
    \end{aligned}$$
    In particular, $U$ is the set of all possible trimmings of all hedges in $A$, 
    $\mathcal{I}_1$ is the subset of all possible trimmings of all hedges in $A$ that induce an acyclic subgraph, and  
    $\mathcal{I}_2$ is the subset of trimmings of all hedges in $A$ (since a trimming of a hedge is equivalent to picking exactly one pair of vertices for each hedge). 
    We further define $\mathcal{M}_1:=(U, \mathcal{I}_1)$ and $\mathcal{M}_2:=(U, \mathcal{I}_2)$. We observe that $\mathcal{M}_1$ is a matroid since it is the graphic matroid of the multigraph on vertex set $V$ and edge set $\{\{u,v\}: (e, \{u,v\})\in U\}$. Furthermore, $\mathcal{M}_2$ is also a matroid since it is a partition matroid (with a partition class for each $e\in A$). Let $r_1, r_2:2^U\rightarrow \Zg$ be the matroid rank function of $\mathcal{M}_1, \mathcal{M}_2$, respectively.

    We observe that a subset $F\subseteq A$ admits a trimming $Y_F:F\rightarrow \binom{V}{2}$ to a forest if and only if $\{(e,Y_F(e)): e\in F\}\in \mathcal{I}_1\cap \mathcal{I}_2$. Hence, the maximum size of an independent set of $\mathcal{M}$ contained in $A$ is the maximum size of a common independent set of $\mathcal{M}_1$ and $\mathcal{M}_2$. 
    Thus, by 
    Edmonds' matroid intersection theorem \cite{Edm65-partition}, we know that 
    $$r(A)=\min\{r_1(K)+r_2(U\setminus K):K\subseteq U\}.$$
    
    We first show that $r(A)\leq \min\{f(B)+|A\setminus B|:B\subseteq A\}$. Let $B\subseteq A$ be a set that achieves the minimum, i.e., $f(B) +|A\setminus B|\le f(B') + |A\setminus B'|$ for every $B'\subseteq A$. It suffices to show that $r(A)\leq f(B)+|A\setminus B|$. We define
    \[
        K:=\bigcup_{e\in B}trims(e).
    \]
    We note that $r_1(K)=|V|-\Comp(V,B)=f(B)$ since the maximum size of an acyclic subgraph of a graph with vertex set $V$ and edge set $\{\{u,v\}: (e,\{u,v\})\in K\}$ is  $|V|-\Comp(V, B)$. Meanwhile, $r_2(U\setminus K)=|A\setminus B|$ since $trims(e)\subseteq U\setminus K$ for every $e\in A\setminus B$. Hence,
    $$r(A) \leq r_1(K)+r_2(U\setminus K)=f(B)+|A\setminus B|,$$
    which implies that
    $$r(A)\leq \min\{f(B)+|A\setminus B|: B\subseteq A\}.$$
    
    We now show that $r(A)\geq \min\{f(B)+|A\setminus B|:B\subseteq A\}$. Let $K\subseteq U$ be a set that achieves the minimum, i.e., $r(A)=r_1(K)+r_2(U\setminus K)$. We may assume that for each hedge $e\in A$, either all trimmings of $e$ are in $K$ or none of the trimmings of $e$ are in $K$, i.e., either $trims(e)\subseteq K$ or $trims(e)\cap K=\emptyset$: if $e\in A$ such that $trims(e)\setminus K\neq \emptyset$, then, for $K':=K \setminus trims(e)$, we have that $r_1(K')\le r_1(K)$ and $r_2(U\setminus K') = r_2(U\setminus K)$. We define
    \[
    B:=\left\{e\in A: trims(e)\subseteq K\right\}.
    \]
    Similarly, $r_1(K)=|V|-\Comp(V,B)=f(B)$ since the maximum size of an acyclic subgraph of a graph with vertex set $V$ and edge set $\{\{u,v\}: (e,\{u,v\})\in K\}$ is  $|V|-\Comp(V, B)$. Meanwhile, $r_2(U\setminus K)=|A\setminus B|$ since $trims(e)\subseteq U\setminus K$ for every $e\in A\setminus B$. Hence,
    $$r(A) = r_1(K)+r_2(U\setminus K)=f(B)+|A\setminus B|,$$
    which implies that
    $$r(A)\geq \min\{f(B)+|A\setminus B|: B\subseteq A\}.$$

\end{proof}

\subsection{Hedgegraph Decomposition and Orientation from Hedgegraph Matroid}
\label{subsection:hedgegraph-decomposition}

We restate and prove Theorem~\ref{theorem:trimming}. 
\theoremtrimming*
\begin{proof}
    Let $G=(V,E)$ be a hedgegraph. Let $\mathcal{M}=(E, \mathcal{I}_G)$ be the hedgegraph matroid and $r:2^E\rightarrow \Zg$ be its matroid rank function. We recall that $G$ can be trimmed to a spanning tree if and only if $r(E)=|E|=|V|-1$.

    We first assume that $G$ is $1$-partition connected. Based on Lemma~\ref{lemma:hedgegraph-matroid-rank-function}, we have $r(E)=\min\{f(B)+|E\setminus B|:B\subseteq E\}$. Hence, it suffices to show that $f(B)+|E\setminus B|\ge |V|-1$ for every $B\subseteq E$. Let $B\subseteq E$. 
    Consider the partition $\mathcal{P}$ of $V$ where each part corresponds to a connected component in the sub-hedgegraph $(V, B)$. Hence, $|\mathcal{P}|=\Comp(V,B)$. Since $G$ is $1$-partition connected, we have that $|E\setminus B|=|\delta(\mathcal{P})|\geq |\mathcal{P}|-1$. Therefore,
    $$\begin{aligned}
        f(B)+|E\setminus B| =|V|-\Comp(V,B)+|\delta(\mathcal{P})|\ge |V|-\Comp(V,B)+|\mathcal{P}|-1= |V|-1.
    \end{aligned}$$

    We now assume that $r(E)=|V|-1$ and prove that $G$ is $1$-partition connected. Consider an arbitrary partition $\mathcal{P}$ of $V$. We define $B:=E\setminus \delta(\mathcal{P})$. Therefore, $\Comp(V,B)\geq |\mathcal{P}|$. Since $r(E) = |V|-1$, by Lemma~\ref{lemma:hedgegraph-matroid-rank-function}, we have that $f(B) + |E\setminus B|\ge |V|-1$. Hence, 
    $$\begin{aligned}
        |V|-1 \leq f(B)+|E\setminus B| = |V|-\Comp(V,B)+|E\setminus B|\leq |V|-|\mathcal{P}|+|\delta(\mathcal{P})|,
    \end{aligned}$$
    which implies that $|\delta(\mathcal{P})|\geq |\mathcal{P}|-1$.
\end{proof}

Next, we restate and prove Theorem~\ref{theorem:hedgegraph-decomposition} via Edmonds' base packing theorem 
\cite{Edm65-partition}.
\theoremdecomposition*
\begin{proof}
    Let $G=(V,E)$ be a hedgegraph. Suppose $G$ contains $k$ hedge-disjoint $1$-partition connected sub-hedgegraphs $G_1=(V, E_1), \ldots, G_k=(V, E_k)$. Hence, for every partition $\mathcal{P}$ of $V$, we have
    $$\begin{aligned}
        |\delta_E(\mathcal{P})| = \sum_{i=1}^{k}|\delta_{E_i}(\mathcal{P})| \geq \sum_{i=1}^{k}(|\mathcal{P}|-1)=k(|\mathcal{P}|-1),
    \end{aligned}$$
    which implies that $G$ is $k$-partition connected.

    We now show the converse. Suppose $G$ is $k$-partition connected. 
    Let $\mathcal{M}=(E, \mathcal{I}_G)$ be the hedgegraph matroid and $r:2^E\rightarrow \Zg$ be its matroid rank function. By Edmonds' base packing theorem \cite{Edm65-partition}, the hedgegraph $G$ contains $k$ hedge-disjoint $1$-partition connected sub-hedgegraphs if and only if
    \begin{align}
        k(r(E)-r(A))\leq |E\setminus A|\ \forall A\subseteq E.\label{ineq:edmonds-base-packing}
    \end{align}
    Hence, it suffices to show that (\ref{ineq:edmonds-base-packing}) holds. 
    Since $G$ is $1$-partition connected, we have $r(E)=|V|-1$ based on Theorem~\ref{theorem:trimming}. Also, $f(E)=|V|-1$ because $G$ is connected. Let $A\subseteq E$. By Lemma~\ref{lemma:hedgegraph-matroid-rank-function}, there exists $B\subseteq A$ such that $r(A)=f(B)+|A\setminus B|$. We construct a partition $\mathcal{P}$ of $V$, where each part corresponds to a  connected component in the sub-hedgegraph $(V,B)$. Hence, $|\mathcal{P}|=\Comp(V,B)$. Since $G$ is $k$-partition connected, $|E\setminus B|=|\delta(\mathcal{P})|\geq k(|\mathcal{P}|-1)$. Therefore,
    $$\begin{aligned}
        k(r(E)-r(A)) &= k((|V|-1)-(f(B)+|A\setminus B|)) \ \ \text{(since $r(A)=f(B)+|A\setminus B|)$)}\\
        &= k((|V|-1)-(|V|-\Comp(V,B)+|A\setminus B|)) \ \ \text{(since $f(B)=|V|-\Comp(V,B)$)}\\
        &= k(|\mathcal{P}|-1-|A\setminus B|) \ \ \text{(since $|\mathcal{P}|=\Comp(V,B)$)}\\
        &\leq |E\setminus B|-k|A\setminus B| \ \ \text{(since $|E\setminus B|\geq k(|\mathcal{P}|-1)$)}\\
        &= |E\setminus A|-(k-1)\cdot |A\setminus B|\leq |E\setminus A|.
    \end{aligned}$$
\end{proof}

Next, we restate and prove Theorem~\ref{theorem:base-covering} via Edmonds' matroid covering theorem \cite{Edm65-partition}.
\theoremcovering*
\begin{proof}
    Let $G=(V, E)$ be a hedgegraph. Let $\mathcal{M}=(E, \mathcal{I}_G)$ be the hedgegraph matroid and $r:2^E\rightarrow \Zg$ be its matroid rank function. We note that $E$ is the union of $k$ acyclic-trimmable sets if and only if $E$ is the union of $k$ independent sets. 
    By Edmonds' base covering theorem \cite{Edm65-partition}, $E$ is the union of $k$ independent sets if and only if $k r(A)\ge |A|$ for every subset $A\subseteq E$. We show that $kr(A)\ge |A|$ for every subset $A\subseteq E$ if and only if $|E[\mathcal{P}]|\le k(|V|-|\mathcal{P}|)$ for every partition $\mathcal{P}$ of $V$. 

    Suppose that $kr(A)\ge |A|$ for every subset $A\subseteq E$. Let $\mathcal{P}$ be a partition of $V$. Together with Lemma \ref{lemma:hedgegraph-matroid-rank-function}, for every $B\subseteq A$, we have
    $$|A|\leq kr(A)\leq k(f(B)+|A\setminus B|).$$
    By defining $B:=A$, we have $|A|\leq kf(A)$. We let $A:=E[\mathcal{P}]$. We note that in the subhedgegraph $(V,A)$, the number of connected components is at least $|\mathcal{P}|$, because for every hedge $e\in A$, there is no hyperedge in $e$ intersecting different parts of the partition $\mathcal{P}$. Hence, $f(A) = |V|-\Comp(V,A)\leq |V|-|\mathcal{P}|$, which further implies that
    $$|A|\leq kf(A)\leq k(|V|-|\mathcal{P}|).$$

    We now prove the other direction. Suppose that $|E[\mathcal{P}]|\leq k(|V|-|\mathcal{P}|)$ for every partition $\mathcal{P}$ of $V$. Let $B\subseteq A\subseteq E$. It suffices to show that $k(f(B)+|A\setminus B|)\geq |A|$. Let $\mathcal{P}$ be the partition of $V$ where each part corresponds to a connected component in the subhedgegraph $(V,B)$. Hence, $|\mathcal{P}|=\Comp(V,B)$, which implies that $f(B)=|V|-\Comp(V,B)=|V|-|\mathcal{P}|$. Since $B\subseteq E[\mathcal{P}]$, we have that 
    $$\begin{aligned}
        kf(B)=k(|V|-|\mathcal{P}|)\geq |E[\mathcal{P}]|\geq |B|.
    \end{aligned}$$
    Therefore,
    $$\begin{aligned}
        k(f(B)+|A\setminus B|)\geq |B|+k|A\setminus B|\geq |B|+|A\setminus B|=|A|.
    \end{aligned}$$
\end{proof}

Next, we restate and prove Corollary~\ref{corollary:orientation}. 
\corollaryorientation*
\begin{proof}
    Let $G=(V,E)$ be a hedgegraph and $r\in V$. Suppose $G$ is $k$-partition connected. Then, according to Theorems~\ref{theorem:hedgegraph-decomposition} and \ref{theorem:trimming}, the hedgegraph $G$ contains $k$ hedge-disjoint $1$-partition connected sub-hedgegraphs $G_1=(V, E_1), \ldots, G_k=(V, E_k)$, each of which can be trimmed into a spanning tree. For every spanning tree, we can orient all trimmed edges of the tree away from the root $r$ and lift the same orientation to hedges in $\cup_{i=1}^{k}E_i$, i.e., using the same heads. We orient every other hedge $e\in E\setminus (\cup_{i=1}^{k}E_i)$ arbitrarily. Hence, for every $\emptyset\neq U \subseteq V-r$, we have
    $$\begin{aligned}
        d^{out}_{\overrightarrow{G}}(U)\geq \sum_{i=1}^{k}\left|\left\{e\in E_i: e\in \delta^{out}_{\overrightarrow{G}}(U)\right\}\right|\geq \sum_{i=1}^{k}1=k.
    \end{aligned}$$

    Suppose that $G$ has a rooted $k$-out-hyperarc connected orientation for some choice of the root vertex. Let $\mathcal{P}$ be a partition of the vertex set $V$. For every part $U\subseteq V$ of $\mathcal{P}$ not containing root $r$, we have $d^{out}_{\overrightarrow{G}}(V\setminus U)\geq k$. Hence, $|\delta(P)|\geq \sum_{U\in \mathcal{P}: r\not\in U}d^{out}_{\overrightarrow{G}}(V\setminus U)\ge k\cdot (|\mathcal{P}|-1)$, which implies that $G$ is $k$-partition connected.
\end{proof}
\section{Weak Partition Connectivity of Hedgegraphs}\label{sec:weak-partition-connectivity}
In this section, we relate weak partition connectivity, connectivity, and functional strength of the hedgegraph polymatroid. 
Lemma \ref{lemma:wpc-is-atleast-half-connectivity} shows that weak partition connectivity is at least a half factor of connectivity. Lemma \ref{lemma:hedge-weak-partition-connectivity-polymatroid} shows that weak partition connectivity is at most the functional strength of the hedgegraph polymatroid. 
Lemma \ref{lemma:functional-strength-is-atmost-connectivity} shows that functional strength of the hedgegraph polymatroid is at most connectivity. These results together imply Lemma \ref{lemma:wpc-and-connectivity}. 
We use these results and the log-approximation to functional strength to prove Theorem \ref{theorem:log-approx-connectivity}. 


\begin{lemma}
    \label{lemma:wpc-is-atleast-half-connectivity}
    Let $G=(V, E)$ be a hedgegraph with connectivity $\lambda$ and weak partition connectivity $\wpc_G$. Then, $\lfloor \lambda/2 \rfloor \le \wpc_G$. 
\end{lemma}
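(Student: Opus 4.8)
The plan is to show that for an arbitrary partition $\mathcal{P}$ of $V$, the quantity $\sum_{e\in E}(|\mathcal{P}|-\Comp(\mathcal{P}(e)))$ divided by $|\mathcal{P}|-1$ is at least $\lambda/2$; taking the floor then gives $\lfloor\lambda/2\rfloor \le \wpc_G$ since the left-hand side is an integer. Fix a partition $\mathcal{P}$ with parts $P_1,\dots,P_t$ where $t=|\mathcal{P}|$. The idea is a standard charging/counting argument: relate $\sum_{e\in E}(t-\Comp(\mathcal{P}(e)))$ to a sum of cut capacities $d_G(\cdot)$ over a cleverly chosen family of vertex subsets, and then use $d_G(S)\ge\lambda$ for every nonempty proper $S$.

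Concretely, I would first recall the single-hedge identity: for a hedge $e$, the contraction $\mathcal{P}(e)$ is a hedgegraph on $t$ vertices consisting of the single (contracted) hedge, and $t - \Comp(\mathcal{P}(e))$ counts the number of ``connection events'' that $e$ contributes across $\mathcal{P}$ — equivalently, if $e$ has vertex-disjoint hyperedges $h_1,\dots,h_s$, then $t-\Comp(\mathcal{P}(e)) = \sum_{j} (\text{number of parts }h_j\text{ meets} - 1)$. Thus $\sum_{e\in E}(t-\Comp(\mathcal{P}(e)))$ equals exactly the hypergraph weak-cut expression $\sum_{h}(\#\text{parts }h\text{ meets} - 1)$ over the underlying hyperedges, which matches the numerator in (\ref{eq:wpc-hypergraphs}) specialized to the hypergraph $(V,H_E)$. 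This reduces the hedgegraph statement to a hypergraph statement about $(V,H_E)$ — but one must be careful that $\lambda$ here is hedge-connectivity, not the hyperedge-connectivity of $(V,H_E)$, so a direct appeal to \cite{BCKK24} is not immediate. The cleanest route I would take is a direct counting argument: pick a spanning set of $t-1$ parts, build a ``star'' or ``path'' of the parts, and for each of the $t-1$ ``links'' between consecutive parts in this auxiliary tree, observe that merging along that link one can extract a vertex subset $S$ (a union of some parts) that is crossed by at least as many hedges as the number of hedges whose contraction ``connects across'' that link, and that each hedge gets counted at most twice across all $t-1$ links (since each hedge's contracted image in $\mathcal{P}(e)$ is a forest on its connected part, so summed over the $t-1$ tree-links it is charged at most $2(t-\Comp(\mathcal{P}(e)))$ — no, more carefully, one charges each cut link and each hedge contributes to the link-sum at most its number of connection events, and to the $\lambda$-side at most twice). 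Then $\sum_{i=1}^{t-1} d_G(S_i) \le 2\sum_{e\in E}(t-\Comp(\mathcal{P}(e)))$, while each $d_G(S_i)\ge \lambda$, giving $(t-1)\lambda \le 2\sum_{e}(t-\Comp(\mathcal{P}(e)))$, which is exactly what we want.

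The main obstacle I anticipate is setting up the charging so that each hedge $e$ is charged at most a factor $2$ more on the ``$\lambda$-side'' than it contributes on the ``weak-partition-connectivity side.'' The natural fix is to choose the auxiliary tree on the parts to be a spanning tree of the ``connection graph'' induced by merging all contracted hedge-images (this graph is connected because $G$ is connected, $\lambda>0$), and then for each tree edge $\{P_a,P_b\}$ take $S$ to be the vertex set of one side of that tree edge; a hedge $e$ crossing $S$ means $e$ has a hyperedge meeting both sides, which happens only if $e$'s contracted image uses an edge crossing the tree-cut — and since that image is a forest on $t$ vertices with $t-\Comp(\mathcal{P}(e))$ edges, $e$ can cross at most $2(t-\Comp(\mathcal{P}(e)))$ of the $t-1$ tree-cuts is the wrong bound; the right statement is that summing the indicator ``$e$ crosses the $i$-th tree-cut'' over all $i$ is at most $2\cdot(t-\Comp(\mathcal{P}(e)))$, which follows because a forest with $m$ edges crosses any fixed spanning tree's edges at most... — this is where the careful combinatorial lemma (a forest embedded in a tree crosses the tree's edge-cuts a total of at most twice its edge count) must be proved, and I would isolate it as a small claim and prove it by induction on the number of forest edges.
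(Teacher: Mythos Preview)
Your tree-cut charging has a genuine gap: the combinatorial lemma you isolate at the end---that a hedge $e$ crosses at most $2\bigl(t-\Comp(\mathcal{P}(e))\bigr)$ of the $t-1$ tree-cuts---is false for a general spanning tree. Take $t=4$ with $T$ the path on $P_1,P_2,P_3,P_4$, and let $e$ consist of a single hyperedge meeting only $P_1$ and $P_4$. Then $t-\Comp(\mathcal{P}(e))=4-3=1$, yet $e$ crosses all three tree-cuts, so the lemma would require $3\le 2$. Your proposed inductive proof cannot succeed, and no tree of diameter $\ge 3$ avoids this obstruction. (A separate minor issue: the identity $t-\Comp(\mathcal{P}(e))=\sum_j(\#\text{parts }h_j\text{ meets}-1)$ is in general only the inequality $\le$, since two vertex-disjoint hyperedges of the same hedge can meet overlapping sets of parts after contraction.)

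The paper's proof avoids trees entirely and is shorter. For each hedge $e$ let $\mathcal{T}_e$ be the set of non-isolated vertices of $\mathcal{P}(e)$; since every non-trivial component of $\mathcal{P}(e)$ has at least two vertices, $|\mathcal{T}_e|\le 2\bigl(|\mathcal{P}|-\Comp(\mathcal{P}(e))\bigr)$. Now sum $|\delta_G(S)|$ over all parts $S\in\mathcal{P}$: the double-counting identity $\sum_{S\in\mathcal{P}}|\delta_G(S)|=\sum_{e\in E}|\mathcal{T}_e|$ holds because $e\in\delta_G(S)$ exactly when the contracted vertex for $S$ is non-isolated in $\mathcal{P}(e)$. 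Combining with $|\delta_G(S)|\ge\lambda$ for each of the $|\mathcal{P}|$ parts yields $|\mathcal{P}|\cdot\lambda\le 2\sum_e\bigl(|\mathcal{P}|-\Comp(\mathcal{P}(e))\bigr)$, and dividing by $|\mathcal{P}|-1\le|\mathcal{P}|$ gives the bound. Your tree approach \emph{would} work if $T$ were chosen to be a star, since then the tree-cuts are singleton parts and the per-hedge crossing count is at most $|\mathcal{T}_e|$; but that is just the argument above with one part dropped, so the cleanest fix is to discard the tree scaffolding and sum over the singleton-part cuts directly.
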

\begin{proof}
    Let $\mathcal{P}$ be a partition of $V$ with $\wpc_G=\lfloor\frac{\sum_{e\in E}\left(|\mathcal{P}|-\Comp(\mathcal{P}(e))\right)}{|\mathcal{P}|-1}\rfloor$. 
    Let $e\in E$. We recall that $\mathcal{P}(e)$ is the hedgegraph  obtained from the hedgegraph $(V,\{e\})$ by contracting every part in $\mathcal{P}$ into a single vertex. 
    We define
    $$\mathcal{T}_e:=\{u\in V(\mathcal{P}(e)): u \text{ is not isolated in } \mathcal{P}(e)\}.$$
    We note that $|\mathcal{P}|$ is the number of vertices in $\mathcal{P}(e)$ and $\Comp(\mathcal{P}(e))$ is the number of connected components in $\mathcal{P}(e)$. Hence, we have
    $$|\mathcal{P}|-\Comp(\mathcal{P}(e))\geq \frac{|\mathcal{T}_e|}{2},$$
    since every vertex in $\mathcal{T}_e$ belongs to a connected component of size at least $2$ in the hedgegraph $\mathcal{P}(e)$. This implies that
    $$\begin{aligned}
        \wpc_G = \left \lfloor\frac{\sum_{e\in E}\left(|\mathcal{P}|-\Comp(\mathcal{P}(e))\right)}{|\mathcal{P}|-1}\right \rfloor
        \geq \left \lfloor \frac{1}{2} \sum_{e\in E}\frac{|\mathcal{T}_e|}{|\mathcal{P}|-1}\right \rfloor
        \ge \left \lfloor \frac{1}{2} \sum_{e\in E}\frac{|\mathcal{T}_e|}{|\mathcal{P}|}\right \rfloor.
    \end{aligned}$$

    For every part $S\subseteq V$ of partition $\mathcal{P}$, we note that a hedge $e\in E$ belongs to $\delta(S)$ if and only if the vertex $u(S)$ corresponding to $S$ in $\mathcal{P}(e)$ is not isolated in $\mathcal{P}(e)$, i.e., $u(S)\in \mathcal{T}_e$. Hence, we have
    $$\begin{aligned}
        \sum_{S\in \mathcal{P}}|\delta(S)|=\sum_{S\in \mathcal{P}}\sum_{e\in E}\mathbf{1}_{[e \in \delta(S)]}=\sum_{e\in E}\sum_{S\in \mathcal{P}}\mathbf{1}_{[e \in \delta(S)]}=\sum_{e\in E}\sum_{S\in \mathcal{P}}\mathbf{1}_{[u(S)\in \mathcal{T}_e]}=\sum_{e\in E}|\mathcal{T}_e|.
    \end{aligned}$$
    We also have that $ \sum_{S\in \mathcal{P}}|\delta(S)|\ge \lambda |\mathcal{P}|$ since $|\delta(S)|\ge \lambda$ for every $\emptyset\neq S\subsetneq V$. 
    This implies that
    $$\begin{aligned}
        \wpc_G &\ge \left \lfloor\frac{1}{2} \sum_{e\in E}\frac{|\mathcal{T}_e|}{|\mathcal{P}|} \right \rfloor
        = \left \lfloor\frac{1}{2} \sum_{S\in \mathcal{P}}\frac{|\delta(S)|}{|\mathcal{P}|} \right \rfloor
        \geq \left \lfloor \frac{1}{2} \sum_{S\in \mathcal{P}}\frac{\lambda}{|\mathcal{P}|}\right \rfloor 
        = \left \lfloor\frac{\lambda}{2}\right \rfloor.
    \end{aligned}$$
\end{proof}

We now relate weak partition connectivity to the functional strength of the hedgegraph polymatroid.

\begin{lemma}\label{lemma:hedge-weak-partition-connectivity-polymatroid}
    Let $G=(V,E)$ be a connected hedgegraph, $\wpc_G$ be its weak partition connectivity and $f:2^E\rightarrow \Zg$ be the corresponding hedgegraph polymatroid. Then, $\wpc_G\leq k^*(f)$.
\end{lemma}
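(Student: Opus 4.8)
The plan is to exhibit, for every subset $A\subseteq E$ with $f(A)<f(E)$, a partition $\mathcal{P}_A$ of $V$ whose weak-partition ratio equals the functional-strength ratio of $A$; minimizing over $A$ then yields $\wpc_G\le k^*(f)$. Since $G$ is connected, $f(E)=|V|-1$, and the subset $A$ attaining the minimum in $k^*(f)$ must satisfy $f(A)<f(E)$ (otherwise the ratio is $0/0=+\infty$), so it suffices to treat such $A$.

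Fix $A\subseteq E$ with $f(A)<f(E)$ and let $\mathcal{P}_A$ be the partition of $V$ into the connected components of the sub-hedgegraph $(V,A)$. Then $|\mathcal{P}_A|=\Comp(V,A)=|V|-f(A)\ge 2$, so the denominators agree: $f(E)-f(A)=(|V|-1)-(|V|-|\mathcal{P}_A|)=|\mathcal{P}_A|-1$. The key step is the identity $\Comp(\mathcal{P}_A(e))=\Comp(V,A+e)$ for every $e\in E$: the connected components of $(V,A+e)$ are obtained from those of $(V,A)$ — i.e., from the parts of $\mathcal{P}_A$ — by merging two parts whenever some hyperedge $h\in e$ meets both of them, and $\mathcal{P}_A(e)$ (which is $(V,\{e\})$ with each part of $\mathcal{P}_A$ contracted to a vertex) has exactly this component structure, since a contracted hyperedge $\bar h$ joins precisely the parts that $h$ meets. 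Granting this, for each $e\in E$,
\[
|\mathcal{P}_A|-\Comp(\mathcal{P}_A(e))=\Comp(V,A)-\Comp(V,A+e)=f(A+e)-f(A),
\]
using $f(B)=|V|-\Comp(V,B)$. Summing over $e\in E$ and dividing by $|\mathcal{P}_A|-1=f(E)-f(A)$ shows the two ratios coincide, so
\[
\wpc_G\le\Big\lfloor\tfrac{\sum_{e\in E}\left(|\mathcal{P}_A|-\Comp(\mathcal{P}_A(e))\right)}{|\mathcal{P}_A|-1}\Big\rfloor=\Big\lfloor\tfrac{\sum_{e\in E}\left(f(A+e)-f(A)\right)}{f(E)-f(A)}\Big\rfloor.
\]
Taking the minimum over all such $A$ gives $\wpc_G\le k^*(f)$.

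I expect the only real content to be the component identity $\Comp(\mathcal{P}_A(e))=\Comp(V,A+e)$; everything else is bookkeeping with $f(B)=|V|-\Comp(V,B)$ and matching denominators. The mild subtlety there is handling hyperedges of $e$ contained in a single part of $\mathcal{P}_A$ (they contract to a single vertex and contribute nothing to connectivity) and stating the contraction/component correspondence carefully for hypergraphs rather than just graphs; this is where I would be most careful in the write-up.
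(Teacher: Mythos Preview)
Your proposal is correct and essentially identical to the paper's proof: the paper also takes the minimizing $A$, forms the partition $\mathcal{P}$ into connected components of $(V,A)$, matches the denominator via $f(E)-f(A)=|\mathcal{P}|-1$, and uses the same key identity $\Comp(\mathcal{P}(e))=\Comp(V,A+e)$ to equate numerators. The only cosmetic difference is that you argue for arbitrary $A$ and then minimize, while the paper fixes the minimizer upfront.
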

\begin{proof}
    Let $A\subseteq E$ be the set of hedges such that $k^*(f)=\lfloor\frac{\sum_{e\in E}(f(A+e)-f(A))}{f(E)-f(A)}\rfloor$. Consider the partition $\mathcal{P}$ of $V$ where each part corresponds to a connected component in the subhedgegraph $(V,A)$. Hence, $|\mathcal{P}|=\Comp(V, A)$. This implies that
    $$\begin{aligned}
        f(E)-f(A) &= (|V|-1) - (|V|-\Comp(V, A)) \ \ \text{(since $G$ is connected)}\\
        &= \Comp(V, A) -1 \\
        &= |\mathcal{P}|-1.
    \end{aligned}$$
    Let $e\in E$. We recall that $\mathcal{P}(e)$ is the hedgegraph  obtained from the hedgegraph $(V,\{e\})$ by contracting every part in $\mathcal{P}$ into a single vertex. Hence, $\Comp(\mathcal{P}(e))=\Comp(V, A+e)$. Therefore, 
    $$\begin{aligned}
        f(A+e)-f(A) &= (|V|-\Comp(V, A+e)) - (|V|-\Comp(V, A)) \\
        &= \Comp(V, A) - \Comp(V, A+e) \\
        &= |\mathcal{P}| - \Comp(\mathcal{P}(e)).
    \end{aligned}$$
    Thus, we have
    $$\begin{aligned}
        k^*(f)=\left\lfloor\frac{\sum_{e\in E}(f(A+e)-f(A))}{f(E)-f(A)}\right\rfloor=\left\lfloor\frac{\sum_{e\in E}(|\mathcal{P}| - \Comp(\mathcal{P}(e)))}{|\mathcal{P}|-1}\right\rfloor\geq \wpc_G.
    \end{aligned}$$

\end{proof}

\begin{remark}
    We have seen that partition connectivity of a hedgegraph is equal to the strength of the hedgegraph polymatroid wrt unit weights in Lemma \ref{lemma:partition-connectivity-equals-strength}. 
    However, weak partition connectivity is not equal to the functional strength of the hedgegraph polymatroid (Lemma \ref{lemma:hedge-weak-partition-connectivity-polymatroid} shows the inequality in one direction). In fact, they are not equal even in hypergraphs. 
    See an example in Appendix~\ref{appendix:couterexample}. Although, there exists a polynomial-time algorithm to compute weak partition connectivity of a hypergraph \cite{FKK03-ori}, we do not know if functional strength of a hypergraph polymatroid can be computed in polynomial time. 
\end{remark}


We show that functional strength of the hedgegraph polymatroid is at most the connectivity of the hedgegraph. 
\begin{lemma}\label{lemma:functional-strength-is-atmost-connectivity}
    Let $G=(V,E)$ be a hedgegraph with connectivity $\lambda>0$ and $f:2^E\rightarrow \Zg$ be the corresponding hedgegraph polymatroid. Then, $k^*(f)\le \lambda$.
\end{lemma}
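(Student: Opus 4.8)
The plan is to exhibit a single hedge set $A\subseteq E$ that already witnesses $k^*(f)\le\lambda$ in the definition of functional strength; the natural candidate is the complement of the crossing hedges of a minimum cut. Since $\lambda>0$, the hedgegraph $G$ is connected, so $f(E)=|V|-1$. Fix $\emptyset\neq S\subsetneq V$ with $d_G(S)=\lambda$ and set $A:=E\setminus\delta(S)$, so that $|E\setminus A|=|\delta(S)|=\lambda$.

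First I would check that the ratio defining $k^*(f)$ is legitimate for this $A$, i.e.\ that $f(E)-f(A)>0$. No hedge of $A$ crosses $S$, so in the sub-hedgegraph $(V,A)$ the sets $S$ and $V\setminus S$ lie in different connected components; hence $\Comp(V,A)\ge 2$ and $f(A)=|V|-\Comp(V,A)\le |V|-2<|V|-1=f(E)$. Thus $f(E)-f(A)\ge 1$, and the ratio $\frac{\sum_{e\in E}(f(A+e)-f(A))}{f(E)-f(A)}$ is well-defined.

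Next I would bound the numerator $\sum_{e\in E}\bigl(f(A+e)-f(A)\bigr)$. For every $e\in A$ the corresponding term is $0$, and for each $e\in E\setminus A$ monotonicity of $f$ gives $f(A+e)\le f(E)$, hence $f(A+e)-f(A)\le f(E)-f(A)$. Summing over the $\lambda$ hedges in $E\setminus A$ yields $\sum_{e\in E}\bigl(f(A+e)-f(A)\bigr)\le \lambda\bigl(f(E)-f(A)\bigr)$. Dividing by $f(E)-f(A)\ge 1$ and taking the floor gives $k^*(f)\le\bigl\lfloor\frac{\sum_{e\in E}(f(A+e)-f(A))}{f(E)-f(A)}\bigr\rfloor\le\lfloor\lambda\rfloor=\lambda$, as desired.

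There is no genuinely hard step here: the whole argument hinges on choosing $A$ to be the complement of a min-cut (so that $f(A)<f(E)$, which is exactly where the hypothesis $\lambda>0$ enters) and then observing that the per-hedge increments $f(A+e)-f(A)$ are individually capped by $f(E)-f(A)$ via monotonicity, with the number of nonzero terms being precisely $\lambda$. The only subtlety worth flagging is justifying $f(E)-f(A)\neq 0$ before dividing, which is handled by the component count above.
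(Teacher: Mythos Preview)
Your proof is correct and takes essentially the same approach as the paper: both choose $A=E\setminus\delta(S)$ for a minimum cut $S$, verify $f(A)<f(E)$, and bound each nonzero marginal $f(A+e)-f(A)$ by $f(E)-f(A)$. The only cosmetic difference is that the paper routes this last bound through the partition $\mathcal{P}$ of components of $(V,A)$ and the inequality $\Comp(\mathcal{P}(e))\ge 1$, whereas you invoke monotonicity of $f$ directly---both yield exactly the same inequality.
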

\begin{proof}
     Let $\emptyset\neq S\subsetneq V$ be a set of vertices such that $\lambda=|\delta(S)|$. Consider the subset $A:=E-\delta(S)$. Let $\mathcal{P}$ be the partition of $V$ with each part corresponding to a connected component in the sub-hedgegraph $(V, A)$. Then, $f(E) = |V|-1$ since $G$ is connected and $f(A)=|V|-|\mathcal{P}|$. 
     For every $e\in E\setminus A=\delta(S)=\delta(\mathcal{P})$, we have 
     \[
     f(A+e) - f(A) 
     = \Comp(V, A) - \Comp(V, A+e)
     = |\mathcal{P}|-\Comp(\mathcal{P}(e)).
     \]
     Hence,
    $$\begin{aligned}
        k^*(f) 
        &\leq \left \lfloor\frac{\sum_{e\in E}\left(f(A+e)-f(A)\right)}{f(E)-f(A)} \right \rfloor\\
        &\leq \left \lfloor\frac{\sum_{e\in E}\left(|\mathcal{P}|-\Comp(\mathcal{P}(e))\right)}{|\mathcal{P}|-1} \right \rfloor\\
        &= \left \lfloor\frac{\sum_{e\in \delta(S)}\left(|\mathcal{P}|-\Comp(\mathcal{P}(e))\right)}{|\mathcal{P}|-1}\right \rfloor \\
        &\leq \left \lfloor \frac{\sum_{e\in \delta(S)}\left(|\mathcal{P}|-1\right)}{|\mathcal{P}|-1}\right \rfloor \ \ \text{(since $\Comp(\mathcal{P}(e))\geq 1$)}\\
        &= |\delta(S)| = \lambda.
    \end{aligned}$$
\end{proof}

Lemmas \ref{lemma:wpc-is-atleast-half-connectivity}, \ref{lemma:hedge-weak-partition-connectivity-polymatroid}, and \ref{lemma:functional-strength-is-atmost-connectivity} together imply Lemma \ref{lemma:wpc-and-connectivity}. 
We now finish the proof of Theorem~\ref{theorem:log-approx-connectivity} using these lemmas and Lemma \ref{lemma:functional-strength-deterministic}.
\theoremlogapproxconnectivity*
\begin{proof}
    Let $G=(V, E)$ be hedgegraph and $f:2^E\rightarrow \mathbb{R}$ be the associated hedgegraph polymatroid. 
     Lemmas \ref{lemma:wpc-is-atleast-half-connectivity}, \ref{lemma:hedge-weak-partition-connectivity-polymatroid}, and \ref{lemma:functional-strength-is-atmost-connectivity} imply that 
    $$\left \lfloor\frac{\lambda}{2}\right \rfloor\le \wpc_G \leq k^*(f) \leq \lambda.$$
    We recall that $f(E)\le |V|-1$. 
    By Lemma~\ref{lemma:functional-strength-deterministic}, there is a deterministic polynomial time algorithm to compute an $O(\log{f(E)})=O(\log{|V|})$-approximation for $k^*(f)$, which implies an $O(\log |V|)$-approximation for $\lambda$ via the above inequalities.
\end{proof}


\section{Sampling and Connectivity}\label{section:sampling-hedge}
In this section, we prove that sampling every hedge with probability $\Omega(\frac{\log n}{\lambda})$ leads to a connected hedgegraph with high probability. 
As mentiond before, Călinescu, Chekuri, Vondrák \cite{cualinescu2009disjoint} showed that 
for a polymatroid $f: 2^{\mathcal{N}}\rightarrow \mathbb{Z}_{\ge 0}$, 
the functional strength $k^*(f)$ is within a $O(\log{f(\mathcal{N})})$-factor of the maximum number of disjoint bases. 
They proved this by showing that 
sampling each element $e\in \mathcal{N}$ independently with probability $p=\Omega(\frac{\log f(\mathcal{N})}{k^*(f)})$ gives a base with constant probability. 
Their sampling result can be strengthened by following their analysis to achieve a success probability of $1-1/f(\mathcal{N})$. We formally state this version of their result below. 
See Appendix~\ref{appendix:sampling-base} for a detailed proof.
\begin{restatable}{lemma}{lemmasamplingbase}\label{lemma:sampling-property}
    Let $f:2^{\mathcal{N}}\rightarrow \mathbb{Z}_{\geq 0}$ be a polymatroid with $f(\mathcal{N})\geq 2$. Let $p:=\min\{1, \frac{10\log f(\mathcal{N})}{k^*(f)}\}$ and $S\subseteq \mathcal{N}$ be a subset obtained by picking each element in $\mathcal{N}$ with probability at least $p$ independently at random. Then, $f(S)=f(\mathcal{N})$ with probability at least $1-\frac{1}{f(\cN)}$. 
\end{restatable}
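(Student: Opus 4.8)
The plan is to establish the sharp bound $\Pr[f(S) \neq f(\mathcal{N})] \le 1/f(\mathcal{N})$ by first reducing to a generic regime, then extracting a structural property of the polymatroid, and finally running an inductive estimate on the expected co-rank of the sample.

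\textbf{Setup and reduction.} Write $r := f(\mathcal{N})$ and $k := k^*(f)$. First note $k \ge 1$: for any $A$ with $f(A) < r$, submodularity of marginals gives $\sum_{e \in \mathcal{N}}(f(A+e)-f(A)) \ge f(A \cup \mathcal{N}) - f(A) = r - f(A)$, so every term in the minimum defining $k^*(f)$ is $\ge 1$. If $p = 1$ then $S = \mathcal{N}$ almost surely and the statement is trivial, so I will assume $p = 10\log r/k < 1$, i.e.\ $k > 10\log r$. Since $f$ is monotone, a coupling shows that enlarging the inclusion probabilities only increases $f(S)$ stochastically, so I may further assume every element is included with probability exactly $p$.

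\textbf{Key structural fact.} Call a flat $H \subsetneq \mathcal{N}$ of $f$ a \emph{maximal proper flat} if $f(H) < r$ and $f(H+e) = r$ for every $e \notin H$ (equivalently, $H$ is maximal among proper flats). I would first show every maximal proper flat satisfies $|\mathcal{N}\setminus H| \ge k^*(f)$: since $f(H+e)-f(H) = r - f(H)$ for $e \notin H$ and $=0$ for $e \in H$, we get $\sum_{e \in \mathcal{N}}(f(H+e)-f(H)) = |\mathcal{N}\setminus H|\,(r - f(H))$, so taking $A = H$ in the definition of $k^*(f)$ yields $k^*(f) \le |\mathcal{N}\setminus H|$. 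Since every proper flat $F$ (any $F$ with $f(F) < r$) lies inside some maximal proper flat, this gives $|\mathcal{N}\setminus F| \ge k$, and hence $\Pr[S \subseteq F] = (1-p)^{|\mathcal{N}\setminus F|} \le (1-p)^{k} \le e^{-pk} \le r^{-10}$.

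\textbf{The probabilistic estimate.} The plan is to bound $\mathbb{E}[\,r - f(S)\,]$ by induction on $r$ and finish with Markov's inequality, $\Pr[f(S) < r] = \Pr[\,r - f(S) \ge 1\,] \le \mathbb{E}[\,r - f(S)\,]$. For the inductive step, fix a maximal proper flat $H$ of corank $c := r - f(H) \ge 1$. Two elementary observations carry the argument: (i) if $f(S\cap H) = f(H)$ and $S \not\subseteq H$ then $f(S) = r$, since for any $e \in S \setminus H$ submodularity gives $f((S\cap H)+e) - f(S\cap H) \ge f(H+e)-f(H) = c$, whence $f(S) \ge f(H) + c = r$; consequently (ii) $r - f(S) \le f(H) - f(S\cap H)$ whenever $S \not\subseteq H$, while $r - f(S) \le r$ always. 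Taking expectations and using the structural fact,
\[
\mathbb{E}[\,r - f(S)\,] \;\le\; r \cdot \Pr[S \subseteq H] \;+\; \mathbb{E}[\,f(H) - f(S\cap H)\,] \;\le\; r^{-9} \;+\; \mathbb{E}[\,f(H) - f(S\cap H)\,],
\]
and $S\cap H$ is an independent $p$-sample in the restricted polymatroid $f|_H$ of total value $f(H) < r$, so the inductive hypothesis applies to the second term. The base case $r = 1$ is immediate: $f(S) = 0$ forces $S$ to contain no non-loop element, and there are $\ge k$ non-loops by the structural fact applied to $\mathrm{span}(\emptyset)$, so this has probability $\le (1-p)^k \le r^{-10}$. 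Unwinding along the chain $f = f_0,\ f_1 = f_0|_{H_0},\ f_2 = f_1|_{H_1},\ \dots$, with total values $r = r_0 > r_1 > \cdots$ (so the chain has length $\le r$), gives $\mathbb{E}[\,r - f(S)\,] \le \sum_{j\ge 0} r_j\,(1-p)^{k^*(f_j)}$; each summand is $\le r\,(1-p)^{k^*(f_j)}$, so if the strengths $k^*(f_j)$ all stayed equal to $k$ this would be at most $r \cdot r^{-9} = r^{-8} \le 1/r$.

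\textbf{Main obstacle.} The delicate point is exactly that $k^*(f|_H)$ can be strictly smaller than $k^*(f)$ --- restricting a polymatroid to a hyperplane may decrease functional strength --- so $(1-p)^{k^*(f_j)}$ need not be as small as $r^{-10}$ at deeper levels of the recursion. I expect the main work to be showing that one can choose the maximal proper flat $H$ at each step (or follow a carefully selected chain of flats) so that every functional strength encountered stays within a constant factor of $k^*(f)$; combined with $p\,k^*(f) = 10\log r$ this keeps each term of the sum at $r^{-\Omega(1)}$, and the at-most-$r$ terms then total $\le 1/r$. This bookkeeping is precisely the quantitative core of the C\u{a}linescu--Chekuri--Vondr\'{a}k analysis, and the only new ingredient is to track the constants carefully enough to land at failure probability $1/f(\mathcal{N})$ rather than an absolute constant.
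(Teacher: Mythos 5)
Your proposal has a genuine gap, and you identify it yourself: the inductive step requires that the functional strength of the restriction $f|_H$ to the chosen maximal proper flat (and of every subsequent restriction along the chain) stay within a constant factor of $k^*(f)$, and you offer no argument for this. It is not true in general that restricting a polymatroid to a flat preserves functional strength up to constants, so the terms $(1-p)^{k^*(f_j)}$ at deeper levels of your recursion cannot be bounded by $r^{-\Omega(1)}$ as written, and the telescoping estimate $\mathbb{E}[r-f(S)]\le \sum_j r_j(1-p)^{k^*(f_j)}\le 1/r$ does not follow. Since this is exactly the quantitative heart of the bound, the proof is incomplete rather than merely missing bookkeeping. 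The preliminary pieces (the reduction to $p<1$, the monotone coupling, the fact that a maximal proper flat $H$ has $|\mathcal{N}\setminus H|\ge k^*(f)$, and observations (i)--(ii)) are correct but do not close the argument.

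The missing idea is that one never needs to pass to restrictions at all: the definition of $k^*(f)$ quantifies over \emph{every} subset $A\subseteq\mathcal{N}$, i.e.\ $\sum_{e\in\mathcal{N}}\bigl(f(A+e)-f(A)\bigr)\ge k^*(f)\bigl(f(\mathcal{N})-f(A)\bigr)$, so one can apply it directly to the current sampled set rather than to flats of a restricted polymatroid. The paper's proof does exactly this: expose the elements in a uniformly random order, let $S_i$ be the sampled elements among the first $i$, and show that the residual $f(\mathcal{N})-f(S_i)$ contracts in expectation by a factor $\bigl(1-\tfrac{p\,k^*(f)}{|\mathcal{N}|}\bigr)$ at each step, because conditioned on $S_{i-1}$ the next element is uniform over $\mathcal{N}\setminus S_{i-1}$ and is kept with probability at least $p$, and the average marginal is at least $\tfrac{k^*(f)}{|\mathcal{N}|}\bigl(f(\mathcal{N})-f(S_{i-1})\bigr)$ by the displayed inequality. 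Iterating over all $|\mathcal{N}|$ steps gives $\mathbb{E}[f(\mathcal{N})-f(S)]\le e^{-p\,k^*(f)}\,r=r^{-9}<1/r$, and Markov's inequality finishes. This single-pass contraction argument avoids any control of strengths of sub-polymatroids, which is precisely where your chain-of-flats route gets stuck.
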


We now restate and prove Theorem~\ref{theorem:random-sampling}. 
\theoremrandomsampling*
\begin{proof}
    Let $f:2^E\rightarrow \Zg$ be the hedgegraph polymatroid of $G$. We observe that $f(E)=n-1$ since $G$ is connected. Moreover, if a set $A\subseteq E$ of hedges is a base of $f$, then
    $$|V|-\Comp(V, A)=f(A)=f(E)=n-1,$$
    which implies that the sub-hedgegraph $(V, A)$ is connected.
    
    Hence, by Lemma~\ref{lemma:sampling-property}, sampling each hedge in $E$ with probability at least $p:=\min\{1, \frac{10 \log f(E)}{k^*(f)}\}$ gives a connected hedgegraph with probability at least $1-\frac{1}{f(E)}=1-\frac{1}{n-1}$. By Lemmas \ref{lemma:wpc-is-atleast-half-connectivity} and \ref{lemma:hedge-weak-partition-connectivity-polymatroid}, we have
    $$p\leq \frac{10\log f(E)}{k^*(f)}<\frac{10\log n}{k^*(f)}\leq \frac{10\log n}{\wpc_G}\leq \frac{20\log n}{\lambda},$$
    which implies that sampling each hedge $e\in E$ with probability $\Omega(\frac{\log n}{\lambda})$ gives a connected hedgegraph with probability at least $1-\frac{1}{n-1}\ge 1-2/n$.
\end{proof}
\section{Partition Sparsifiers}\label{section:sparsification}
In order to perform partition sparsification in hedgegraphs, we rely on a general result on quotient sparsification of polymatroids due to Quanrud \cite{quanrud2024quotient}. We define the notion of quotients now. 
Let $f:2^\cN\rightarrow \Zg$ be a polymatroid.  For a set $S\subseteq \cN$, the \emph{span} of $S$ is the set of elements with marginal value $0$ with respect to $S$, i.e., $\text{span}(S):=\{e\in \cN:f(e+S)-f(S)=0\}$.
We note that for every two sets $A\subseteq B$, $\text{span}(A)\subseteq \text{span}(B)$ since $f$ is a monotone submodular function. A set $S\subseteq \cN$ is \emph{closed} if $S=\text{span}(S)$.
A set $Q\subseteq \cN$ is a \emph{quotient} of $f$ if $Q=\cN\setminus \text{span}(S)$ for some set $S\subseteq \cN$. We define quotient sparsifiers as follows: let $w: \cN\rightarrow \mathbb{R}_{\geq 0}$ be a weighting of the elements in the ground set. For a subset $S\subseteq \mathcal{N}$, we denote $w(S):=\sum_{e\in S}w(e)$. 
A weighting $\hat{w}:\cN\rightarrow \mathbb{R}_{\geq 0}$ is a \emph{$\varepsilon$-quotient sparsifier} of $f$ with respect to $w$ if $|w(Q)-\hat{w}(Q)|\leq \varepsilon\cdot w(Q)$ for every quotient $Q\subseteq \cN$ of $f$. 
Quanrud \cite{quanrud2024quotient} showed the following quotient sparsification result for polymatroids.

\begin{theorem}\cite{quanrud2024quotient}\label{theorem:improve-quotient-sparsification}
    Let $f:2^{\cN}\rightarrow \Zg$ be a polymatroid with non-negative weights $w:\cN\rightarrow \mathbb{R}_{\geq 0}$, $r=f(\cN)$, and $\varepsilon>0$. Suppose for every integer $t\in \Z_+$, there are $r^{O(t)}$ quotients $Q$ of $f$ with $w(Q) \le t\kappa_f$. Then, there exists a $\varepsilon$-quotient sparsifier $\hat{w}:\cN\rightarrow \mathbb{R}_{\geq 0}$ of $f$ with respect to $w$ with support size $|\text{support}(\hat{w})|=O(r\log r / \varepsilon^2)$ and moreover, given evaluation oracle access to $f$ and the weighting $w$, such a sparsifier can be computed with high probability in randomized polynomial time.
\end{theorem}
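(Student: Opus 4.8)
The plan is to prove the theorem by strength-based importance sampling, generalizing the Bencz\'ur--Karger \cite{benczur1996approximating} graph sparsification scheme and its refinement by Fung, Hariharan, Harvey, and Panigrahi \cite{fung2011general} to polymatroids via the principal (strength) decomposition. For each $e\in\cN$, define its \emph{strength} $\kappa_e$ to be the maximum, over subsets $T\subseteq\cN$ with $e\in T$, of the $w$-strength of the restriction of $f$ to $T$, and fix an oversampling parameter $\rho:=c\,\varepsilon^{-2}\ln r$ for a large enough absolute constant $c$. Build $\hat w$ by independently keeping each $e$ with probability $p_e:=\min\{1,\,\rho\,w(e)/\kappa_e\}$ and setting $\hat w(e):=w(e)/p_e$ when $e$ is kept and $\hat w(e):=0$ otherwise; then $\Ex[\hat w(e)]=w(e)$, so $\Ex[\hat w(Q)]=w(Q)$ for every quotient $Q$. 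For the support bound I would first establish the accounting inequality $\sum_{e\in\cN}w(e)/\kappa_e\le f(\cN)=r$: the principal decomposition of $(f,w)$ gives a chain of closed sets $\emptyset=S_0\subsetneq S_1\subsetneq\cdots\subsetneq S_m=\cN$ with strengths $\gamma_1>\gamma_2>\cdots>\gamma_m=\kappa_f$ such that every $e\in S_i\setminus S_{i-1}$ has $\kappa_e=\gamma_i$ and $w(S_i\setminus S_{i-1})\le\gamma_i\bigl(f(S_i)-f(S_{i-1})\bigr)$, so telescoping yields $\sum_e w(e)/\kappa_e=\sum_i w(S_i\setminus S_{i-1})/\gamma_i\le\sum_i\bigl(f(S_i)-f(S_{i-1})\bigr)=r$. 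Hence $\Ex[|\text{support}(\hat w)|]=\sum_e p_e\le\rho r=O(r\log r/\varepsilon^2)$, and since the indicators $\ind[e\text{ kept}]$ are independent, a Chernoff bound makes this hold with high probability.

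The analytic core is uniform concentration of $\hat w(Q)$ around $w(Q)=\Ex[\hat w(Q)]$ over all quotients $Q$. I would decompose any quotient $Q$ along the strength chain: with $Q^{(i)}:=Q\cap(S_i\setminus S_{i-1})$ one has $\hat w(Q)-w(Q)=\sum_{i}\bigl(\hat w(Q^{(i)})-w(Q^{(i)})\bigr)$, so it suffices to prove that with high probability $|\hat w(Q^{(i)})-w(Q^{(i)})|\le\varepsilon\,w(Q^{(i)})$ holds \emph{simultaneously} for every level $i$ and every quotient $Q$. Two facts make each level tractable. First, since restriction and contraction preserve quotients, $Q^{(i)}$ is a quotient of the minor $f_{|S_i}/S_{i-1}$ obtained by restricting $f$ to $S_i$ and contracting $S_{i-1}$; this minor has rank at most $r$, has $w$-strength exactly $\gamma_i$, and each of its elements has strength $\gamma_i$ in it, so it is sampled with probability $\min\{1,\,\rho\,w(e)/\gamma_i\}$ and $\hat w(e)\le\gamma_i/\rho$. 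Second, in any polymatroid with element weights every nonempty quotient has weight at least the $w$-strength of the polymatroid (the defining closed set witnesses this), so $w(Q^{(i)})\ge\gamma_i$ whenever $Q^{(i)}\ne\emptyset$. Grouping the level-$i$ quotient pieces by the band $w(Q^{(i)})\in[t\gamma_i,(t+1)\gamma_i)$, there are $r^{O(t)}$ of them by the counting hypothesis applied to this minor, and for each a Bernstein/Chernoff bound gives failure probability $2\exp\bigl(-\Omega(\varepsilon^2\rho\, w(Q^{(i)})/\gamma_i)\bigr)\le r^{-\Omega(t)}$ once $c$ is large enough; summing the resulting geometric series over $t$ and over the at most $m\le r$ levels drives the total failure probability to $o(1)$. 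Finally, the principal decomposition, and hence all the $\kappa_e$, can be computed in strongly polynomial time from an evaluation oracle for $f$ by iterated parametric submodular minimization, exactly as in Lemma~\ref{lemma:strength-in-polynomial-time} and \cite{Nag07}, so the entire construction runs in randomized polynomial time.

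The step I expect to be the main obstacle is making this level-by-level union bound rigorous, and in particular the ``descent'' of the counting hypothesis. The hypothesis is stated for $f$ itself, whereas the argument applies it to the minors $f_{|S_i}/S_{i-1}$ arising in the decomposition; one must either verify that the $r^{O(t)}$ quotient bound is inherited by these minors, or---as is the case for the polymatroids of interest, such as the hedgegraph polymatroid---establish the counting bound directly as a property closed under the restrictions and contractions used above. The remaining technicalities---apportioning the $\varepsilon\,w(Q)$ error budget across up to $r$ levels so that the union-bounded series still converges after a constant-factor inflation of $c$, dealing with the truncation $p_e=\min\{1,\cdot\}$, and absorbing the floors hidden in the various strength quantities---are routine once the descent of the counting bound is in hand.
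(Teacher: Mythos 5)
The paper does not prove this statement---it is imported verbatim from Quanrud \cite{quanrud2024quotient}---so your attempt can only be judged on its own terms, and on those terms it has a genuine gap at its analytic core. The level-by-level argument hinges on the claim that for a quotient $Q$ of $f$ the piece $Q^{(i)}=Q\cap(S_i\setminus S_{i-1})$ is a quotient of the minor $f_{|S_i}/S_{i-1}$. Restriction does preserve quotients (that is Quanrud's restriction lemma), but contraction does not interact with a \emph{fixed} quotient this way: a quotient is the complement of a closed set, and $\mathrm{span}(A\cup S_{i-1})$ can strictly contain $\mathrm{span}(A)\cup S_{i-1}$, so $Q\setminus S_{i-1}$ need not be a quotient of the contraction. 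This already fails for graphic matroids: take the triangle with edges $ab,bc,ca$, the closed set $S_{i-1}=\{ab\}$, and the quotient $Q=\delta(\{a\})=\{ab,ca\}$; in the contraction of $ab$ the only quotients are $\emptyset$ and $\{bc,ca\}$, so the piece $\{ca\}$ is not one. Once this claim is gone, both ingredients of your per-level analysis collapse: the lower bound $w(Q^{(i)})\ge\gamma_i$ (which you justify by ``every nonempty quotient has weight at least the strength,'' valid only for genuine quotients of the minor) and the application of any quotient-counting bound to the pieces.

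There is a second, independent problem that you flag but underestimate: the hypothesis of the theorem only bounds the number of quotients of $f$ itself of weight at most $t\kappa_f$, where $\kappa_f$ is the \emph{global} strength. Your union bound needs counts at weight scale $t\gamma_i$ inside minors, and $\gamma_i/\kappa_f$ is unbounded; routing through the global hypothesis gives $r^{O(t\gamma_i/\kappa_f)}$ candidates against a failure probability of only $r^{-\Omega(ct)}$ per candidate, so the series does not close. Hence the ``descent'' of the counting bound is not routine bookkeeping but an additional structural assumption that the stated hypothesis simply does not supply (for the hedgegraph application the paper proves such a counting bound directly, Lemma~\ref{lemma:hedge-counting}, but the theorem you are proving is the black-box statement). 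Smaller unproven steps---that every $e\in S_i\setminus S_{i-1}$ has $\kappa_e=\gamma_i$, and the accounting inequality $\sum_e w(e)/\kappa_e\le r$---need the principal-partition facts spelled out, but they are secondary. Note also that Quanrud's own argument is organized differently (compare the random-closed-set counting machinery recalled in the proof of Lemma~\ref{lemma:hedge-counting}), so your Bencz\'ur--Karger/FHHP-style route would in any case be a new proof, and as written it does not go through.
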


We begin by showing that quotients of the hedgegraph polymatroid correspond to a subset of hedges crossing some vertex partition.

\begin{lemma}\label{lemma:hedgegraph-cut-quotient}
    Let $G=(V,E)$ be a hedgegraph and $f:2^E\rightarrow \Zg$ be the corresponding hedgegraph polymatroid. Then, a subset $Q\subseteq E$ is a quotient of $f$ if and only if $Q=\delta(\mathcal{P})$ for some partition $\mathcal{P}$ of $V$.
\end{lemma}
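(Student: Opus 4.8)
The plan is to establish both directions of the equivalence by analyzing the span operator of the hedgegraph polymatroid $f$ in terms of connected components. The key observation to nail down first is: for a subset $A \subseteq E$, we have $\text{span}(A) = \{e \in E : \Comp(V, A+e) = \Comp(V, A)\}$, i.e., $e \in \text{span}(A)$ iff adding $e$ does not merge any connected components of the sub-hedgegraph $(V,A)$. This is immediate from the definition $f(A+e)-f(A) = \Comp(V,A) - \Comp(V,A+e)$ together with the fact that adding a hedge can only decrease or maintain the number of components. Equivalently, $e \in \text{span}(A)$ iff for every hyperedge $h \in e$, all vertices of $h$ lie in a single connected component of $(V,A)$.

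For the forward direction, I would take a closed set $S = \text{span}(S)$ (so that $Q = E \setminus S$), let $\mathcal{P}$ be the partition of $V$ whose parts are the connected components of $(V, S)$, and show $Q = \delta(\mathcal{P})$. A hedge $e \in Q = E \setminus S$ means $e \notin \text{span}(S)$, so by the observation some hyperedge $h \in e$ has two vertices in different components of $(V,S)$, i.e., $h$ meets two parts of $\mathcal{P}$, hence $e \in \delta(\mathcal{P})$. Conversely, if $e \in \delta(\mathcal{P})$ then some $h \in e$ meets two parts of $\mathcal{P}$, i.e., two distinct components of $(V,S)$, so adding $e$ would merge components, hence $e \notin \text{span}(S) = S$, so $e \in Q$. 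This gives $Q = \delta(\mathcal{P})$.

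For the reverse direction, given an arbitrary partition $\mathcal{P}$ of $V$, I want to exhibit a set $S \subseteq E$ with $\delta(\mathcal{P}) = E \setminus \text{span}(S)$; the natural candidate is $S := E \setminus \delta(\mathcal{P}) = E[\mathcal{P}]$. One must check $\text{span}(S) = S$ — actually it suffices to show $\text{span}(E \setminus \delta(\mathcal{P})) = E \setminus \delta(\mathcal{P})$, which by definition of quotient ($Q = \cN \setminus \text{span}(S)$ for \emph{some} $S$) gives exactly $\delta(\mathcal{P}) = E \setminus \text{span}(S)$ with $S = E \setminus \delta(\mathcal{P})$. Note every part of $\mathcal{P}$ is contained in a single connected component of $(V, S)$ since no hedge in $S$ crosses $\mathcal{P}$; so each component of $(V,S)$ is a union of parts of $\mathcal{P}$. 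Then for any $e \in \delta(\mathcal{P})$, some $h \in e$ meets two parts of $\mathcal{P}$; I need these two parts to lie in \emph{different} components of $(V,S)$ so that $e \notin \text{span}(S)$. This is the one subtle point — it could a priori fail if the two parts got connected through other hedges in $S$.

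The main obstacle is precisely resolving this: it is not true in general that two parts meeting a common hyperedge of a crossing hedge lie in different components of $(V, E[\mathcal{P}])$. The fix is to not insist on $S = E[\mathcal{P}]$ exactly but to refine: replace $\mathcal{P}$ by the \emph{coarser} partition $\mathcal{P}'$ whose parts are the connected components of $(V, E[\mathcal{P}])$. Then $E[\mathcal{P}'] = E[\mathcal{P}]$ (a hedge crosses $\mathcal{P}'$ iff it has a hyperedge meeting two components of $(V, E[\mathcal{P}])$; such a hedge is not in $E[\mathcal{P}]$, and conversely any hedge crossing $\mathcal{P}$ but, say, not crossing $\mathcal{P}'$ would have all its hyperedges inside single $\mathcal{P}'$-parts, but a hedge in $E[\mathcal{P}]$ by construction stays inside components, and one checks $\delta(\mathcal{P}) = \delta(\mathcal{P}')$), so $\delta(\mathcal{P}) = \delta(\mathcal{P}')$ and now with respect to $\mathcal{P}'$ the components of $(V, S)$ are \emph{exactly} the parts, so a crossing hedge genuinely connects two distinct components and thus lies outside $\text{span}(S)$; meanwhile $S = E[\mathcal{P}'] = \text{span}(S)$ holds by the observation since no hedge in $S$ merges components of $(V,S)$. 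Thus $\delta(\mathcal{P}) = \delta(\mathcal{P}') = E \setminus \text{span}(E[\mathcal{P}'])$ is a quotient. I would write the reverse direction this way, first proving the "closure" fact $\text{span}(E[\mathcal{P}]) = E[\mathcal{P}']$ and then concluding.
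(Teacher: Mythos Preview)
Your forward direction (quotient $\Rightarrow$ partition) is correct and essentially identical to the paper's.

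In the reverse direction you have the containment backwards. You write ``every part of $\mathcal{P}$ is contained in a single connected component of $(V,S)$ since no hedge in $S$ crosses $\mathcal{P}$; so each component of $(V,S)$ is a union of parts of $\mathcal{P}$.'' In fact the opposite holds: since every hyperedge of every hedge in $S=E[\mathcal{P}]$ lies inside a single part of $\mathcal{P}$, connectivity in $(V,S)$ can never cross parts, so each connected component of $(V,S)$ is contained in a single part of $\mathcal{P}$ (parts are unions of components, not the reverse). Once you see this, your ``subtle point'' evaporates: if $e\in\delta(\mathcal{P})$ has a hyperedge $h$ meeting two distinct parts of $\mathcal{P}$, the two witnessing vertices of $h$ lie in distinct components of $(V,S)$ automatically, so $\Comp(V,S+e)<\Comp(V,S)$ and $e\notin\text{span}(S)$. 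Hence $\text{span}(E[\mathcal{P}])=E[\mathcal{P}]$ directly, and $\delta(\mathcal{P})=E\setminus\text{span}(E[\mathcal{P}])$ is a quotient with no further work. This is exactly what the paper does (it simply asserts $\Comp(V,(E\setminus Q)+e)<\Comp(V,E\setminus Q)$ for $e\in\delta(\mathcal{P})$, which is justified by the observation above).

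Your detour through the coarsening $\mathcal{P}'$ (the partition into components of $(V,E[\mathcal{P}])$) does salvage a correct proof: your verification that $\delta(\mathcal{P})=\delta(\mathcal{P}')$ and that $\text{span}(E[\mathcal{P}'])=E[\mathcal{P}']$ goes through. But it is unnecessary, and it was prompted by the reversed containment. Note also that $\mathcal{P}'$ \emph{refines} $\mathcal{P}$ rather than coarsens it, so calling it ``coarser'' is another symptom of the same reversal.
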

\begin{proof}
    We first show that for every partition $\mathcal{P}$ of $V$, the hedge set $Q=\delta(\mathcal{P})$ is a quotient of $f$. We note that for every hedge $e\in \delta(\mathcal{P})$, we have
    $$\Comp(V, (E\setminus Q)+e) < \Comp(V, E\setminus Q).$$
    This implies that
    $$\begin{aligned}
        f((E\setminus Q)+e) &= |V| - \Comp(V, (E\setminus Q)+e) \\
        & > |V| - \Comp(V, E\setminus Q) \\
        &= f(E\setminus Q),
    \end{aligned}$$
    which shows that $E\setminus Q$ is a closed set and $Q$ is a quotient of $f$.

    We now prove the other direction. Let $Q\subseteq E$ be a quotient of $f$. We construct a partition $\mathcal{P}$ of $V$, where each part corresponds to one connected component in the subhedgegraph $(V, E\setminus Q)$. We note that for every hedge $e\in E\setminus Q$, we have $e\not\in \delta(\mathcal{P})$, which implies that $\delta(\mathcal{P})\subseteq Q$. Meanwhile, $E\setminus Q$ is a closed set, which implies that for every hedge $e\in Q$, we have $f((E\setminus Q)+e)>f(E\setminus Q)$. This shows that
    $$\Comp(V, (E\setminus Q)+e) < \Comp(V, E\setminus Q)=|\mathcal{P}|,$$
    which further implies that $e\in \delta(\mathcal{P})$. Hence, $Q\subseteq \delta(\mathcal{P})$, together with $\delta(\mathcal{P})\subseteq Q$ indicating that $Q=\delta(\mathcal{P})$.
\end{proof}


Next, we show that the hedgegraph polymatroid satisfies the condition in Theorem~\ref{theorem:improve-quotient-sparsification}.

\begin{lemma}\label{lemma:hedge-counting}
    Let $G=(V, E)$ be a $n$-vertex hedgegraph with hedge-weights $w: E \rightarrow \Rg$. Let $f:2^E\rightarrow \Zg$ be the corresponding hedgegraph polymatroid. Then, for every integer $t\geq 1$, there are at most $n^{O(t)}$ quotients $Q$ of $f$ with $w(Q)\le t\kappa_w(f)$.
\end{lemma}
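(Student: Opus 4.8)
The plan is to use Lemma~\ref{lemma:hedgegraph-cut-quotient} to reduce to counting small vertex partitions, and then run a Karger-style random contraction argument directly on the hedgegraph. By Lemma~\ref{lemma:hedgegraph-cut-quotient}, every quotient $Q$ of $f$ equals $\delta(\mathcal{P}_Q)$, where $\mathcal{P}_Q$ is the partition of $V$ into connected components of $(V, E\setminus Q)$; since $\delta(\mathcal{P}_Q)=Q$, the map $Q\mapsto \mathcal{P}_Q$ is injective, so it suffices to bound by $n^{O(t)}$ the number of vertex partitions $\mathcal{P}$ with $w(\delta(\mathcal{P}))\le t\kappa_w(f)$. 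I may assume $w(e)>0$ for all $e$ (zero-weight hedges contribute to no partition capacity and can be dropped) and that $G$ is connected (otherwise work per component), so $\kappa_w(f)>0$. A short computation mirroring the proof of Lemma~\ref{lemma:partition-connectivity-equals-strength} — the minimizer in $\kappa_w(f)$ can be taken to be a closed set $E\setminus\delta(\mathcal{P})$, and for the component partition $\mathcal{P}$ one has $f(E)-f(E\setminus\delta(\mathcal{P}))=|\mathcal{P}|-1$ — gives $\kappa_w(f)=\min_{\mathcal{P}}\frac{w(\delta(\mathcal{P}))}{|\mathcal{P}|-1}$ over partitions with at least two parts. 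In particular any $\mathcal{P}$ with $w(\delta(\mathcal{P}))\le t\kappa_w(f)$ has $|\mathcal{P}|\le t+1$.

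Next I would set up the contraction. Maintain a partition $\pi$ of $V$ (the current contracted vertex set), initialized to all singletons. While $|\pi|>t+1$, the hedges crossing $\pi$ are exactly $\delta(\pi)$; pick one such hedge $e$ with probability $w(e)/w(\delta(\pi))$ and \emph{contract} it: for each hyperedge $h\in e$, merge all parts of $\pi$ that $h$ meets into one part. Since $e\in\delta(\pi)$, some hyperedge of $e$ meets $\ge 2$ parts, so $|\pi|$ strictly decreases. Fix a target $\mathcal{P}^*$ with $c:=w(\delta(\mathcal{P}^*))\le t\kappa_w(f)$ (so $|\mathcal{P}^*|\le t+1$), and call $\mathcal{P}^*$ \emph{alive} while $\pi$ refines $\mathcal{P}^*$. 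If the chosen hedge $e\notin\delta(\mathcal{P}^*)$, every hyperedge of $e$ lies inside a single part of $\mathcal{P}^*$, so each merge happens inside one part of $\mathcal{P}^*$ and $\mathcal{P}^*$ stays alive; if $e\in\delta(\mathcal{P}^*)$, a witness hyperedge straddles two parts of $\mathcal{P}^*$ and the merge kills $\mathcal{P}^*$. While $\mathcal{P}^*$ is alive, the ``bad'' crossing hedges lie in $\delta(\pi)\cap\delta(\mathcal{P}^*)$, of total weight $\le c$, whereas $w(\delta(\pi))\ge \kappa_w(f)(|\pi|-1)$ by the formula above; hence a step with $|\pi|=m$ keeps $\mathcal{P}^*$ alive with probability at least $1-\frac{c}{\kappa_w(f)(m-1)}\ge 1-\frac{t}{m-1}$. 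The values of $|\pi|$ over the steps form a strictly decreasing sequence inside $\{t+2,\dots,n\}$ and each factor lies in $(0,1)$, so
$$\Pr[\mathcal{P}^*\text{ alive when the process halts}]\ \ge\ \prod_{m=t+2}^{n}\left(1-\frac{t}{m-1}\right)\ =\ \binom{n-1}{t}^{-1}\ \ge\ n^{-t}.$$

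Finally I would aggregate over targets. When the process halts, $|\pi|\le t+1$, so $\pi$ has at most $(t+1)^{t+1}$ coarsenings, and every alive target is one of them; thus the number of alive partitions of capacity at most $t\kappa_w(f)$ is at most $(t+1)^{t+1}$ deterministically. Taking expectations, if $N$ denotes the number of partitions $\mathcal{P}$ with $w(\delta(\mathcal{P}))\le t\kappa_w(f)$, then $N\cdot n^{-t}\le \sum_{\mathcal{P}}\Pr[\mathcal{P}\text{ alive when the process halts}]\le (t+1)^{t+1}$, so $N\le (t+1)^{t+1}n^{t}=n^{O(t)}$ (the degenerate range $n\le t+1$ is trivial, since then there are only $n^{O(n)}=n^{O(t)}$ partitions altogether). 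Composing with the injection from the first paragraph bounds the number of quotients $Q$ with $w(Q)\le t\kappa_w(f)$ by $n^{O(t)}$.

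The delicate point — and where hedgegraphs genuinely diverge from hypergraphs — is the definition of ``contract a hedge'': a hedge dies only when \emph{all} its hyperedges become monochromatic, so a single step must collapse every hyperedge of the chosen hedge simultaneously, and one must check that this (i) strictly shrinks $\pi$, (ii) preserves a refined $\mathcal{P}^*$ exactly when the chosen hedge is good, and (iii) never inflates the bad weight beyond $w(\delta(\mathcal{P}^*))$. A naive reduction to hypergraph quotient counting (replacing a hedge by the union of its hyperedges, or by its hyperedges taken separately) fails precisely because the hedge-cut function is non-submodular, so the argument must be carried out on the hedgegraph itself; once the contraction is defined correctly, the probabilistic bookkeeping is identical to Karger's and Quanrud's. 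The only remaining mild technicality is verifying $\kappa_w(f)=\min_{\mathcal{P}} w(\delta(\mathcal{P}))/(|\mathcal{P}|-1)$, which should follow quickly from the span/closed-set correspondence already established in Lemma~\ref{lemma:hedgegraph-cut-quotient}.
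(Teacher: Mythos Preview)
Your proposal is correct and takes essentially the same approach as the paper. Your random hedge-contraction process is exactly the hedgegraph specialization of Quanrud's process (sample an element by weight from the complement of the current closed set, add it, and close): under the correspondence of Lemma~\ref{lemma:hedgegraph-cut-quotient}, the closed set $R$ is $E\setminus\delta(\pi)$, sampling from $E\setminus R$ is sampling from $\delta(\pi)$, ``add and close'' is contraction, and $f(E)-f(R)=|\pi|-1$, so your survival bound $1/\binom{n-1}{t}$ is identical to the paper's cited $1/\binom{f(E)}{t}$; your final coarsening count $(t+1)^{t+1}$ is precisely the paper's improvement over Quanrud's generic $|E|^{t+1}$ bound. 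The only real difference is presentational: you rebuild the survival probability from scratch via the identity $\kappa_w(f)=\min_{\mathcal{P}}w(\delta(\mathcal{P}))/(|\mathcal{P}|-1)$, whereas the paper quotes that step from Quanrud and concentrates solely on sharpening the count of quotients disjoint from the terminal closed set.
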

\begin{proof}
    We first recall the ideas of the proof in Quanrud \cite{quanrud2024quotient}. For a polymatroid $f:2^{\mathcal{N}}\rightarrow \Zg$ with weights $w: \cN\rightarrow \Rg$, we consider a random collection of quotients of $f$ generated as follows. Initially, let $R=\emptyset$. While $f(R)<f(\mathcal{N})-t-1$, 
    we sample a random element $e\in \mathcal{N}\setminus R$ with probability proportional to its weight $w(e)$, add the element $e$ to $R$, and replace $R$ with its span $\text{span}(R)$ to ensure that it is always a closed set. 
    At the end of the while-loop, we obtain a random closed set $R$ with $f(R)\geq f(\mathcal{N})-t-1$. We return all quotients disjoint from $R$ with total weight at most $t\kappa_w(f)$. Quanrud \cite{quanrud2024quotient} showed that there are at most $|\mathcal{N}|
    ^{t+1}$ quotients disjoint from $R$ and for every fixed quotient $Q\subseteq \mathcal{N}$ with weight at most $t\kappa_w(f)$, $Q$ is disjoint from $R$ with probability at least $1/\binom{f(\mathcal{N})}{t}$, resulting in an overall bound of $O(|\mathcal{N}|^{t+1} f(\mathcal{N})^{t})$ quotients of weight at most $t\kappa_w(f)$.
    
    We improve this bound for the hedgegraph polymatroid by showing that there are at most $(t+2)^{O(t)}$ quotients disjoint from $R$ in the following paragraph. 
    By Quanrud's argument, we already know that every fixed quotient with weight at most $t\kappa_w(f)$ is 
    disjoint from $R$ with probability at least $1/\binom{f(E)}{t}=n^{-O(t)}$. This implies that there are at most $n^{O(t)}\cdot (t+2)^{O(t)}=n^{O(t)}$ quotients of $f$ with weight at most $t\kappa_w(f)$.

    We now consider the hedgegraph polymatroid $f:2^E\rightarrow \Zg$ and prove that there are at most $(t+2)^{O(t)}$ quotients disjoint from $R$. We note that $R$ is a closed set of $f$ with
    $$f(R)\geq f(E)-t-1=|V|-t-2,$$
    which implies that $\Comp(V, R)=|V|-f(R) \leq t+2$. Since $R$ is a closed set, $E\setminus R$ is a quotient. We construct a partition $\mathcal{P}$ of $V$, where each part corresponds to one connected component in the subhedgegraph $(V,R)$. For every hedge $e\in R$, we have $e\not\in \delta(\mathcal{P})$, which implies that $\delta(\mathcal{P})\subseteq E\setminus R$. Meanwhile, for every hedge $e\in R$, we have $f(R+e)>f(R)$ because $R$ is a closed set. This shows that
    $$\Comp(V, R\cup \{e\})<\Comp(V,R)=|\mathcal{P}|,$$
    which further implies that $e\in \delta(\mathcal{P})$. Hence, $E\setminus R\subseteq \delta(\mathcal{P})$, together with $\delta(\mathcal{P})\subseteq E\setminus R$ indicating that $E\setminus R=\delta(\mathcal{P})$. By Lemma~\ref{lemma:hedgegraph-cut-quotient}, for every quotient $Q\subseteq E\setminus R$, we have $Q=\delta(\mathcal{P}')$ for some partition $\mathcal{P}'$ of $V$. We note that $\mathcal{P}'$ can be obtained by merging some parts in $\mathcal{P}$. In each merge operation, we can select two different parts and merge them. Since $|\mathcal{P}|\leq t+2$, we can do at most $t+1$ merge operations, of which each has at most $\binom{|\mathcal{P}|}{2}\leq (t+2)^2$ options.
    Hence, there are at most
    $$\sum_{i=0}^{t+1}\binom{(t+2)^2}{i}\leq (t+2)^{2(t+1)}$$
    quotients disjoint from $R$. 
    
\end{proof}

We observe that for a hedgegraph polymatroid $f$ associated with a hedgegraph $G=(V, E)$, we have that $r=f(E)\le |V|$. 
Theorem \ref{theorem:improve-quotient-sparsification} and Lemmas \ref{lemma:hedgegraph-cut-quotient} and \ref{lemma:hedge-counting}, along with the observation that the restriction of a hedgegraph polymatroid to a subset of hedges is again a hedgegraph polymatroid, together 
imply Theorem~\ref{theorem:sparsification}.
\section{Conclusion}\label{sec:conclusion}
We investigated structural and algorithmic aspects of hedgegraphs via the hedgegraph polymatroid. The hedgegraph polymatroid allowed us to use properties of matroids and
submodularity even though the cut function of a hedgegraph is not submodular. We illustrated some of the consequences of this investigation.
We mention one more polymatroidal result that leads to an interesting conclusion on hedgegraphs: there exists 
an $O(\log^2{|V|})$-competitive randomized algorithm to compute an online packing of maximum number of disjoint connected spanning hedgegraphs when vertices are known apriori and 
hedges arrive online. This is a direct consequence of applying the polymatroidal result of \cite{CCZ25} to the hedgegraph polymatroid. 
We believe that the hedgegraph polymatroid is a promising tool for both algorithmic and structural aspects of hedgegraphs. We mention two interesting open questions: (1) Does there exists a deterministic approximation scheme for the connectivity of a hedgegraph? Even a constant factor deterministic approximation for hedgegraph connectivity is open. This leads to the second open question: (2) Is weak partition connectivity of a hedgegraph computable in polynomial time? 

\bibliographystyle{abbrv}
\bibliography{references}

\newpage
\appendix
\section{Submodularity of Hedgegraph Polymatroid}\label{appendix:hedgegraph-polymatroid}
In this section, we prove that the hedgegraph polymatroid function is submodular.

\begin{lemma}
    Let $G=(V,E)$ be a hedgegraph. We define the hedgegraph polymatroid function $f:2^E\rightarrow \Zg$ as 
    \[
        f(A) := |V| - \Comp(V, A)\ \forall\ A\subseteq E,
    \]
    where $\Comp(V, A)$ is the number of connected components in the subhedgegraph $(V,A)$. Then, $f$ is a submodular function.
\end{lemma}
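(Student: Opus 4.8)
The plan is to reduce the submodularity of $f$ to the submodularity of the component-rank function of the \emph{underlying} (multi-)hypergraph, exploiting the fact that each hyperedge of the hedgegraph lies in a unique hedge. Write $H_E := \{h : h \in e \text{ for some } e \in E\}$ for the underlying hyperedge set, and for $A \subseteq E$ put $H_A := \{h : h \in e \text{ for some } e \in A\}$; by the definition of connected components of a hedgegraph, $\Comp(V,A)$ equals the number of connected components of the sub-hypergraph $(V, H_A)$. Define $g : 2^{H_E}\rightarrow \Zg$ by $g(F) := |V| - \Comp(V,F)$, the component-rank of the underlying hypergraph, so that $f(A) = g(H_A)$ for every $A \subseteq E$. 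Since each $h \in H_E$ belongs to exactly one hedge, the map $A \mapsto H_A$ is a lattice homomorphism: $H_{A\cup B} = H_A \cup H_B$ trivially, while $H_{A\cap B} = H_A \cap H_B$ because $h \in H_A \cap H_B$ forces the unique hedge containing $h$ to lie in both $A$ and $B$. Hence it suffices to prove that $g$ is submodular.

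To prove $g$ submodular, I would verify the equivalent diminishing-marginal-returns condition: for all $F \subseteq F' \subseteq H_E$ and $h \in H_E \setminus F'$,
\[
g(F+h) - g(F) \ \ge\ g(F'+h) - g(F').
\]
The key elementary observation is that adding a single hyperedge $h$ to a sub-hypergraph merges exactly the connected components that $h$ intersects into a single component and leaves every other component untouched; hence, writing $c_F(h)$ for the number of connected components of $(V,F)$ that meet $h$, we get $g(F+h) - g(F) = \Comp(V,F) - \Comp(V,F+h) = \max\{c_F(h) - 1,\ 0\}$, where the $\max$ only matters in the degenerate case $h = \emptyset$. Since $F \subseteq F'$, the partition of $V$ into components of $(V,F')$ coarsens the partition into components of $(V,F)$, and every component of $(V,F')$ that meets $h$ contains at least one component of $(V,F)$ that meets $h$; therefore $c_F(h) \ge c_{F'}(h)$, which gives the displayed inequality.

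Combining the two steps, for any $A, B \subseteq E$,
\[
f(A) + f(B) = g(H_A) + g(H_B) \ \ge\ g(H_A \cup H_B) + g(H_A \cap H_B) = g(H_{A\cup B}) + g(H_{A\cap B}) = f(A\cup B) + f(A\cap B),
\]
so $f$ is submodular. (Monotonicity, if wanted as well, is immediate since enlarging the hedge set cannot increase $\Comp(V,\cdot)$, and $f(\emptyset) = |V| - \Comp(V, \emptyset) = 0$.) I do not anticipate any real obstacle here; the only points deserving a line of care are the identity $H_{A\cap B} = H_A \cap H_B$ --- which is exactly where the hedge structure enters, through the distinctness of hyperedges across hedges, and which does \emph{not} use vertex-disjointness within a hedge --- and the empty-hyperedge corner case in the marginal computation. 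One could alternatively argue directly on the partition lattice of $V$ using submodularity of the rank function of the graphic matroid of $K_{|V|}$, but the hypergraph reduction above is shorter and self-contained.
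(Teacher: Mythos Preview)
Your proof is correct and follows essentially the same approach as the paper: both reduce the submodularity of $f$ to that of the component-rank function $g$ of the underlying (multi-)hypergraph via the identity $f(A)=g(H_A)$, using that the map $A\mapsto H_A$ respects unions and intersections. The only minor difference is that the paper cites the submodularity of the hypergraph polymatroid as a known fact, whereas you supply a short self-contained proof via diminishing marginal returns.
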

\begin{proof}
    Let $G_h:=(V,H)$ be the underlying hypergraph of $G$, i.e., $H$ is the multi-set $\{f\subseteq V: f\in e \text{ for some } e\in E\}$. Let $h:2^V\rightarrow \Zg$ be the hypergraph polymatroid of $G_h$ such that
    \[
        h(B) = |V| - \Comp(V,B)\ \forall\ B\subseteq H,
    \]
    where $\Comp(V, B)$ is the number of connected components in the subhypergraph $(V,B)$. It is well-known that $h$ is a submodular function (e.g., see \cite{cualinescu2009disjoint}). We note that for every hedge set $A\subseteq E$, we have
    $$\begin{aligned}
        f(A) = |V| - \Comp(V,A) = |V| - \Comp\left(V,\bigcup_{e\in A}e\right) = h\left(\bigcup_{e\in A}e\right).
    \end{aligned}$$
    Hence, for two arbitrary hedge sets $A_1, A_2 \subseteq E$, we have
    $$\begin{aligned}
        f(A_1)+f(A_2) &= h\left(\bigcup_{e\in A_1}e\right) + h\left(\bigcup_{e\in A_2}e\right) \\
        &\geq h\left(\bigcup_{e\in A_1\cap A_2}e\right) + h\left(\bigcup_{e\in A_1\cup A_2}e\right) \ \ \text{(by the submodularity of $h$)}\\
        &= f(A_1\cap A_2) + f(A_1\cup A_2),
    \end{aligned}$$
    which implies the subodularity of $f$.
\end{proof}
\section{Hypergraphs with $\wpc_G<k^*(f)$}\label{appendix:couterexample}

In this section, we exhibit a hypergraph $G=(V,E)$ whose weak partition connectivity is strictly less than the functional strength of the corresponding hypergraph polymatroid. 

\begin{lemma}
    There exists a hypergraph $G=(V, E)$ with hypergraph polymatroid $f:2^E\rightarrow \Z_{\ge 0}$ such that $\wpc(G)<k^*(f)$. 
\end{lemma}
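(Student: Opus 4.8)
The plan is to construct a small explicit hypergraph where the two quantities can be computed by hand and shown to differ. The natural place to look is a hypergraph that is ``barely'' weak-partition-connected: recall from Lemma~\ref{lemma:hedge-weak-partition-connectivity-polymatroid} that $\wpc_G \le k^*(f)$ always, so we need an example where $k^*(f)$ is strictly larger. The key conceptual point is that $\wpc_G$ is a \emph{floored ratio} minimized over partitions $\mathcal{P}$ of $V$, while $k^*(f)$ is a floored ratio minimized over subsets $A\subseteq E$ of hyperedges; for a partition $\mathcal{P}$, the relevant subset is $A = E\setminus\delta(\mathcal{P})$, but the converse correspondence is lossy because a closed set $A$ need not be of the form $E\setminus\delta(\mathcal{P})$ with the components of $(V,A)$ equal to the parts of $\mathcal{P}$. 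So I would look for a hypergraph with one large hyperedge whose removal (together with a few others) leaves a configuration whose component count is strictly less than the number of ``intended'' parts.

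Concretely, I would try $V = \{1,2,3,4\}$ with $E$ consisting of: the full hyperedge $h_0 = \{1,2,3,4\}$, and then enough edges of size $2$ or $3$ on $\{1,2,3\}$ to make the sub-hypergraph on $\{1,2,3\}$ highly connected, while vertex $4$ is attached only via $h_0$. For instance take $h_0=\{1,2,3,4\}$ and two parallel copies of $\{1,2,3\}$ (or a triangle $\{1,2\},\{2,3\},\{1,3\}$). Then any partition $\mathcal{P}$ that separates vertex $4$ from the rest must put $h_0$ into $\delta(\mathcal{P})$, and since $4$ has no other incident edge, the contribution of $h_0$ is only ``$4$ touches $2$ parts minus $1 = 1$''. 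I would compute $\wpc_G$ by checking the partition $\{\{1\},\{2\},\{3\},\{4\}\}$ and the partition $\{\{1,2,3\},\{4\}\}$ and whichever intermediate partitions are needed, showing the floored minimum is some value $w$. Then I would compute $k^*(f) = \min_A \lfloor \sum_{e}(f(A{+}e)-f(A)) / (f(E)-f(A))\rfloor$ and argue, by checking the closed sets $A$ (equivalently the relevant span-closed subsets), that no subset achieves a ratio that floors down to $w$; the gain comes from the fact that the only cheap way to ``disconnect'' vertex $4$ in the subset picture requires dropping $h_0$, but dropping $h_0$ alone already drops $f$ by only $1$ while every remaining edge on $\{1,2,3\}$ has marginal $0$ over the rest, so the numerator stays small relative to the denominator $f(E)-f(A)=1$.

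I would organize the write-up as: (i) define $G$ explicitly; (ii) compute $f(A)=|V|-\Comp(V,A)$ on the handful of relevant subsets $A$ (there are few closed sets); (iii) evaluate the $k^*(f)$ ratio on each closed set and take the minimum, getting a clean integer; (iv) evaluate the $\wpc_G$ ratio on each coarsening of the partition into singletons and take the minimum, getting a strictly smaller integer; (v) conclude. The main obstacle is purely one of choosing the right tiny example so that the two floored minima land on \emph{different} integers rather than merely differing before flooring — the flooring can easily collapse a genuine gap. This requires tuning the multiplicities (e.g., how many parallel copies of $\{1,2,3\}$, or whether to add a second edge incident to vertex $4$) so that the $\wpc$ side is forced down by the partition isolating $4$ while the $k^*$ side is bounded below by a strictly larger value; I expect a careful enumeration over the (small) lattice of closed sets and over partitions will pin down the right parameters.
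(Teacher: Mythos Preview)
Your concrete construction does not produce a gap. If vertex $4$ is incident only to $h_0=\{1,2,3,4\}$ and all other hyperedges lie inside $\{1,2,3\}$, then the partition $\mathcal P=\{\{1,2,3\},\{4\}\}$ that you want to use for $\wpc_G$ is \emph{exactly} witnessed on the $k^*$ side by the subset $A=E\setminus\{h_0\}$: the components of $(V,A)$ are precisely the parts of $\mathcal P$, so $f(E)-f(A)=|\mathcal P|-1$ and $f(A+e)-f(A)=|\mathcal P|-\Comp(\mathcal P(e))$ for every $e$, giving identical ratios. Carrying this out with, say, two parallel copies of $\{1,2,3\}$ yields $\wpc_G=k^*(f)=1$. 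Your last sentence in the third paragraph is also inverted: ``numerator stays small relative to the denominator'' would make $k^*(f)$ \emph{small}, which is the opposite of what you need.

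The mechanism that actually creates a gap is the one you gesture at but do not exploit: you need a partition $\mathcal P$ that achieves a small $\wpc$ ratio and that \emph{cannot} be realized as the component structure of $(V,A)$ for any $A\subseteq E$. This happens when some part of $\mathcal P$ of size at least $2$ has no hyperedge contained in it, so the only way to merge its vertices is via hyperedges that simultaneously merge in a vertex from another part. The paper's example does exactly this: with $V=\{A,B,C,D\}$, $e_1=\{A,B\}$, $e_2=e_3=\{A,C,D\}$, $e_4=e_5=\{B,C,D\}$, the partition $\{\{A\},\{B\},\{C,D\}\}$ gives $\wpc_G\le\lfloor 5/2\rfloor=2$, but no $A\subseteq E$ has components $\{A\},\{B\},\{C,D\}$ (any hyperedge connecting $C$ to $D$ also drags in $A$ or $B$), and checking the remaining closed sets shows $k^*(f)=3$. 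Your pendant-vertex construction cannot produce this phenomenon, because the part you isolate is a singleton.
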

\begin{proof}
Consider the hypergraph $G=(V,E)$ with $4$ vertices and $5$ hyperedges, where $V:=\{A,B,C,D\}$ and $E:=\{e_i:i\in [5]\}$ with $e_1=\{A,B\}$, $e_2=e_3=\{A,C,D\}$, and $e_4=e_5=\{B,C,D\}$. The hypergraph $G$ is shown in Figure~\ref{figure:counterexample}.

\begin{figure}[h]
    \centering
    \includegraphics[width=0.4\linewidth]{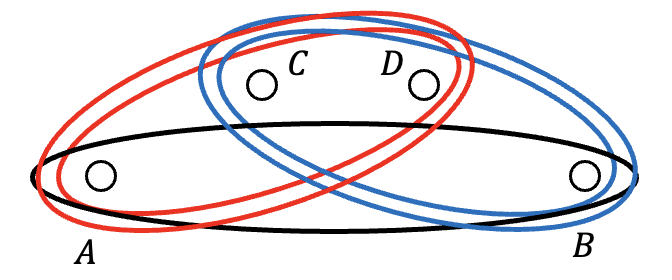}
    \caption{A figure of hypergraph $G$, where $e_1=\{A,B\}$ is the black hyperedge, $e_2$ and $e_3$ are the two red parallel hyperedges, and $e_4$ and $e_5$ are the two blue parallel hyperedges.}
    \label{figure:counterexample}
\end{figure}

We first show an upper bound on the weak partition connectivity of $G$. Consider the partition $\mathcal{P}:=\{\{A\},\{B\},\{C,D\}\}$. Then, $|\mathcal{P}|=3$, $\Comp(\mathcal{P}(e_i))=2$ for every $i\in [5]$. Hence,
$$\wpc_G\leq \left\lfloor\frac{\sum_{e\in E}\left(|\mathcal{P}|-\Comp(\mathcal{P}(e))\right)}{|\mathcal{P}|-1}\right\rfloor=\left\lfloor \frac{5}{2}\right\rfloor=2.$$

We now compute the functional strength of the corresponding hypergraph polymatroid. We will show that $k^*(f)= 3$. 
We recall that $f(S)=|V|-\Comp(V,S)$ for every $S\subseteq E$. Hence, $f(\emptyset)=0$, $f(\{e_1\})=1$, $f(\{e_2\})=f(\{e_3\})=f(\{e_2,e_3\})=2$, $f(\{e_4\})=f(\{e_5\})=f(\{e_4, e_5\})=2$, and all other subsets $S\subseteq E$ have $f(S)=3$. We recall that $$k^*(f)=\min_{A\subseteq E} \left\lfloor \frac{\sum_{e\in E}\left(f(A+e)-f(A)\right)}{f(E)-f(A)}\right\rfloor.$$
Thus, we need to consider all subsets $A\subseteq E$ with $f(A)<f(E)=3$.

For the case of $A=\emptyset$, we have 
$$\left\lfloor \frac{\sum_{e\in E}\left(f(A+e)-f(A)\right)}{f(E)-f(A)}\right\rfloor=\left\lfloor\frac{9}{3}\right\rfloor=3.$$

For the case of $A=\{e_1\}$, we have 
$$\left\lfloor \frac{\sum_{e\in E}\left(f(A+e)-f(A)\right)}{f(E)-f(A)}\right\rfloor=\left\lfloor\frac{8}{2}\right\rfloor=4.$$

For every other subset $A\subseteq E$ with $f(A)=2$, we have 
$$\left\lfloor \frac{\sum_{e\in E}\left(f(A+e)-f(A)\right)}{f(E)-f(A)}\right\rfloor=\left\lfloor\frac{3}{1}\right\rfloor=3.$$
\end{proof}
\section{Sampling Elements Gives a Base}\label{appendix:sampling-base}
In this section, we give a detailed proof of Lemma~\ref{lemma:sampling-property}. We restate it as follows.

\lemmasamplingbase*
\begin{proof}
    For ease of expression, we define $r:=f(\mathcal{N})\geq 2$. If $k^*(f)\leq 10\log r$, then we have $p=1$, which implies that $S=\mathcal{N}$ which is a base of $f$. Henceforth, we may assume that $k^*(f)>10\log r$ and consequently, $p=\frac{10\log r}{k^*(f)}$. Let $\sigma=(e_1, e_2, \ldots, e_{|\mathcal{N}|})$ be a uniformly random permutation of the elements in $\mathcal{N}$. For every $i\in [0,|\mathcal{N}|]$, we define $\mathcal{N}_i:=\{e_j:j\in [i]\}$ and $S_i:=S\cap \mathcal{N}_i$. We note that $S_0=\mathcal{N}_0=\emptyset$ and $f_{S_0}(\mathcal{N})=r$. For a subset $A\subseteq \mathcal{N}$, we define the function $f_A:2^\mathcal{N}\rightarrow \mathbb{Z}$ as $f_A(S):=f(A\cup S)-f(A)$ for every $S\subseteq \mathcal{N}$. We note that $f_A$ is submodular for every $A\subseteq \mathcal{N}$.
    
    Let $i\in [|\mathcal{N}|]$. We consider the distribution of element $e_i$ conditioned on $S_{i-1}$. Since $\sigma$ is a uniformly random permutation, $e_i$ can be an arbitrary element in $\mathcal{N}\setminus S_{i-1}$, of which each has the same probability. That is, for every element $e\in \mathcal{N}$,
    $$\begin{aligned}
        \mathbf{Pr}_{\sigma, S}[e_i=e|S_{i-1}] = 
        \begin{cases}
            0, \ \ & \text{ if }e\in S_{i-1} \\
            \frac{1}{|\mathcal{N}|-|S_{i-1}|} \ \ & \text{ if }e\in \mathcal{N}\setminus S_{i-1}.
        \end{cases}
    \end{aligned}$$
    We note that element $e_i$ is picked with probability at least $p$. If element $e_i$ is picked, then $S_i=S_{i-1}+e_i$. Otherwise, $S_i=S_{i-1}$. Hence, we have
    $$\begin{aligned}
        \mathbb{E}_{\sigma, S}[f_{S_{i-1}}(\mathcal{N})-f_{S_i}(\mathcal{N}) | S_{i-1}] &= \sum_{e\in \mathcal{N}\setminus S_{i-1}} \frac{1}{|\mathcal{N}|-|S_{i-1}|} \cdot \mathbb{E}_{\sigma, S}[f_{S_{i-1}}(\mathcal{N})-f_{S_i}(\mathcal{N}) | S_{i-1} \ \text{and} \ e_i=e] \\
        &\geq \sum_{e\in \mathcal{N}\setminus S_{i-1}} \frac{p}{|\mathcal{N}|-|S_{i-1}|} \cdot f_{S_{i-1}}(e) \\
        &\geq \frac{p}{|\mathcal{N}|}\cdot \sum_{e\in \mathcal{N}\setminus S_{i-1}}f_{S_{i-1}}(e) \\
        &\geq \frac{p}{|\mathcal{N}|}\cdot k^*(f)\cdot f_{S_{i-1}}(\mathcal{N}),
    \end{aligned}$$
    where the last inequality is by the definition of $k^*(f)$. This implies that
    $$\begin{aligned}
        \mathbb{E}_{\sigma, S}[f_{S_{i}}(\mathcal{N})] &= \mathbb{E}_{\sigma, S}[f_{S_{i-1}}(\mathcal{N})-\mathbb{E}_{\sigma, S}[f_{\mathcal{S}_{i-1}}(\mathcal{N})-f_{\mathcal{S}_i}(\mathcal{N})| S_{i-1}]] \\
        &\leq \mathbb{E}_{\sigma, S}\left[\left(1-\frac{p}{|\mathcal{N}|}\cdot k^*(f)\right)\cdot f_{S_{i-1}}(\mathcal{N})\right]\\
        &= \left(1-\frac{p}{|\mathcal{N}|}\cdot k^*(f)\right)\cdot \mathbb{E}_{\sigma, S}[f_{S_{i-1}}(\mathcal{N})].
    \end{aligned}$$

    By setting $i=|\mathcal{N}|$, we have
    $$\begin{aligned}
        \mathbb{E}_{\sigma, S}[f_{S_{|\mathcal{N}|}}(\mathcal{N})] &\leq \left(1-\frac{p}{|\mathcal{N}|}\cdot k^*(f)\right)^{|\mathcal{N}|} \cdot \mathbb{E}_{\sigma, S}[f_{S_0}(\mathcal{N})] \\
        &= \left(1-\frac{p}{|\mathcal{N}|}\cdot k^*(f)\right)^{|\mathcal{N}|} \cdot r \\
        &< \exp(-p\cdot k^*(f))\cdot r = \exp(-10\log r)\cdot r < \frac{1}{r}.
    \end{aligned}$$
    According to Markov's inequality, we have
    $$\begin{aligned}
        \mathbf{Pr}_{\sigma, S}[f_{S_{|\mathcal{N}|}}(\mathcal{N}) \geq 1] \leq \mathbb{E}_{\sigma, S}[f_{S_{|\mathcal{N}|}}(\mathcal{N})] < \frac{1}{r},
    \end{aligned}$$
    which shows that
    $$\begin{aligned}
        \mathbf{Pr}_{\sigma, S}[f(S_{|\cN|})=f(\cN)] = \mathbf{Pr}_{\sigma, S}[f_{S_{|\mathcal{N}|}}(\mathcal{N}) = 0] > 1-\frac{1}{r}.
    \end{aligned}$$
    This implies that the probability of $S=S_{|\cN|}$ being a base is greater than $1-\frac{1}{r}$.
    
\end{proof}

\end{document}